\renewcommand{\epsilon}{\varepsilon}
\renewcommand{\phi}{\varphi}
\newcommand{\N}{\mathbb{N}}
\newcommand{\cS}{\mathcal S}
\newcommand{\cH}{\mathcal H}
\newcommand{\cV}{\mathcal V}
\newcommand{\cF}{\mathcal F}
\newcommand{\cO}{\mathcal O}
\newcommand{\cM}{\mathcal M}
\newcommand{\cT}{\mathcal T}
\newcommand{\wh}{\widehat}
\newcommand{\display}{:}
\newcommand{\End}{{\rm End}} 
\newcommand{\Hom}{{\rm Hom}}
\newcommand{\lollipop}{\mbox{$\circ \kern-0.4em \rightarrow$}}
\newcommand{\dbdownarrow}{\rlap{\raise.25ex\hbox{$\shortdownarrow$}}\raise-.25ex\hbox{$\shortdownarrow$}}%%
\newcommand{\dda}{\rlap{\raise.25ex\hbox{$\shortdownarrow$}}\raise-.25ex\hbox{$\shortdownarrow$}}%%
\newcommand{\dua}{\rlap{\raise-.25ex\hbox{$\shortuparrow$}}\raise.25ex\hbox{$\shortuparrow$}}%%
\newcommand{\dbuparrow}{\rlap{\raise-.25ex\hbox{$\shortuparrow$}}\raise.25ex\hbox{$\shortuparrow$}}%%
\newcommand{\funion}{\mathrel{\makebox[0pt][l]{\hspace{.08em}\raisebox{.4ex}{\rule{.5em}{.1ex}}}\mathord{\cup}}}
\newcommand{\dsup}{\mathop{\bigvee{}^{^{\,\makebox[0pt]{$\scriptstyle\uparrow$}}}}}
\newcommand{\diamondplus}{\mathop{\Diamond\mkern-13.9mu\raise.22ex\hbox{$+$}}}
\newcommand{\diamonddot}{\mathop{\Diamond\mkern-9.5mu\raise.2ex\hbox{$\cdot$}}\,}
\renewcommand{\diamonddot}{\cdot_{_{_{\mbox{}\!\!\!\!S}}}\kern-.1em}
\renewcommand{\diamondplus}{+_{_{_{\mbox{}\!\!\!S}}}\kern-.1em}
\renewcommand{\odot}{\cdot_{_{_{\mbox{}\!\!\!\!P}}}\kern-.2em}
\renewcommand{\boxdot}{\cdot_{_{_{\mbox{}\!\!\!\!H}}}\kern-.2em}
\renewcommand{\boxplus}{+_{_{_{\mbox{}\!\!\!H}}}\kern-.2em}
\newtheorem{proposition}{Proposition}[section]
\newtheorem{corollary}{Corollary}[section]
\newtheorem{lemma}{Lemma}[section]
\newtheorem{remark}{Remark}[section]
\newcommand{\R}{\overline{\mathbb{R}}_{\mbox{\tiny +}}}
\newcommand{\oRp}{\R}
\newcommand{\RP}{{\mathbb{R}}_{\mbox{\tiny +}}}
\newcommand{\Rp}{\RP}
\DeclareMathOperator{\da}{\downarrow\!}
\newtheorem{definition}{Definition}[section]
\newtheorem{example}{Example}[section]
\title{On the equivalence of state transformer semantics and predicate
  transformer semantics\\[1cm] {\large Dedicated to Viktor
    Selivanov\\[-3mm] at the occasion of his 6o$^{th}$ birthday}}
\author{
Klaus Keimel\thanks{Fachbereich Mathematik, Technische Universit\"at
  Darmstadt, 64342 Darmstadt, Germany,
  email: {\tt keimel@mathematik.tu-darmstadt.de}\newline Work
  supported by  Deutsche Forschungsgemeinschaft
  (DFG).}
}
\begin{document}

\maketitle

%%%%%%%%%%%%%%%%%%%%%%%%%%%%%%%%%%%%%%%%%%%%%%%%%%%%%%%%%%%%%%%%%%%%%%%

G. Plotkin and the author \cite{KP} have worked out the equivalence
between state transformer semantics and predicate transformer
semantics in a domain theoretical setting for programs combining
nondeterminism and probability. Works of C. Morgan and co-authors
\cite{MM}, Keimel, Rosenbusch and Streicher \cite{KRS,KRS1},
already go in the same direction using only discrete state spaces. 
In fact, Keimel and Plotkin did not restrict to probabilities or
subprobabilities, 
but worked in an extended setting admitting positive measures that may
even have infinite values. This extended setting offers technical
advantages. It was the intention of the authors to cut down their
results to the subprobabilistic case in a subsequent paper. A paper by
J.~Goubault-Larrecq \cite{GL} already goes in this direction. When
preparing a first version of the follow-up paper, the
author of this paper wanted to clarify for himself the basic ideas. In fact,
the paper \cite{KP} is technically quite involved, and when one
reaches the last section, where the equivalence of predicate and state
transformer semantics is finally put together, one is quite
exhausted and has difficulties to see the leading ideas. Even the
referee of the paper seemed to have given up at that point.       

It is the aim of this paper to begin from the other end. In all the
situations that the author has been dealing with, the state
transformer semantics
had been given by a monad $\cT$ over the category {\sf DCPO} of
directed complete posets (= dcpos) and Scott-continuous functions (=
functions preserving the partial order and suprema of directed
subsets). A state transformer interprets the input-output behavior of
a program by a
Scott-continuous map $t$ from the input domain $X$ to the
'powerdomain' $\cT Y$ over the output domain $Y$. Thus, state
transformers live in the Kleisli category associated with the monad
$\cT$. If there is an equivalent predicate transformer semantics,
predicate transformers have to live in a category (dually) equivalent
to the Kleisli category.

In my experience the
equivalence between state and predicate transformer semantics is based
on a very simple principle derived from the continuation monad. One
starts with a dcpo $R$ of 'observations'. The elements of the function
space (the exponential) $R^X$ are 'observable predicates' over the
dcpo $X$, and maps $s\colon R^Y\to R^X$ are
'predicate transformers'.  
Assigning to every dcpo $X$ the space $R^{R^X}$ of maps $\varphi\colon
R^X\to R$ gives rise to a monad, the 'continuation monad'. The maps
$t\colon X\to R^{R^X}$ are 'state transformers'. It is a simple
observation that there is a natural bijection between state
transformers and predicate transformers (see Section \ref{sec:cont}).

Monads are used in denotational semantics to model computational
effects. In lots of cases they are obtained by using a dcpo $R$ of
observations carrying an additional algebraic structure. This
algebraic structure carries over to the function spaces $R^X$ and
$R^{R^X}$. It leads to
two kinds of monads 'subordinate' to the continuation monad. 
One may assign to each dcpo $X$ firstly the dcpo $\cM_R X\subseteq 
R^{R^X}$ of all
Scott-continuous algebra homomorphism $\varphi\colon R^X\to R$ (see
Section \ref{sec:transformers}) and
secondly the directed complete subalgebra $\cF_R X$ of  $R^{R^X}$
generated by the projections $\wh x = (f\mapsto f(x))\colon R^X\to R,
x\in X$ (see Section \ref{sec:free}).  

The first monad behaves nice with respect to the natural bijection
between state and predicate transformers:
The state transformers $t\colon X\to \cM_R X$ correspond to those
predicate transformers that are algebra homomorphisms $s\colon R^Y\to
R^X$ (see Section \ref{sec:transformers}). But this monad is, in general,
uninteresting for semantics. For semantics one uses the second monad
$\cF_R X$; but it is not clear to me how to characterize the predicate
transformers corresponding to the state transformers $t\colon X\to
\cF_R Y$. The situation becomes nice when both monads agree.\\

{\bf Problem.} Characterize those dcpo algebras $R$ for
which the monads $\cF_R$ and $\cM_R$ agree. \\

I do not have such a characterization. But I give a sufficient
condition for the containment $\cF_R X\subseteq \cM_R X$; this is the
case provided the algebraic structure on $R$ is 'entropic' (see
Section \ref{sec:entropic}). This
concept borrowed from universal algebra (see, e.g., the monograph
\cite{RS}) corresponds to commutativity for monads; but I have not
pursued this link. 

In Section \ref{sec:nondet} we deal with examples for the entropic
situation, powerdomains for nondeterminism (both angelic and demonic)
and for (extended) 
probabilistic choice. Our general approach yields the
containment relation   $\cF_R X\subseteq \cM_R X$ and the equivalence
between state and predicate transformers. The equality $\cF_R
X= \cM_R X$ has to be proved separately in each special
case. In fact, equality does not always hold: often one has to restrict
to continuous dcpos $X$. The proof of equality $\cF_R X= \cM_R X$
often is the really hard work, and for this we refer to the literature.    

The entropic condition does not cover situations combining  
nondeterministic and probabilistic features (see
\cite{KP}). Surprisingly there is a relaxed notion of entropicity (see
Section \ref{sec:laxentropic}) that allows to capture these situations
(see Section \ref{sec:combining}). The relaxation consists in
replacing certain equalities by 
inequalities. \\

{\bf Question.} Is there a concept for monads over {\sf DCPO} that
corresponds to this relaxed notion of entropicity?\\

In most presentations, powerdomains are described by collections of
certain subsets, for example (convex) Scott-closed sets, (convex)
Scott-compact saturated sets, lenses, etc. In this paper the advantage of
functional representations is put in evidence. The functional
representations may seem less intuitive. But a lot of
features become easier to prove and less technical to be handled. In
fact, in \cite{KP}, 
the set-based representations had to be translated into functional
representations in order to prove the equivalence of state and
predicate transformer semantics. 

There is a long tradition for using functional representations in
mathematics as well as in semantics:

-- The notion of a 'distribution' has been formalized by
L. Schwartz as a linear functional on the space of smooth functions
with compact support on $\mathbb R^n$. Here $\mathbb R$ corresponds to
the space of observations and the smooth functions are the 'predicates'.

-- Historically measures have first been introduced as certain set
functions on $\sigma$-algebras and integrals where defined as a
derived concept (see H. Lebesgue \cite{Les}). But alternatively 
the opposite approach has also been 
pursued: In the Daniell-Stone \cite{Dan,Sto} approach one starts with the
abstract notion of an integral, a positive linear 
functionals on a certain function space, and measures are obtained as 
derived notions. Bourbaki \cite{Bou} takes the same approach for measures on
topological spaces. 

-- In statistics and decision theory (see Walley \cite{Wal})
one uses the notion of a 'prevision' in the sense of our
'predicates'. Probabilities and upper/lower probabilities then arise from
functionals with certain properties on sets of previsions.

-- C. Morgan and co-authors (see \cite{MM}) use the notion of an
'expectation' in the sense of our 'predicates' in their
investigations on combining nondeterminism and probability. This
terminology is a bit misleading; 'prevision' or 'random variable' (as
proposed by D. Kozen \cite{Koz}) seems more
appropriate. Indeed in probability theory the term 'expectation'
denotes the mean value of a random variable. 

-- A. Simpson \cite{Sim} stresses the functional approach in
topological domain theory. For the angelic and demonic powerspace
constructions this approach has been carried through by Battenfeld and
Schr\"oder \cite{BS}. 

-- Labelled Markov processes, simulation and bisimulation are treated
in a very appealing way in recent work by Chaput, Danos, Panangaden
and Plotkin \cite{CDPP} using a functional approach and a duality that
reminds the 
equivalence between state and predicate transformers. Previously, these
constructions needed sophisticated tools from measure theory on Polish
and analytic spaces.\\

In this paper we do not present any new particular case. It is our aim
to present a general framework in which one can hope for a canonical
equivalence between state and predicate transformer semantics. I do
not have a proof, but I conjecture that under fairly general
hypotheses the framework presented in this paper is the only one in
which such an equivalence may happen:\\

{\bf Claim.} If a monad $\cT$ over the category {\sf DCPO} allows a
(dually) equivalent predicate transformer semantics, then there is a
dcpo $R$ with some additional relaxed entropic structure with the
following properties:   
$\cT X$ 'is' a substructure of $R^{R^X}$ consisting of all structure
preserving maps. $\cT X$ agrees with the substructure of $R^{R^X}$
generated by the projections $\wh x, x\in X$. The structure preserving
predicate transformers $s\colon R^Y\to R^X$ correspond naturally to the
state transformers $t\colon X\to\cT Y$.\\
   
Monads yield the free objects for the class of their Eilenberg-Moore
algebras. In all the examples that we look at in this paper the
powerdomain monads
are the free algebras for an (in)equational theory which reflects
properties of the choice operators involved. But here we do not elaborate
this topic. It is natural to conjecture that the monad $\cF_R$ yields
the free algebras for the (in)equational theory of the algebra $R$.\\

We use some basic background material from universal algebra, from
category theory and from
domain theory. We refer to Birkhoff's monograph \cite{Bir} for the
background on universal algebra, to Mac Lanes book \cite{ML} for
monads and Kleisli triples and to \cite{Dom} for directed complete
posets (dcpos) and continuous domains.

\section{Continuation monads and predicate transformers}
\label{sec:cont}

We will work in the category {\sf DCPO} of directed complete partially
ordered sets (dcpos) and Scott-continuous functions (maps preserving
the partial order and suprema of directed sets).

The category {\sf DCPO} is Cartesian closed. Finite products, even arbitrary 
products, are Cartesian products with the pointwise order; suprema of
directed sets are formed pointwise. The exponential consists
of all Scott-continuous functions $u\colon 
X\to Y$ with the pointwise defined order; suprema of directed sets of
Scott-continuous functions are
also formed pointwise. We use two notations for the exponential of $X$
and $Y$ in parallel: $$Y^X\ \ \ \  =\ \ \ \ [X\to Y].$$
The category {\sf SET} of sets and functions can be considered as a
full subcategory of {\sf DCPO}; just take the discrete order (equality) on
every set. Products and exponentials in {\sf SET} are the same as in
{\sf DCPO}.

We will use notations from simply typed $\lambda$-calculus. Since we 
are in a Cartesian closed category, all maps defined through well-typed
$\lambda$-calculus expressions are Scott-continuous (see,
for example, \cite[Part I]{LS}). Thus, we never need to prove the
continuity of functions. 
In order to avoid explicit type information, we
will fix notations as follows:\\

\begin{tabular}{l}
$R$ will be a fixed dcpo, called the dcpo of 'observations';\\ 
$X$ and $Y$ denote arbitrary dcpos;\\
$x$, $y$ and $r$ denote elements of $X$, $Y$ and $R$, respectively;\\ 
$u$ denotes Scott-continuous maps $u\colon X\to Y$, that is, elements
of $Y^X=[X\to Y]$;\\
$f$ and $g$ denote Scott-continuous maps $f\colon X\to R$ and $g\colon
Y\to R$, that is, elements of $R^X$ and $R^Y$;\\ 
$\varphi$ and $\psi$ denote Scott-continuous maps $\varphi\colon R^X\to R$ and
$\psi\colon R^Y\to R$.
\end{tabular}{l}\\

For a Scott-continuous map $u\colon X\to Y$ we denote by $R^u$ the map
from $R^Y$ to $R^X$ defined by $R^u=\lambda g.\ g\circ u$, that is,
$R^u(g)=g\circ u$ for all $g\in R^Y$; being even more explicit,
$R^u(g)$ is the map 
from $X$ to $R$ defined by $R^u(g)(x)= g(u(x))$ for all $x\in
X$. Note that we use the bracketing convention that $R^u(g)(x)$ has
to be read as $\big(R^u(g)\big)(x)$, a convention that we will use throughout.
We obtain a contravariant functor
$R^-$ from the category {\sf DCPO} into itself. 

Applying the contravariant functor $R^-$ twice yields a covariant
functor $R^{R^-}=[R^-\to R]$. This is the well-known
\emph{continuation monad}.  
The unit $\delta$ of the monad is given by the \emph{projections} or
\emph{evaluation maps} $\delta_X\colon X\to[R^X\to R]$ defined by  
\begin{eqnarray}\label{equ:delta}
\delta_X &=& \lambda x.\lambda f.\ f(x), 
\mbox{ that is, } \delta_X(x)(f) = f(x)
\end{eqnarray}
for $f\colon X\to R$ and $x\in X$. It will be convenient to introduce
the short notation $\widehat x$ for $\delta_X(x)$; then the defining
identity simply reads $\widehat x(f) = f(x)$. 

For a map $t\colon X\to [R^Y\to R]$ its 
Kleisli lifting $t^\dagger\colon [R^X\to R]\to  [R^Y\to R]$ is given by
 \begin{eqnarray}\label{equ:lifting}
t^\dagger=\lambda\varphi.\ \lambda g.\ \varphi(\lambda x.\
 t(x)(g)), \mbox{ that is, } t^\dagger(\varphi)(g) = \varphi(\lambda
 x.\ t(x)(g)).
\end{eqnarray}

For a Scott-continuous map $u\colon X\to Y$, the map $R^{R^u}\colon
R^{R^X}\to R^{R^Y}$ 
may be considered as a special case of a Kleisli lifting, namely
$R^{R^u} =(\delta_Y\circ u)^\dagger$, that is, $$R^{R^u}(\varphi)(g)
= \varphi(\lambda x.\ (\delta_Y\circ u)(x))(g)) = \varphi(\lambda x.\
g(u(x)))=\varphi(g\circ u).$$

\begin{lemma}\label{lem:delta}
The map $\delta_X\colon X\to [R^X\to R]$ is
Scott-continuous; if $R$ has at least two elements $r<s$, then $\delta_X$ is an
order embedding.
\end{lemma}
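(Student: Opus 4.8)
The plan is to handle the two claims in turn. Scott-continuity of $\delta_X$ is in principle automatic, since $\delta_X = \lambda x.\lambda f.\ f(x)$ is a well-typed $\lambda$-term and we are in a Cartesian closed category; but I would also unwind it directly. For monotonicity, if $x \le x'$ then for every $f \in R^X$ we have $\wh x(f) = f(x) \le f(x') = \wh{x'}(f)$ because $f$ is monotone, and as the order on $[R^X \to R]$ is pointwise this gives $\wh x \le \wh{x'}$. For suprema of directed sets, let $D \subseteq X$ be directed; for every $f$ the chain of equalities $\wh{\bigvee D}(f) = f(\bigvee D) = \bigvee_{x \in D} f(x) = \bigvee_{x \in D} \wh x(f)$ holds, the middle step being exactly Scott-continuity of $f$. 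Since suprema in $[R^X \to R]$ are also formed pointwise, this says $\delta_X(\bigvee D) = \bigvee_{x \in D}\delta_X(x)$.

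For the embedding statement, note that $\delta_X$ is already monotone, so it suffices to show that it reflects the order: $\wh x \le \wh{x'}$ implies $x \le x'$. I would argue by contraposition, assuming $x \not\le x'$ and producing a single Scott-continuous $f \colon X \to R$ with $f(x) \not\le f(x')$; this immediately yields $\wh x(f) \not\le \wh{x'}(f)$, hence $\wh x \not\le \wh{x'}$.

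The separating predicate is where the hypothesis $r < s$ is used, and its construction is the heart of the argument. The down-set ${\downarrow} x' = \{z \in X : z \le x'\}$ is Scott-closed: it is a lower set, and any directed $D \subseteq {\downarrow} x'$ has $\bigvee D \le x'$, so its supremum remains inside. Consequently $U = X \setminus {\downarrow} x'$ is Scott-open, and by assumption $x \in U$ while $x' \notin U$. Define $f$ to take the value $s$ on $U$ and the value $r$ off $U$. I would then check that $f$ is Scott-continuous: it is monotone because $U$ is an upper set (so the value can only rise from $r$ to $s$ as one moves upward), and it preserves directed suprema because Scott-openness of $U$ handles the case $\bigvee D \in U$, while any directed set whose supremum lies outside $U$ is contained in ${\downarrow} x'$ and is therefore constantly $r$. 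Since $r < s$ forbids $s \le r$, we obtain $f(x) = s \not\le r = f(x')$, as required.

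The only genuinely non-formal step is this verification that the two-valued function $f$ is Scott-continuous; everything else is bookkeeping about pointwise orders. Once reflection of the order is in hand, injectivity of $\delta_X$ follows for free: $\wh x = \wh{x'}$ gives both $\wh x \le \wh{x'}$ and $\wh{x'} \le \wh x$, hence $x \le x'$ and $x' \le x$, so $x = x'$. I expect no reliance on any further structure of $R$ beyond the two comparable distinct values $r < s$, which is precisely what the hypothesis supplies.
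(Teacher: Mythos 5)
Your proof is correct and follows essentially the same route as the paper: Scott-continuity via $\lambda$-definability, and order reflection via the two-valued function that is $r$ on ${\downarrow}x'$ and $s$ elsewhere, which is exactly the separating predicate the paper uses. You merely spell out in more detail the (routine but worthwhile) verification that this step function is Scott-continuous.
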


\begin{proof}
As $\delta_X$ is defined by a $\lambda$-expression, it is  
Scott-continuous.
If $x\not\leq x'$, then there is a Scott-continuous function 
$f\colon X\to R$ such that $f(x')=r<s=f(x)$, whence
$\delta_X(x)\not\leq\delta_X(x')$. (Just define $f(y)=r$ if $y\leq
x'$, else $= s$.) It follows that $\delta_X$ is an order embedding.   
\end{proof}

In many situations, objects $X$ occur as 'state spaces'. An 'action', a
'program', acts on states and produces results $\psi\in [R^Y\to R]$ over a
possibly different state space $Y$. Scott-continuous maps $t\colon
X\to [R^Y\to R]$ will be called \emph{state transformers}. The
elements $g\in R^Y$ are interpreted as 
'predicates', 'previsions', etc. Scott-continuous maps $s\colon R^Y\to R^X$ are
called \emph{predicate transformers}. 

\begin{lemma}\label{lem:PQ}
The dcpos of state and
predicate transformers are canonically isomorphic: $[R^Y\to R]^X \cong
[R^Y\to R^X]$. 
The mutually inverse natural bijections $P$ and $Q$ are 
given as follows:
The map $P$ assigns to a state transformers $t\colon X\to [R^Y\to R]$ the
predicate 
transformer $P(t)\colon R^Y\to R^X$ defined by $P(t) =\lambda g.\
\lambda x.\ t(x)(g)$, 
that is, $$P =\lambda t.\ \lambda g.\ \lambda x.\ t(x)(g).$$ 
The other way around, we assign to every
predicate transformer $s\colon R^Y\to R^X$ the state transformer 
$Q(s) = (R^s)\circ \delta_X= (\lambda \varphi.\ \varphi\circ
s)\circ(\lambda x.\ \lambda f.\ f(x)) =\lambda x.\ (\lambda f.\
f(x))\circ s$, whence $Q(s)(x)=(\lambda f.\ f(x))\circ s$. 
Thus $Q(s)(x)(g) = (\lambda f.\ f(x))(s(g)) = s(g)(x)$,
that is, $$Q = \lambda s.\ \lambda x. \lambda g.\  s(g)(x).$$ 
\end{lemma}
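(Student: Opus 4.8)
The plan is to verify directly that $P$ and $Q$ are mutually inverse Scott-continuous maps, since both are defined by $\lambda$-expressions and hence automatically continuous. The essential content is purely the bijection, so I would reduce everything to the two composite identities $Q\circ P = \id$ and $P\circ Q = \id$, computed pointwise via the defining formulas.

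First I would compute $Q(P(t))$ for an arbitrary state transformer $t\colon X\to[R^Y\to R]$. Using $Q=\lambda s.\,\lambda x.\,\lambda g.\,s(g)(x)$ with $s=P(t)$, and $P(t)=\lambda g.\,\lambda x.\,t(x)(g)$, I would unwind the application: $Q(P(t))(x)(g) = P(t)(g)(x) = t(x)(g)$. Since this holds for all $x\in X$ and all $g\in R^Y$, extensionality gives $Q(P(t)) = t$. Symmetrically, for a predicate transformer $s\colon R^Y\to R^X$ I would compute $P(Q(s))(g)(x) = Q(s)(x)(g) = s(g)(x)$, again for all $g$ and $x$, so $P(Q(s)) = s$. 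These two calculations establish that $P$ and $Q$ are mutually inverse bijections between $[R^Y\to R]^X$ and $[R^Y\to R^X]$.

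It remains to record that $P$ and $Q$ are themselves morphisms in {\sf DCPO} realizing an isomorphism of dcpos, not merely a set-theoretic bijection. Since $P=\lambda t.\,\lambda g.\,\lambda x.\,t(x)(g)$ and $Q=\lambda s.\,\lambda x.\,\lambda g.\,s(g)(x)$ are both given by well-typed $\lambda$-expressions, they are Scott-continuous by the Cartesian-closedness of {\sf DCPO} invoked in the preamble; a bijective Scott-continuous map with Scott-continuous inverse is an order isomorphism, so the two dcpos are canonically isomorphic. Finally, for naturality I would check that $P$ and $Q$ commute with the evident reindexing functors in $X$ and $Y$; this amounts to one more routine pointwise computation of the same flavour, which I would state but not belabour.

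I do not expect any genuine obstacle here: the whole statement is essentially the exponential transpose (currying/uncurrying) adjunction $[R^Y\to R]^X \cong [X\times R^Y\to R] \cong [R^Y\to R^X]$ specialized and relabelled, so the only real work is bookkeeping of the $\lambda$-abstractions. If anything, the mild subtlety is keeping the bracketing convention $t(x)(g) = \bigl(t(x)\bigr)(g)$ straight through the nested applications, but the explicit formulas given for $P$ and $Q$ already resolve this, so the verification is immediate.
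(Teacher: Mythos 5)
Your proposal is correct and follows essentially the same route as the paper: the two pointwise computations $Q(P(t))(x)(g)=t(x)(g)$ and $P(Q(s))(g)(x)=s(g)(x)$ are exactly the paper's proof, with continuity of $P$ and $Q$ dispatched, as the paper does globally, by the convention that well-typed $\lambda$-expressions in a Cartesian closed category are Scott-continuous. The extra remarks on naturality and on the currying adjunction are consistent with, though not spelled out in, the paper's argument.
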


\begin{proof}
$P$ and $Q$ are mutually inverse bijections between state transformers
and predicate transformers. Indeed, for all maps
$s\colon R^Y\to R^X$, $g\in R^Y$ and $x\in X$, we have 
$P(Q(s))(g)(x) =Q(s)(x)(g) = s(g)(x)$, that is $P(Q(s)) = s$ for all
$s$. Similarly, for all $t\colon
X\to [R^Y\to R]$, $g\in R^Y$ and all $x\in X$, we have 
$(Q(Pt))(x)(g) = P(t)(g)(x) = t(x)(g)$, that is $Q(P(t)) = t$.
\end{proof}

In this paper we will consider monads that arise 'inside' the 
continuation monad in the following sense:
We assign to every dcpo $X$ a sub-dcpo $\cT X$ of $[R^X\to X]$ in
such a way that the following properties are satisfied:
\[\begin{array}{rcll}
\delta_X(x)&\in& \cT X &\mbox{ for all } x\in X;\\
t^\dagger(\cT X)&\subseteq& \cT Y &\mbox{ for every Scott-continuous }
t\colon X\to[R^Y\to R].
\end{array}\]
Then $R^{R^u}$ maps $\cT X$ into $\cT Y$ for every Scott-continuous
map $u\colon X\to Y$; indeed, $R^{R^u}$ is the Kleisli lifting
$(\delta_Y\circ u)^\dagger$ of $\delta_Y\circ u\colon X\to [R^Y\to
R]$. Thus $R^{R^u}$ induces a Scott-continuous map $\cT u\colon \cT
X\to\cT Y$ in such a way that $\cT$ becomes a functor, and even a
monad with (the corestriction of) $\delta$ as unit and the Kleisli
lifting $t^\dagger|_{\cT X}$ for $t\colon X\to \cT Y$.
  
\begin{definition}\label{def:subordinate}{\rm
We say that $\cT$ is a monad \emph{subordinate} to the continuation
monad if it arises in the way just described.}
\end{definition}

\section{Adding algebraic structure}\label{sec:algebra}

We recall a few concepts from universal algebra.
Every algebra has a well defined signature which consists of
operations symbols of a prescribed arity which will be supposed to be
finite in this paper.

\begin{definition}\label{def:signature}{\rm
A \emph{signature} $\Omega$ is the disjoint union (sum) of a
sequence of sets $\Omega_n, n\in \N$. The members $\omega\in
\Omega_n$ are called \emph{operation symbols of arity} $n$.}
\end{definition}

In our examples there will be no operation symbols of arity $n\geq
3$. Thus we can
restrict our attention to $\Omega_0,\Omega_1,\Omega_2$, that is, to
nullary, unary and binary operation symbols, and in many
cases there will be only finitely many operation symbols altogether. The
operation symbols of arity $0$ are also called constants.

\begin{definition}\label{def:algebra}{\rm
An \emph{algebra of signature} $\Omega$ consists of a set $A$ together with
operations $$\omega^A\colon A^n\to A,$$ one for each $\omega\in \Omega$
of arity $n$.
A map $u\colon A\to B$ from one algebra $A$ of signature
$\Omega$ to another one, $B$, is a \emph{homomorphism}, if   
$$u(\omega^A(a_1,\dots,a_n)= \omega^B(u(a_1),\dots,u(a_n))$$ for every
operation $\omega\in\Omega$ of arity $n$ and all
$a_1\dots,a_n\in A$.}
\end{definition}

We now replace the category of sets by the category {\sf DCPO} of
dcpos and Scott-continuous functions. 
We first adapt the notion of a signature $\Omega$ to the base category
{\sf DCPO}: 

\begin{definition}\label{def:d-signature}{\rm
A \emph{d-signature} $\Omega$ is the disjoint union (sum) of a sequence
of dcpos $\Omega_n$, $n\in \N$. 
}
\end{definition}

In all our examples, $\Omega_n$ will be empty for $n\geq
3$; $\Omega_0, \Omega_1, \Omega_2$ will consist of
finitely many operation symbols in most cases (trivially ordered),
and then a d-signature will be the same as a signature. But we will
consider some  
cases in which $\Omega_1$ will be a proper dcpo. By replacing the
dcpos $\Omega_n$ by their underlying sets $|\Omega_n|$ one retrieves a
signature as above.

\begin{definition}\label{def:d-algebra}{\rm
A \emph{directed complete partially ordered algebra} (a
\emph{d-algebra}, for short) of d-signature $\Omega$ is an algebra $A$
of signature $|\Omega|$ endowed with a structure of a dcpo in such a
way that the maps $$(\omega,(a_1,\dots,a_n))\mapsto
\omega^A(a_1,\dots,a_n) \colon  \Omega_n\times A^n\to A$$
are Scott-continuous for all $n$.  
 A Scott-continuous algebra homomorphism  between two
d-algebras of the same d-signature is shortly called \emph{d-homomorphism}.}
\end{definition}
 
Note that $\omega^A(a_1,\dots,a_n)$ depends continuously not only on
the $a_i$ but also on $\omega$ (which is a vacuous requirement, if
$\Omega_n$ consists of a single operation symbol or of finitely many
mutually incomparable operation symbols).

We fix a d-signature $\Omega$ for the rest of this paper and all
d-algebras are understood to be of this signature. 
We also fix a d-algebra $R$ of d-signature $\Omega$. 

For every dcpo $X$, the function space $R^X$ is also a d-algebra.  
For an  operation $\omega\in\Omega$ of arity $n$ the natural extension
of $\omega^R$ to a Scott-continuous operation on the 
function space $R^X$ is defined pointwise: For all $f_1,\dots,f_n\in
R^X$\display
\begin{eqnarray}
\omega^{R^X}(f_1,\dots,f_n)(x)=\omega^R(f_1(x),\dots,f_n(x))
\mbox{ for all } x\in X.\label{pointwiseoperation}
\end{eqnarray}

In the future, we will often omit the superscripts in $\omega^R,
\omega^{R^X}, ... $ and write simply $\omega$. This
simplification does not give rise to misunderstandings.

For every Scott-continuous map $u\colon X\to Y$, the induced
Scott-continuous map $R^u\colon R^Y\to R^X$ is a d-homomorphism: 
Indeed, $R^u(\omega(g_1,\dots,g_n))(x) =
\omega(g_1,\dots,g_n)(u(x))= \omega(g_1(u(x)),\dots,g_n(u(x)))= 
\omega(R^u(g_1)(x),\dots,R^u(g_n)(x)) =
\omega(R^u(g_1),\dots,R^u(g_n))(x)$ for all $x\in X$. Thus, we may
view $R^-$ to be contravariant functor from the category {\sf DCPO} to
the category of d-algebras and d-algebra homomorphisms.

In the same way, the operations $\omega$ can be extended to
$R^{R^X}=[R^X\to R]$ so that the latter becomes a d-algebra, too, and the
maps $R^{R^u}$ are d-algebra homomorphisms. Since it will be used
frequently, let us repeat the definition of the operations $\omega$ of
arity $n$ on $R^{R^X}$:  For all $\varphi_1,\dots,\varphi_n\in
R^X$: 
\begin{eqnarray}
\omega(\varphi_1,\dots,\varphi_n)(f)=\omega(\varphi_1(f),\dots,\varphi_n(f))
\mbox{ for all } f\in R^X.\label{pointwiseoperation1}
\end{eqnarray}

\begin{lemma}\label{lem:kleisli}
{\rm (a)} The projections $\delta_X(x)\colon R^X\to R$ are
d-homomorphisms for every $x\in X$.\\ 
For every state
transformer $t\colon X\to [R^Y\to R]$, the Kleisli lifting 
$t^\dagger\colon [R^X\to R]\to [R^Y\to R]$ 

{\rm (b)} is a d-homomorphism and 

{\rm (c)} maps d-homomorphisms $\varphi\colon R^X\to R$ to
d-homomorphisms $t^\dagger(\varphi)\colon R^Y\to R$.
\end{lemma}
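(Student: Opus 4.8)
The plan is to reduce all three claims to the single fact that every operation symbol $\omega\in\Omega_n$ acts \emph{pointwise} on each of the function spaces $R^X$, $R^{R^X}$ and $R^{R^Y}$, as recorded in (\ref{pointwiseoperation}) and (\ref{pointwiseoperation1}). Scott-continuity is never an issue here: the maps $\delta_X(x)$, $t^\dagger$ and $t^\dagger(\varphi)$ are all given by $\lambda$-expressions and hence are automatically continuous. Thus the whole content of the lemma is preservation of the operations of $\Omega$, and in each case this is an unwinding of the definitions (\ref{equ:delta}) and (\ref{equ:lifting}).

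For (a) I would fix $x\in X$, an operation $\omega\in\Omega_n$ and $f_1,\dots,f_n\in R^X$ and compute, using $\widehat x(f)=f(x)$ together with (\ref{pointwiseoperation}),
\[
\widehat x\big(\omega(f_1,\dots,f_n)\big)=\omega(f_1,\dots,f_n)(x)=\omega\big(f_1(x),\dots,f_n(x)\big)=\omega\big(\widehat x(f_1),\dots,\widehat x(f_n)\big).
\]
For (b) I would unfold $t^\dagger$ by (\ref{equ:lifting}) and apply (\ref{pointwiseoperation1}) first on $R^{R^X}$ and then on $R^{R^Y}$: for $\varphi_1,\dots,\varphi_n\in R^{R^X}$ and $g\in R^Y$,
\[
t^\dagger\big(\omega(\varphi_1,\dots,\varphi_n)\big)(g)=\omega(\varphi_1,\dots,\varphi_n)\big(\lambda x.\ t(x)(g)\big)=\omega\big(\varphi_1(\lambda x.\ t(x)(g)),\dots,\varphi_n(\lambda x.\ t(x)(g))\big)=\omega\big(t^\dagger(\varphi_1),\dots,t^\dagger(\varphi_n)\big)(g).
\]
Both of these chains are purely formal and hold for \emph{every} state transformer $t$.

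Assertion (c) is where the real care is needed, and I expect it to be the main obstacle. The clean way to organise it is to observe, straight from the definition of the bijection $P$ in Lemma \ref{lem:PQ}, that $t^\dagger(\varphi)=\varphi\circ P(t)$, where $P(t)=\lambda g.\ \lambda x.\ t(x)(g)\colon R^Y\to R^X$ is the predicate transformer associated with $t$. Hence $t^\dagger(\varphi)$ is a composite, and it is a d-homomorphism as soon as both $\varphi$ and $P(t)$ are. A pointwise computation then shows that $P(t)$ is a d-homomorphism exactly when $t(x)\big(\omega(g_1,\dots,g_n)\big)=\omega\big(t(x)(g_1),\dots,t(x)(g_n)\big)$ for all $x$, i.e. exactly when every $t(x)$ is itself a d-homomorphism. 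The point one must not overlook is therefore that (c) does not hold for an arbitrary state transformer: it holds precisely for those valued in homomorphisms, i.e. for $t\colon X\to\cM_R Y$, which is the only case in which it is used, namely to present $\cM_R$ as a monad subordinate to the continuation monad (Definition \ref{def:subordinate}). Under that hypothesis the composition argument closes (c) in one line; that the hypothesis cannot be dropped is shown by $R=\oRp$ with the single binary operation $+$, with $X=Y$ a one-point dcpo and $t$ the squaring map $\lambda r.\ r^2\colon R\to R$, for which $\varphi=\id$ is a homomorphism while $t^\dagger(\id)=\lambda r.\ r^2$ fails to be additive.
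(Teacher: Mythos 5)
Your parts (a) and (b) are exactly the paper's computations, so nothing to compare there. For (c) you take a genuinely different route: you factor $t^\dagger(\varphi)=\varphi\circ P(t)$ and reduce the claim to ``a composite of d-homomorphisms is a d-homomorphism'' plus the equivalence ``$P(t)$ is a d-homomorphism iff every $t(x)$ is'' (which is the content of Proposition \ref{prop:P-Q}); the paper instead runs a direct five-step chain on $t^\dagger(\varphi)(\omega(g_1,\dots,g_n))$, using in succession the definition of $t^\dagger$, the homomorphism property of $t(x)$, pointwiseness of $\omega$, and the homomorphism property of $\varphi$. The two arguments are the same computation organised differently; yours makes the structural reason visible and ties (c) to the $P$/$Q$ correspondence, while the paper's is self-contained within the lemma. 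More importantly, your caveat is correct and well taken: as literally stated, (c) quantifies over \emph{every} state transformer $t\colon X\to[R^Y\to R]$, yet the paper's own proof of (c) explicitly invokes the step ``since $t(x)$ is a homomorphism'', i.e.\ it silently uses exactly the hypothesis you isolate. Your counterexample ($R=(\oRp,+)$, $X=Y$ a singleton, $t(\ast)=\lambda r.\ r^2$, $\varphi=\widehat{\ast}=\mathrm{id}$, giving the non-additive $t^\dagger(\varphi)=\lambda r.\ r^2$) is valid and shows the hypothesis cannot be dropped. This does not affect the paper downstream, since (c) is only ever applied (in Proposition \ref{prop:monad}) to state transformers valued in $[R^Y\lollipop R]$, but the lemma's statement should carry the restriction you supply.
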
 

\begin{proof}
(a)  Let $x\in X$. For $\omega\in\Omega$
of arity $n$ and $f_1,\dots,f_n\in R^X$ we have
$\delta_X(x)(\omega(f_1,\dots,f_n)) = \omega(f_1,\dots,f_n)(x) =
\omega(f_1(x),\dots,f_n(x)) =
\omega(\delta_X(x)(f_1),\dots,\delta_X(x)(f_n))$.

(b) We have to check
that, for every $\omega\in\Omega_n$ and all
$\varphi_1,\dots,\varphi_n$, we have
$t^\dagger(\omega(\varphi_1,\dots,\varphi_n)) =
\omega(t^\dagger(\varphi_1),\dots,t^\dagger(\varphi_n))$. 
For every $g\in R^Y$ we have indeed:  
\[\begin{array}{rcll}
t^\dagger\big(\omega(\varphi_1,\dots,\varphi_n)\big)(g) &=&
\omega(\varphi_1,\dots.\varphi_n)(\lambda x.\ t(x)(g))&\mbox{by the
  definition (\ref{equ:lifting}) of $t^\dagger$} \\ 
 &=& \omega\big(\varphi_1 (\lambda x.\ t(x)(g)),
 \dots,\varphi_n(\lambda x.\ t(x)(g))\big)& \mbox{since $\omega$ is
   defined pointwise (\ref{pointwiseoperation1})}\\ &=&
 \omega\big(t^\dagger(\varphi_1)(g),\dots,t^\dagger(\varphi_n)(g)\big)&\mbox{by
   the definition (\ref{equ:lifting}) of $t^\dagger$} \\&=&
 \omega\big(t^\dagger(\varphi_1),\dots,t^\dagger(\varphi_n)\big)(g)&\mbox{since
   $\omega$ is defined pointwise (\ref{pointwiseoperation1}).} 
\end{array}\]

(c)  Let $\varphi\colon R^X\to R$ be a d-homomorphism. For arbitrary
$\omega\in\Omega_n$  
and arbitrary $g_1,\dots,g_n\in R^Y$, we have:
\[\begin{array}{rcll}
t^\dagger(\varphi)(\omega(g_1,\dots,g_n))&=&\varphi(\lambda x.\
t(x)(\omega(g_1,\dots,g_n)))&\mbox{by the definition of }t^\dagger \\
&=&\varphi(\lambda x.\ \omega(t(x)(g_1),\dots,t(x)(g_n)))& \mbox{since
  $t(x)$ is a homomorphism}\\ 
&=&\varphi(\omega(\lambda x.\ t(x)(g_1),\dots,\lambda x.\
t(x)(g_n)))&\mbox{since $\omega$ is defined pointwise}\\
&=&\omega(\varphi(\lambda x.\ t(x)(g_1)),\dots,\varphi(\lambda x.\
t(x)(g_n)))&\mbox{since $\varphi$ is a homomorphism}\\
&=&
\omega(t^\dagger(\varphi(g_1)),\dots,t^\dagger(\varphi(g_n)))&\mbox{by
  the definition of } t^\dagger.
\end{array}\]    
\end{proof}

\section{Monad I: Homomorphism monads}\label{sec:transformers}   

We continue with a fixed d-algebra $R$ of d-signature $\Omega$. 

A subset $C$ of a dcpo $X$ is called a \emph{sub-dcpo} if the supremum $\dsup_i
x_i$ of any directed family $(x_i)_i$ of elements in $C$ stays in
$C$.
For two d-algebras $A$ and $B$ of the same d-signature, we denote by
$$[A\lollipop B]$$ the set of all d-homomorphisms $u\colon A\to
B$. Since the pointwise supremum of a directed family of
d-homomorphisms is again a d-homomorphism, we have:

\begin{lemma}\label{lem:d-hom}
For any two d-algebras $A$ and $B$ of the same d-signature, the
d-homomorphisms $u\colon A\to B$ form a sub-dcpo 
$[A\lollipop B]$ of the dcpo $[A\to B]$ of all Scott-continuous maps from
$A$ to $B$. 
\end{lemma}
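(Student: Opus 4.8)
The plan is to verify directly that $[A\lollipop B]$ is closed under the directed suprema formed in $[A\to B]$. Since $[A\to B]$ is already a dcpo in which suprema are computed pointwise, it suffices to take an arbitrary directed family $(u_i)_{i\in I}$ of d-homomorphisms, form its pointwise supremum $u=\dsup_i u_i$ in $[A\to B]$ (so that $u(a)=\dsup_i u_i(a)$ for every $a\in A$), and check that $u$ is again a homomorphism. Scott-continuity of $u$ comes for free from its membership in $[A\to B]$, so the only thing left to establish is compatibility with the operations.

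First I would fix an operation symbol $\omega\in\Omega_n$ together with elements $a_1,\dots,a_n\in A$, and compute $u(\omega^A(a_1,\dots,a_n))$. Using that suprema in $[A\to B]$ are pointwise and that each $u_i$ is a homomorphism, this equals $\dsup_i u_i(\omega^A(a_1,\dots,a_n)) = \dsup_i \omega^B(u_i(a_1),\dots,u_i(a_n))$. The identity I must reach is $\omega^B(u(a_1),\dots,u(a_n)) = \omega^B(\dsup_i u_i(a_1),\dots,\dsup_i u_i(a_n))$.

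The heart of the argument, and the step I expect to carry the real content, is the interchange of $\omega^B$ with the directed supremum. Two observations make this work. First, because $(u_i)_i$ is directed in the pointwise order, the family of tuples $(u_i(a_1),\dots,u_i(a_n))_i$ is directed in the product dcpo $B^n$, and its supremum is computed coordinatewise, hence equals $(\dsup_i u_i(a_1),\dots,\dsup_i u_i(a_n))$. Second, by the very definition of a d-algebra the operation $\omega^B$ is Scott-continuous (indeed jointly continuous on $\Omega_n\times B^n$, so in particular continuous in its $B^n$-arguments for a fixed $\omega$), so it preserves this directed supremum. Combining the two gives $\dsup_i \omega^B(u_i(a_1),\dots,u_i(a_n)) = \omega^B(\dsup_i u_i(a_1),\dots,\dsup_i u_i(a_n))$, which is exactly the required identity. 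As $\omega$, $n$, and the $a_j$ were arbitrary, $u$ is a d-homomorphism, and therefore $[A\lollipop B]$ is a sub-dcpo of $[A\to B]$.
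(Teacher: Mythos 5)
Your argument is correct and is precisely the verification the paper leaves implicit: the lemma is justified there by the one-line remark that the pointwise supremum of a directed family of d-homomorphisms is again a d-homomorphism, and your proof simply supplies the details (directedness of the tuple family in $B^n$, coordinatewise suprema, and Scott-continuity of $\omega^B$ from the d-algebra axiom). Same approach, fully spelled out.
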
  

In particular $[R^X\lollipop R]$, the set of all d-homomorphisms
$\varphi\colon R^X\to R$, is a sub-dcpo of 
$[R^X\to R]$ by Lemma \ref{lem:d-hom}. 
 By Lemma \ref{lem:kleisli}(a), $\delta_X(x)\in[R^X\lollipop R]$ for
 all $x\in X$. By  Lemma \ref{lem:kleisli}(c), for every state
transformer $t\colon X\to[R^Y\to X]$, its Kleisli lifting maps
$[R^X\lollipop R]$ into  $[R^Y\lollipop R]$.   
We are in the situation described in Definition
\ref{def:subordinate}): 

\begin{proposition}\label{prop:monad}
For a d-algebra $R$, the assignment $X\mapsto [R^X\lollipop R]$
yields a monad subordinate to the continuation monad. The unit is (the
corestriction of $\delta$ and the Kleisli lifting of a
Scott-continuous map $t\colon X\to [R^Y\lollipop R]$ is (the
restriction-corestriction) $t^\dagger \colon[R^X\lollipop
R]\to[R^Y\lollipop R]$
\end{proposition}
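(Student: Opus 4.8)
The plan is to verify that the assignment $X\mapsto [R^X\lollipop R]$ satisfies the two defining conditions of Definition \ref{def:subordinate}, after which the monad structure is automatic from the discussion preceding that definition. All the real content has already been packaged into Lemma \ref{lem:kleisli} and Lemma \ref{lem:d-hom}, so the proof should be short and consist mainly of assembling these pieces.

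First I would set $\cT X = [R^X\lollipop R]$. By Lemma \ref{lem:d-hom} this is a sub-dcpo of $[R^X\to R]$, so each $\cT X$ is a legitimate candidate sub-dcpo as required by the setup. Next I would check the first condition, namely $\delta_X(x)\in \cT X$ for all $x\in X$: this is exactly the statement that the projections $\delta_X(x)\colon R^X\to R$ are d-homomorphisms, which is Lemma \ref{lem:kleisli}(a). Then I would check the second condition, that for every Scott-continuous state transformer $t\colon X\to [R^Y\to R]$ the Kleisli lifting satisfies $t^\dagger(\cT X)\subseteq \cT Y$; but $\cT X$ consists precisely of the d-homomorphisms $\varphi\colon R^X\to R$, and Lemma \ref{lem:kleisli}(c) says exactly that $t^\dagger$ sends such $\varphi$ to a d-homomorphism $t^\dagger(\varphi)\colon R^Y\to R$, i.e.\ into $\cT Y$.

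With both conditions verified, I would invoke the paragraph immediately before Definition \ref{def:subordinate}: it shows that any family of sub-dcpos $\cT X$ closed under $\delta_X$ and under Kleisli lifting automatically yields a monad subordinate to the continuation monad, with $\delta$ (corestricted) as unit and $t^\dagger$ (restricted and corestricted) as Kleisli lifting. In particular $R^{R^u}=(\delta_Y\circ u)^\dagger$ restricts to a well-defined map $\cT u\colon \cT X\to \cT Y$, since $\delta_Y\circ u$ is a state transformer of the appropriate type. This gives all the claimed structure directly.

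There is essentially no obstacle here, since the theorem is a matter of matching the hypotheses of a general construction against lemmas already established; the only point demanding a word of care is the well-typedness of the Kleisli lifting, namely that restricting $t$ to land in $\cT Y=[R^Y\lollipop R]\subseteq [R^Y\to R]$ still makes $t$ a legitimate state transformer to which Lemma \ref{lem:kleisli}(c) applies. Since $[R^Y\lollipop R]$ is a sub-dcpo of $[R^Y\to R]$, a Scott-continuous map $t\colon X\to [R^Y\lollipop R]$ is in particular a Scott-continuous state transformer $t\colon X\to [R^Y\to R]$, so the lemma applies verbatim and the corestriction of $t^\dagger$ to $\cT Y$ is justified.
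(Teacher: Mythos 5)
Your proposal is correct and follows essentially the same route as the paper: the paper's own justification is precisely the paragraph preceding the proposition, which cites Lemma \ref{lem:d-hom} for the sub-dcpo property and Lemma \ref{lem:kleisli}(a) and (c) for closure under the unit and the Kleisli lifting, and then appeals to the construction surrounding Definition \ref{def:subordinate}. Your closing remark on the well-typedness of the lifting (that a map into $[R^Y\lollipop R]$ is in particular a state transformer into $[R^Y\to R]$) is a small point the paper leaves implicit, but otherwise the two arguments coincide.
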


The 'homomorphism monad' $([R^-\lollipop R],\delta,{}^\dagger)$ exhibited
in the previous proposition has the remarkable property that it
behaves well with respect to the one-to-one correspondence between 
state and predicate transformers in the sense that there is a simple
characterization of the predicate transformers corresponding to the
state transformers $t\colon X\to[R^X\lollipop R]$\display

\begin{proposition}\label{prop:P-Q}
Let $R$ be d-algebra.
The maps $P$ and $Q$ (see Lemma {\rm \ref{lem:PQ}}) induce a one-to-one
correspondence between predicate transformers  
$s\colon R^Y\to R^X$ that are d-homomorphisms and those state
transformers $t\colon X\to [R^Y\to R]$ for which each $t(x), x\in X,$ is a
d-homomorphism: $$[R^Y\lollipop
R]^X\cong [R^Y\lollipop R^X]$$ 
\end{proposition}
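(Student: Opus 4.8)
The plan is to verify that the bijection from Lemma \ref{lem:PQ} restricts correctly on both sides. Since $P$ and $Q$ are already established to be mutually inverse bijections between \emph{all} state transformers $t\colon X\to[R^Y\to R]$ and \emph{all} predicate transformers $s\colon R^Y\to R^X$, the only thing left to prove is that this bijection carries the subset $\{s : s \text{ is a d-homomorphism}\}$ onto the subset $\{t : t(x) \text{ is a d-homomorphism for every } x\in X\}$. Concretely, I would show two implications: first, that if $s\colon R^Y\to R^X$ is a d-homomorphism, then $Q(s)(x)$ is a d-homomorphism for every $x$; and second, that if each $t(x)$ is a d-homomorphism, then $P(t)$ is a d-homomorphism. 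Because $P$ and $Q$ are inverse to each other, these two implications together give the claimed restricted bijection $[R^Y\lollipop R]^X\cong[R^Y\lollipop R^X]$.

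For the first implication, recall from Lemma \ref{lem:PQ} that $Q(s)(x)(g) = s(g)(x)$. So to check that $Q(s)(x)$ preserves an operation $\omega\in\Omega_n$, I would compute, for $g_1,\dots,g_n\in R^Y$, that $Q(s)(x)(\omega(g_1,\dots,g_n)) = s(\omega(g_1,\dots,g_n))(x)$; using that $s$ is a d-homomorphism this equals $\omega(s(g_1),\dots,s(g_n))(x)$, and since the operation $\omega$ on $R^X$ is defined pointwise (see (\ref{pointwiseoperation})), this is $\omega\big(s(g_1)(x),\dots,s(g_n)(x)\big) = \omega\big(Q(s)(x)(g_1),\dots,Q(s)(x)(g_n)\big)$, as required. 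For the second implication I would run essentially the same calculation backwards using $P(t)(g)(x) = t(x)(g)$: for each fixed $x$, preservation of $\omega$ by $P(t)$ amounts to $P(t)(\omega(g_1,\dots,g_n))(x) = t(x)(\omega(g_1,\dots,g_n))$, which by the assumption that $t(x)$ is a d-homomorphism equals $\omega(t(x)(g_1),\dots,t(x)(g_n))$, and this matches $\omega(P(t)(g_1),\dots,P(t)(g_n))(x)$ by the pointwise definition of $\omega$ on $R^X$. Since this holds at every $x$, the map $P(t)$ is a d-homomorphism.

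Honestly, I do not expect any genuine obstacle here: both directions are a matter of unwinding the definitions of $P$, $Q$, and the pointwise operations, and the two computations are mirror images of each other. The only point requiring a little care is keeping the bracketing and the order of evaluation straight — in particular remembering the convention that $s(g)(x)$ means $\big(s(g)\big)(x)$ and that the operation $\omega$ appearing on $R^X$ acts pointwise in the $X$-variable. One could also streamline the argument by noting that the two implications are not logically independent: once one direction is verified together with the fact that $P$ and $Q$ are inverse bijections, the other direction is forced, because $P$ maps the homomorphic state transformers into predicate transformers and $Q$ maps back, so the images must coincide set-theoretically. I would nonetheless present both computations explicitly, since they are short and make the correspondence transparent.
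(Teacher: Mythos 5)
Your proposal is correct and follows essentially the same route as the paper: both directions are the same pointwise unwinding of $P(t)(g)(x)=t(x)(g)$ and $Q(s)(x)(g)=s(g)(x)$ against the pointwise definition of $\omega$ on $R^X$, combined with the fact that $P$ and $Q$ are mutually inverse from Lemma \ref{lem:PQ}. No gaps.
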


\begin{proof}
Let $t(x)$ be a d-algebra homomorphism. For all $\omega\in\Omega$ of
arity $n$ and for all $g_1,\dots,g_n\in R^Y$ we have\display  
\[\begin{array}{rcll}
P(t)(\omega(g_1,\dots,g_n))(x) &=& t(x)(\omega(g_1,\dots,g_n)) &
\mbox{by Lemma \ref{lem:PQ}}\\ 
 &=& \omega(t(x)(g_1),\dots,t(x)(g_n))) & \mbox{since $t(x)$ is a
   homomorphism}\\
& =&\omega(P(t)(g_1)(x),\dots,P(t)(g_n)(x)) &\mbox{again by Lemma
  \ref{lem:PQ}}\\ 
&=&\omega\big(P(t)(g_1),\dots,P(t)(g_n)\big)(x)&\mbox{since $\omega$ is 
defined pointwise on $R^X$}
\end{array}\] 
If this holds for all $x\in X$, then
$P(t)\big(\omega(g_1,\dots,g_n)\big)
=\omega\big(P(t)(g_1),\dots,P(t)(g_n)\big)$ which 
shows that $P(t)$ is a homomorphism. If conversely $s\colon
R^Y\to R^X$ is a d-homomorphism, then
$Q(s)(x)(\omega(g_1,\dots,g_n))= s(\omega(g_1,\dots,g_n)(x)) =
\omega(s(g_1),\dots,s(g_n))(x) = \omega(s(g_1)(x),\dots,s(g_n)(x)) =
\omega(Q(s)(x)(g_1),\dots,Q(s)(x)(g_n))$ which shows that $Q(s)(x)$ is
a homomorphism. 
\end{proof}

For later use let us record the following: Let us replace the dcpo $X$
by a d-algebra $A$. The map $\delta_A\colon A\to [R^A\to R]$ is by no
means a homomorphism. But after replacing $R^A$ by $[A\lollipop R$ the
situation changes: For $a\in A$, the map $\delta_A(a)$ from $R^A$ to $R$
is restricted to a map from $[A\lollipop R]$ to $R$; we still use the
same notation  $\delta_A(a)$ for the restricted map.  

\begin{lemma}\label{lem:unit}
For a d-algebra $A$, the unit $\delta_A\colon A\to R^{[A\lollipop R]}$
is a d-homomorphism. 
\end{lemma}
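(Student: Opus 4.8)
The plan is to verify the homomorphism identity directly, by evaluating both sides at an arbitrary test function. Recall first that $R^{[A\lollipop R]}$ is a function space into the d-algebra $R$, so by the general construction of Section~\ref{sec:algebra} (equation~(\ref{pointwiseoperation})) it carries the pointwise operations and is itself a d-algebra. Scott-continuity of $\delta_A$ is automatic, since $\delta_A = \lambda a.\,\lambda f.\, f(a)$ is given by a $\lambda$-expression, exactly as in Lemma~\ref{lem:delta}; by the same token each $\delta_A(a)$ is a legitimate element of $R^{[A\lollipop R]}$, that is, a Scott-continuous map $[A\lollipop R]\to R$. So the only thing left to check is that $\delta_A$ respects each operation.

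First I would fix an operation symbol $\omega\in\Omega$ of arity $n$ and elements $a_1,\dots,a_n\in A$, and I would test the desired equality $\delta_A(\omega(a_1,\dots,a_n)) = \omega(\delta_A(a_1),\dots,\delta_A(a_n))$ against an arbitrary $f\in[A\lollipop R]$. On the left-hand side, unfolding the definition of $\delta_A$ gives $\delta_A(\omega(a_1,\dots,a_n))(f) = f(\omega(a_1,\dots,a_n))$. On the right-hand side, since the operation on $R^{[A\lollipop R]}$ is pointwise, I get $\omega(\delta_A(a_1),\dots,\delta_A(a_n))(f) = \omega(\delta_A(a_1)(f),\dots,\delta_A(a_n)(f)) = \omega(f(a_1),\dots,f(a_n))$.

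The decisive step --- and the whole reason the statement is true here but fails for the unrestricted unit $\delta_A\colon A\to R^A$ noted just before the lemma --- is that the test functions $f$ range only over $[A\lollipop R]$, i.e.\ over d-homomorphisms. Precisely because $f$ is a homomorphism, $f(\omega(a_1,\dots,a_n)) = \omega(f(a_1),\dots,f(a_n))$, so the two expressions above coincide. Since $f\in[A\lollipop R]$ was arbitrary, the maps $\delta_A(\omega(a_1,\dots,a_n))$ and $\omega(\delta_A(a_1),\dots,\delta_A(a_n))$ agree pointwise, hence are equal; and as $\omega$ and the $a_i$ were arbitrary, this shows $\delta_A$ is a d-homomorphism.

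I do not expect a genuine obstacle: the core is a one-line pointwise check. The only point requiring care is bookkeeping --- recognizing that the operations on $R^{[A\lollipop R]}$ are the pointwise ones, and that restricting the domain of the evaluation maps from $R^A$ to $[A\lollipop R]$ is exactly what licenses the use of the homomorphism property of $f$. It is worth flagging explicitly in the write-up that this restriction is the hypothesis doing all the work, so that the contrast with the non-homomorphic $\delta_A\colon A\to R^A$ is visible.
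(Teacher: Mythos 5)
Your proof is correct and follows exactly the paper's own argument: evaluate both sides of the homomorphism identity at an arbitrary $h\in[A\lollipop R]$, use pointwise-ness of the operations on the function space, and invoke the homomorphism property of $h$ for the decisive middle step. Your additional remark that the restriction of the test functions to $[A\lollipop R]$ is what makes the argument go through is a helpful gloss on the observation the paper makes just before the lemma.
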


\begin{proof}
We just have to show that $\delta_A$ is a homomorphism, that is, for
every operation $\omega$ of arity $n$ and all $a_1,\dots,a_n\in A$, we
have
$\delta_A(\omega(a_1,\dots,a_n))=\omega(\delta_A(a_1),\dots,\delta_A(a_n))$. 
Indeed,
for every d-homomorphism $h\colon A\to R$ we have
$\delta_A(\omega(a_1,\dots,a_n))(h) =
h(\omega(a_1,\dots,a_n)) =
\omega(h(a_1),\dots,h(a_n)) = 
\omega(\delta_A(a_1)(h),\dots,\delta_A(a_n)(h)) = 
\omega(\delta_A(a_1),\dots,\delta_A(a_n))(h)$.
\end{proof}

\section{Monad II: Free algebras}\label{sec:free}

We keep the setting of the previous section and consider a fixed
d-algebra $R$ of d-signature $\Omega$. In the previous section we
exhibited a monad $[R^-\lollipop R]$ over the category of dcpos
subordinate to the continuation monad by restricting to d-homomorphisms
$R^X\to R$ instead of arbitrary Scott-continuous maps. I doubt
that this monad is of any intrinsic 
interest. Of real interest for semantics and otherwise are free
algebras. We consider a second monad subordinate to the continuation
monad and we will investigate in which sense this is a free construction.  

 The intersection of any family of sub-dcpos in a dcpo $X$ is a sub-dcpo, in
fact, the sub-dcpos are the closed sets of a topology, called the
\emph{d-topology} (see, e.g.,\cite[Section 5]{KL1}). 

A subalgebra of a d-algebra $A$ which is a sub-dcpo, too, is called a
\emph{d-subalgebra}. The intersection of any family of d-subalgebras is
  again a d-subalgebra. 

\begin{lemma}\label{lem:d-closure}{\rm \cite[Corollary 5.7]{KL2}}
In any d-algebra $A$, the d-closure of a subalgebra $B$, that is, the
smallest sub-dcpo containing $B$, is a
d-subalgebra.  
\end{lemma}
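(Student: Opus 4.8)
The plan is to reduce the statement to showing that the d-closure $\overline B$ is closed under each operation, and then to establish that closure one argument at a time. First I would recall that $\overline B$ exists: by the remark preceding the statement, the intersection of all sub-dcpos containing $B$ is again a sub-dcpo, so it is the smallest sub-dcpo containing $B$. The feature I will exploit repeatedly is \emph{minimality}: any sub-dcpo of $A$ that contains $B$ must already contain $\overline B$. Since $\overline B$ is a sub-dcpo by construction, it then suffices to prove that it is a subalgebra, i.e.\ that $\omega^A(\overline B^{\,n})\subseteq\overline B$ for every $\omega\in|\Omega_n|$ and every $n$.

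Fix such an $\omega$. I would prove, by induction on $k=0,1,\dots,n$, the statement $C_k$: if $a_1,\dots,a_k\in\overline B$ and $a_{k+1},\dots,a_n\in B$, then $\omega^A(a_1,\dots,a_n)\in\overline B$. The base case $C_0$ is exactly the assumption that $B$ is a subalgebra. For the inductive step, assume $C_{k-1}$, fix $a_1,\dots,a_{k-1}\in\overline B$ and $a_{k+1},\dots,a_n\in B$, and consider
\[
S=\{\,a\in A : \omega^A(a_1,\dots,a_{k-1},a,a_{k+1},\dots,a_n)\in\overline B\,\}.
\]
By $C_{k-1}$ (taking the $k$-th argument in $B$) we have $B\subseteq S$. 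Moreover $S$ is a sub-dcpo: the partial map $a\mapsto\omega^A(a_1,\dots,a,\dots,a_n)$ is Scott-continuous, so for a directed family $(a_i)_i$ in $S$ the images form a directed family in $\overline B$ and $\omega^A(\dots,\dsup_i a_i,\dots)=\dsup_i\omega^A(\dots,a_i,\dots)\in\overline B$ because $\overline B$ is a sub-dcpo; hence $\dsup_i a_i\in S$. By minimality $\overline B\subseteq S$, which is precisely $C_k$. After $n$ steps, $C_n$ gives $\omega^A(\overline B^{\,n})\subseteq\overline B$, and since this holds for every operation, $\overline B$ is a subalgebra, and therefore a d-subalgebra.

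The only point requiring care — and the step I expect to be the main, if minor, obstacle — is the separate Scott-continuity of $\omega^A$ in a single coordinate used in the inductive step. This is not quite the joint continuity granted by the definition of a d-algebra, but it follows from it: the insertion $a\mapsto(a_1,\dots,a,\dots,a_n)$ is Scott-continuous, and precomposing the jointly Scott-continuous operation $(a_1,\dots,a_n)\mapsto\omega^A(a_1,\dots,a_n)$ with it yields the desired one-variable map. Two things are worth flagging to avoid a false shortcut: I deliberately never assert that $\omega^A$ is continuous for a single d-topology on the product $A^n$ (the d-topology of a product need not be the product of the d-topologies), which is exactly why the argument proceeds coordinate by coordinate rather than invoking a product-continuity statement; and no continuity in the operation symbol $\omega$ is needed, since the notion of subalgebra refers to $|\Omega|$ and each operation is handled individually.
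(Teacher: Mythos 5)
Your argument is correct. Note that the paper itself gives no proof of this lemma --- it is quoted from the literature as \cite[Corollary 5.7]{KL2} --- so there is nothing internal to compare against; but your coordinate-by-coordinate induction (showing at each stage that the set $S$ of admissible $k$-th arguments is a sub-dcpo containing $B$, hence containing $\overline B$ by minimality) is exactly the standard argument, and is essentially how the cited reference proceeds, phrased there in terms of the d-topology. Your two cautionary remarks are both well placed: separate Scott-continuity in each variable is all that is needed and does follow from the joint continuity required in Definition \ref{def:d-algebra}, and one must indeed avoid appealing to continuity for a product d-topology, which is precisely why the one-variable-at-a-time reduction is the right device here.
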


For every dcpo $X$ we consider the
d-subalgebra  $\cF_R X$ of  $[R^X\to R]$ generated by the projections
$\widehat x=\delta_X(x), x\in X$. Indeed, since the intersection 
of any family of d-subalgebras is again a d-subalgebra,     
there is a smallest d-subalgebra $\cF_R X$ in $[R^X\to R]$ containing the
projections $\widehat x$, $x\in X$.

For a map $t\colon X\to \cF_R Y\subseteq [R^Y\to R]$, the 
Kleisli lifting $t^\dagger\colon [R^X\to R]\to [R^Y\to R]$ maps $\cF_R
X$ into $\cF_R Y$, since $t^\dagger$ is a
d-homomorphism by Lemma \ref{lem:kleisli}(b). This shows that
$(\cF_R,\delta,{}^\dagger)$ is a monad subordinate to the continuation
monad in the sense of Definition \ref{def:subordinate}, the Kleisli 
lifting of a map $t\colon X \to\cF_R Y$ being the restriction and
corestriction of the Kleisli 
lifting $\dagger$ for the continuation monad $[R^-\to R]$.

\begin{proposition}\label{prop:free}
$(\cF_R,\delta,{}^\dagger)$ is a monad over the category {\sf DCPO}
subordinate to the continuation monad.
\end{proposition}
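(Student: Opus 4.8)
The plan is to verify directly the two defining conditions of a monad subordinate to the continuation monad (Definition \ref{def:subordinate}): that $\delta_X(x) \in \cF_R X$ for every $x \in X$, and that $t^\dagger(\cF_R X) \subseteq \cF_R Y$ for every Scott-continuous map $t\colon X \to \cF_R Y$. The first condition is immediate from the very construction of $\cF_R X$: the projections $\widehat{x} = \delta_X(x)$ are precisely the chosen generators of the d-subalgebra $\cF_R X$, so they certainly belong to it. Once both conditions are in place, the functor structure and the inherited monad structure follow from the general remarks made after Definition \ref{def:subordinate}.

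For the second condition I would first record two properties of the Kleisli lifting $t^\dagger$: it is Scott-continuous, being given by the $\lambda$-expression (\ref{equ:lifting}), and it is a d-homomorphism by Lemma \ref{lem:kleisli}(b). Next I would evaluate $t^\dagger$ on the generators. Using $\widehat{x}(f) = f(x)$ together with (\ref{equ:lifting}), one computes $t^\dagger(\widehat{x})(g) = \widehat{x}(\lambda x'.\ t(x')(g)) = t(x)(g)$ for all $g \in R^Y$, so that $t^\dagger(\widehat{x}) = t(x)$; and this lies in $\cF_R Y$ exactly because $t$ is assumed to take values in $\cF_R Y$.

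The decisive step is then the standard "preimage of a subalgebra" argument, carried out in the directed-complete setting. I would introduce $S = \{\varphi \in [R^X \to R] : t^\dagger(\varphi) \in \cF_R Y\}$, the $t^\dagger$-preimage of $\cF_R Y$. Because $t^\dagger$ is a d-homomorphism and $\cF_R Y$ is a subalgebra, $S$ is closed under the operations of $\Omega$; because $t^\dagger$ is Scott-continuous and $\cF_R Y$ is a sub-dcpo, $S$ is closed under directed suprema. Hence $S$ is a d-subalgebra of $[R^X \to R]$, and by the computation above it contains every generator $\widehat{x}$. Since $\cF_R X$ is by construction the smallest d-subalgebra containing the projections, we obtain $\cF_R X \subseteq S$, that is, $t^\dagger(\cF_R X) \subseteq \cF_R Y$. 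Finally, taking $t = \delta_Y \circ u$ (which lands in $\cF_R Y$ by the first condition) shows that $R^{R^u} = (\delta_Y \circ u)^\dagger$ restricts to a map $\cF_R u\colon \cF_R X \to \cF_R Y$, so that $\cF_R$ is a functor and inherits the unit $\delta$ and the Kleisli lifting $t^\dagger$ from the continuation monad.

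I expect the only genuine subtlety to be the combined algebraic and order-theoretic closure of the preimage $S$: one must invoke the homomorphism property of $t^\dagger$ for closure under the operations and its Scott-continuity \emph{separately} for closure under directed suprema, and it is precisely the latter that requires $\cF_R Y$ to be a d-subalgebra (a sub-dcpo), in the sense guaranteed by Lemma \ref{lem:d-closure}, rather than a mere subalgebra. Everything else reduces to the routine computations already prepared in Lemmas \ref{lem:delta} and \ref{lem:kleisli}.
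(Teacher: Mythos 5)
Your proof is correct and follows essentially the same route as the paper: the paper's entire justification is the sentence preceding the proposition, which asserts that $t^\dagger$ maps $\cF_R X$ into $\cF_R Y$ because $t^\dagger$ is a d-homomorphism (Lemma \ref{lem:kleisli}(b)) and $\cF_R X$ is the smallest d-subalgebra containing the projections. Your explicit preimage argument, together with the computation $t^\dagger(\widehat x)=t(x)$, is exactly the detail the paper leaves implicit, and it correctly handles the point that the image of a sub-dcpo under a Scott-continuous map need not be a sub-dcpo, so one should pull $\cF_R Y$ back rather than push $\cF_R X$ forward.
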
 

Since we have a monad, the d-algebras $\cF_R X$ are free for the class
of its Eilenberg-Moore algebras. It is a challenge to determine these
Eilenberg-Moore algebras more concretely. The natural conjecture is
that the $\cF_R X$ are free over $X$ for the class of d-algebras
determined by the (in)equational theory of the d-algebra $R$. But in this
paper we will not discuss this question.

Of course, we can consider the predicate transformers $s\colon R^Y\to
R^X$ that correspond to state transformers $t\colon X\to \cF_R Y$ under
the mutually inverse bijections $P$ and $Q$ according to Lemma
\ref{lem:PQ}. But I do not know of any intrinsic characterization of
these state transformers. This is in contrast to the situation that we
encountered in the previous section with the homomorphism
monad. Thus, with respect to a characterization of the predicate
transformers we are in an excellent
position in those cases where the monads $\cF_R$ and
$[R^-\lollipop R]$ agree. It would already be an advantage to be in the
position, where  the free d-algebra $\cF R X$ is contained in the dcpo
$[R^X\lollipop R]$. 

Clearly, the generators of $\cF_R X$, the projections $\widehat x,
x\in X,$ are homomorphisms and thus belong to $[R^X\lollipop R]$. But
no other element of $\cF_R X$ need to be a d-algebra homomorphism. 
For example, if we choose for
$R$ the d-semiring $\oRp$ (with two constants $0$ and $1$
and the two binary operations addition and multiplication) and for $X$
the unordered two element set,
then the two projections $(x_1,x_2)\mapsto x_i, i=1,2,$ are the only
Scott-continuous homomorphisms from 
$\oRp^2$ to $\oRp$. But the free d-algebra with two generators is
quite big, containing for example all polynomials in two variables
$x_i,x_2$ with nonnegative integer coefficients.

But we observe:

\begin{remark}\label{rem:subalg}
If the d-homomorphisms $\varphi\colon R^X\to R$ form a subalgebra of
$[R^X\to R]$, then $\cF_R X\subseteq [R^X\lollipop R]$ for every dcpo $X$.
{\rm Indeed, if $[R^X\lollipop R]$ is a subalgebra, then it is a
  d-subalgebra of $[R^X\to R]$. As the projections $\widehat x, x\in
  X,$ are homomorphisms, they belong to $[R^X\lollipop
  R]$. Hence, $[R^X\lollipop R]$ contains the d-subalgebra $\cF_RX$
  generated by the projections.  }
\end{remark}

We are led to ask the question under which hypothesis the
d-homomorphisms $\varphi\colon R^X\to R$ form a subalgebra of $[R^X\to
R]$. Classical universal algebra offers an answer to that
question.

\section{Entropic algebras}\label{sec:entropic}

Let us begin with classical universal algebra (over
the category of sets) and consider algebras $B$ and $R$ of the same
signature $\Omega$.  The 
set Hom$(B,R)$ of all algebra homomorphisms $\varphi\colon B\to R$ is
a subset of the product algebra $R^B$. We ask the
question, whether Hom$(B,R)$ is a subalgebra of $R^B$.  

In order to answer this question, consider an
operation $\sigma\in \Omega$ of arity $n$. For 
$\Hom(B,R)$ to be a sub-algebra we have to show that, for 
all $\varphi_1,\dots,\varphi_n\in \Hom(B,R)$, also
$\sigma(\varphi_1,\dots,\varphi_n)$ is an algebra 
homomorphism, that is, for every operation $\omega\in\Omega$ of arity $m$ and
for all $f_1,\dots,f_m\in B$, we have  
\begin{eqnarray}\label{homo}
\sigma(\varphi_1,\dots,\varphi_n)(\omega(f_1,\dots,f_m)) =  
\omega(\sigma(\varphi_1,\dots,\varphi_n)(f_1),\ \dots\ ,
\sigma(\varphi_1,\dots,\varphi_n)(f_m)).
\end{eqnarray}

\begin{definition}\label{def:commutes}
We will say that an operation $\sigma$ of arity $n$ and an operation
$\omega$ of arity $m$ on an algebra $R$ \emph{commute}
if, for all $x_{ij}\in R,\ i=1,\dots n,\ j=1,\dots,m$\display
\begin{eqnarray}\label{distributes}
\sigma(\omega(x_{11},\dots,x_{1m}),\ \dots\ ,\omega(x_{n1},\dots,x_{nm})) =  
\omega(\sigma(x_{11},\dots,x_{n1}),\ \dots\ ,
\sigma(x_{1m},\dots,x_{nm})).
\end{eqnarray}
\end{definition}

Such an equational law is also called an \emph{entropic law}. It can also be
expressed by the commutativity of the following diagram\display
\begin{diagram}
(A^m)^n\cong(A^{n})^{m}& \rTo^{\sigma^m} & A^m\\
\dTo<{\omega^n}&               & \dTo>\omega\\
A^n          &\rTo_\sigma      &A
\end{diagram}
If this entropic law holds in
$R$, it also holds in any power $R^I$, in $R^{R^I}$ and in all subalgebras
thereof. As a consequence,  
equation (\ref{homo}) holds if $\sigma$ commutes with $\omega$ in $R$. 
Indeed, 
\[\begin{array}{rcl}
\sigma(\varphi_1,\dots,\varphi_n)(\omega(f_1,\dots,f_m)) &= &
\sigma\big(\varphi_1(\omega(f_1,\dots,f_m)),\ \dots\
,\varphi_n(\omega(f_1,\dots,f_m))\big)\\ &&\mbox{(the operation
  $\sigma$ being defined pointwise)}\\ 
 &=& \sigma\big(\omega(\varphi_1(f_1),\dots,\varphi_1(f_m)),\ 
\dots\ ,\omega(\varphi_n(f(1),\dot,\varphi_n(f_m))\big)\\&&\mbox{(the
  $\varphi_i$ being homomorphisms)}\\ 
&=&\omega\big(\sigma(\varphi_1(f_1),\dots,\varphi_n(f_1)),\ \dots\
,\sigma(\varphi_1(f_m),\dots,\varphi_n(f_m))\big)\\ 
&&\mbox{(since $\sigma$ commutes with $\omega$ in $R$, equation
  (\ref{distributes}))} 
\end{array}\]

\begin{definition}\label{def:entropic}
An algebra of signature $\Omega$ is called \emph{entropic} if any two
operations $\sigma,\omega\in\Omega$ commute.
\end{definition}
 We have to be careful with
the nullary operations: If $c$ is a constant, then the entropic law
says that $\omega(c,\dots,c)=c$ and that two constants have to
agree. Thus, for an entropic algebra, we can suppose that there is at
most one nullary operation $c$ and, if there is one, the constant $c$
is a subalgebra, in fact, the smallest subalgebra. 

\begin{example}\label{ex:commutes}{\rm
(a) We have to be careful with
 nullary operations: If $c$ is a constant, then the entropic law
says that $\omega(c,\dots,c)=c$ and that two constants have to
agree. Thus, for an entropic algebra, we can suppose that there is at
most one nullary operation $c$ and, if there is one, the constant $c$
is a subalgebra, in fact, the smallest subalgebra. 

(b) A unary operation $\rho$ commutes with a binary operation $+$ if 
$$\rho(x+y)=\rho(x)+\rho(y).$$

(c) A binary relation $*$ commutes with itself if \footnote{This
  law has been called the \emph{entropic law} by I.M.H.~Etherington,
  \emph{Groupoids with additive endomorphisms}, American Mathematical
  Monthly {\bf 65} (1958), pages 596--601. Etherington used this law
  in order to characterize those groupoids, in which the pointwise
  product of two endomorphisms is again an endomorphism. For
  quasi-groups, the entropic law had already been considered under
  another name by D. C. Murdoch, \emph{Quasi-groups which satisfy
    certain generalized associativity laws}, American Journal of
  Mathematics {\bf 61}  (1939), pages 508--522.  
O. Frink Jr., \emph{Symmetric and self-distributive systems}, American
Mathematical Monthly {\bf 62} (1955), pages 697--707, used the notion
\emph{symmetric} for a binary operation $+$ satisfying nothing but the
entropic law. He notes that all means (arithmetic, geometric,
harmonic) are symmetric, that barycentric operations are entropic. For
a symmetric groupoid $(S,+)$ he proves that the powerset with the
induced operation $+$ is symmetric, that the pointwise defined sum of
two endomorphisms is an endomorphism and that the endomorphisms form a
symmetric groupoid under this operation; further, if $\alpha,\beta$
are two commuting endomorphisms, then $x*y=\alpha(x)+\beta(y)$ is a
symmetric operation, too. A lot more on entropic algebras one can
find in the monograph \cite{RS} by A.B.~Romanowska and J.D.H.~Smith.} 
$$(x_1 *x_2)*(x_3* x_4) = (x_1*x_3)*(x_3*x_4).$$
 In particular, every commutative,
associative binary operation commutes with itself. Thus, commutative
semigroups, commutative monoids, commutative groups and semilattices
are entropic. 

(d) Two binary operation $+$ and $*$ commute iff 
$$(x_1* x_2)+(x_3* x_4) = (x_1+x_3)*(x_2+x_4)$$
As this identity does not hold for addition and multiplication in semirings
and rings, these are not entropic. Similarly lattices, even
distributive lattices are not entropic. }
\end{example}

From the considerations preceding the definition we have:

\begin{proposition}\label{prop:entropic0}{\rm (see, e.g.,
    \cite[Proposition 5.1]{RS})} 
If the algebra $R$ is entropic and $B$ any algebra of the same signature,
the algebra homomorphisms $\varphi\colon B\to R$ form a subalgebra of
the product algebra $R^B$.  
\end{proposition}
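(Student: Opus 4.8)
The plan is to check directly that $\Hom(B,R)$ is closed in $R^B$ under every operation of the signature $\Omega$, since a subset of an algebra that is closed under all the operations is by definition a subalgebra. First I would treat an operation $\sigma\in\Omega$ of arity $n\geq 1$. Given $\varphi_1,\dots,\varphi_n\in\Hom(B,R)$, I must show that the pointwise combination $\sigma(\varphi_1,\dots,\varphi_n)$, formed in $R^B$, is again a homomorphism $B\to R$; by Definition \ref{def:algebra} this amounts to verifying equation (\ref{homo}) for every operation $\omega\in\Omega$ of arity $m$ and all $f_1,\dots,f_m\in B$.

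The verification of (\ref{homo}) is exactly the three-step computation already displayed immediately before the proposition. One first unfolds the left-hand side using that $\sigma$ is defined pointwise on $R^B$; then one pushes each $\omega$ inside using that every $\varphi_i$ is a homomorphism; and finally one exchanges the order of $\sigma$ and $\omega$ by invoking the commutation law (\ref{distributes}), which is available precisely because $R$ is entropic (Definition \ref{def:entropic}). This chain of equalities produces the right-hand side of (\ref{homo}), so $\sigma(\varphi_1,\dots,\varphi_n)\in\Hom(B,R)$.

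It remains to handle the nullary operations separately. For a constant $c\in\Omega_0$, closure of $\Hom(B,R)$ under $c$ means that the constant map $b\mapsto c^R$ is itself a homomorphism $B\to R$. For an arbitrary $\omega$ of arity $m$ this reduces to the single identity $\omega(c^R,\dots,c^R)=c^R$, which is exactly the instance of the entropic law pairing $\omega$ with the nullary operation $c$, as noted in the discussion following Definition \ref{def:entropic}.

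The main point, and indeed the only substantive ingredient, is the commutation identity (\ref{distributes}); everything else is bookkeeping. I expect no real obstacle, since entropicity supplies this identity uniformly for every pair $(\sigma,\omega)$ of operations, and the pointwise definition of the operations on $R^B$ makes the passage routine. The only subtlety worth flagging is the degenerate nullary case, where ``commutation'' must be read as the collapse law $\omega(c,\dots,c)=c$ rather than as a genuine exchange of two operations.
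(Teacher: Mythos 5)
Your proposal is correct and follows exactly the route the paper takes: the paper states this proposition with the remark ``From the considerations preceding the definition we have:'', and those considerations are precisely the three-step chain (pointwise unfolding of $\sigma$, the homomorphism property of the $\varphi_i$, then the entropic law (\ref{distributes})) that you reproduce. Your explicit treatment of the nullary case, where the entropic law degenerates to $\omega(c,\dots,c)=c$, matches the paper's own cautionary remark following Definition \ref{def:entropic}.
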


We now turn to a d-algebra $R$ of d-signature $\Omega$. In this
section, all d-algebras are supposed to be of the same d-signature
$\Omega$. 

Since 
entropicity is defined by equational laws, every homomorphic image of a
subalgebra of a product of entropic algebras is entropic. Thus, if $R$
is an entropic d-algebra, then the function spaces $R^X$ and $R^{R^X}$
are entropic, too, as well as all d-subalgebras thereof. From the
previous proposition we deduce: 

\begin{corollary}\label{cor:entropic}
For an entropic d-algebra $R$ and any dcpo $X$, the collection
$[R^X\lollipop R]$ of all d-homomorphisms 
$\varphi\colon R^X\to R$ is a d-subalgebra of $[R^X\to R]$.
\end{corollary}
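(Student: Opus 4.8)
The plan is to verify the two conditions defining a d-subalgebra separately: that $[R^X\lollipop R]$ is closed under the (pointwise) operations of $[R^X\to R]$, and that it is a sub-dcpo. The second condition is already available, for it is exactly Lemma \ref{lem:d-hom} applied with $A=R^X$ and $B=R$, which tells us that the d-homomorphisms $\varphi\colon R^X\to R$ form a sub-dcpo of $[R^X\to R]$. So the only real work is the closure under operations, and for this I would reduce to the purely algebraic Proposition \ref{prop:entropic0}.

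For the reduction I would first record the bookkeeping between the set-level and the dcpo-level pictures. Writing $|R^X|$ for the underlying set of the d-algebra $R^X$, Proposition \ref{prop:entropic0} (taken with $B=R^X$) says that the set $\Hom(R^X,R)$ of \emph{all} algebra homomorphisms $\varphi\colon R^X\to R$ is a subalgebra of the full product algebra $R^{|R^X|}$, precisely because $R$ is entropic. On the other hand, the operations on $[R^X\to R]$ are defined pointwise (equation (\ref{pointwiseoperation1})), so $[R^X\to R]$ is itself a subalgebra of that same product algebra $R^{|R^X|}$; in particular closure under the pointwise operations never carries a Scott-continuous functional outside $[R^X\to R]$. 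Now $[R^X\lollipop R]$ is exactly the intersection $\Hom(R^X,R)\cap[R^X\to R]$, being the maps that are simultaneously homomorphisms and Scott-continuous. As an intersection of two subalgebras of $R^{|R^X|}$ it is a subalgebra of each of them, hence in particular a subalgebra of $[R^X\to R]$.

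Combining the two parts, $[R^X\lollipop R]$ is both a subalgebra and a sub-dcpo of $[R^X\to R]$, hence a d-subalgebra, as claimed. The one point that deserves care --- and which I regard as the only genuine obstacle --- is keeping the two notions of ``subalgebra'' apart: Proposition \ref{prop:entropic0} is a statement in {\sf SET} about possibly discontinuous homomorphisms, whereas the corollary concerns the continuous ones. The clean way past this is the observation above, that continuity is preserved automatically because the operations on the function space are computed pointwise from the Scott-continuous operations of $R$; thus no separate continuity check for $\sigma(\varphi_1,\dots,\varphi_n)$ is needed, and the entropic hypothesis is used solely to guarantee that the homomorphism property survives the pointwise operation.
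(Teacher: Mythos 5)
Your proof is correct and follows essentially the same route as the paper, which deduces the corollary directly from Proposition \ref{prop:entropic0} (for closure under the pointwise operations) together with Lemma \ref{lem:d-hom} (for the sub-dcpo property). Your explicit bookkeeping between the {\sf SET}-level statement and the dcpo-level one --- in particular the remark that Scott-continuity of $\sigma(\varphi_1,\dots,\varphi_n)$ is automatic because the operations are computed pointwise from the Scott-continuous operations of $R$ --- is exactly the point the paper leaves implicit.
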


As a subalgebra of $[R^X\to R]$, the algebra $[R^X\lollipop R]$ is
again entropic. 

We now can state the first main result in this section. It follows
from the corollary above and Remark \ref{rem:subalg}:

\begin{proposition}\label{prop:free2} 
If $R$ is an entropic d-algebra, then $\cF_R X\subseteq [R^X\lollipop
R]$ for any dcpo $X$.
\end{proposition}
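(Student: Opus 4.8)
The plan is to combine Corollary~\ref{cor:entropic} with Remark~\ref{rem:subalg}; once the former is in hand the statement is purely formal, so I would not prove anything from scratch. Corollary~\ref{cor:entropic} tells us that entropicity of $R$ forces the set $[R^X\lollipop R]$ of d-homomorphisms $\varphi\colon R^X\to R$ to be a d-subalgebra---in particular a subalgebra---of $[R^X\to R]$. This is precisely the hypothesis of Remark~\ref{rem:subalg}, so I would simply invoke it to conclude $\cF_R X\subseteq[R^X\lollipop R]$ for every dcpo $X$.

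To spell out the one step hidden inside Remark~\ref{rem:subalg}: by Lemma~\ref{lem:kleisli}(a) each projection $\widehat x=\delta_X(x)$ is a d-homomorphism, hence belongs to $[R^X\lollipop R]$. Since $[R^X\lollipop R]$ is a subalgebra and, by Lemma~\ref{lem:d-hom}, always a sub-dcpo of $[R^X\to R]$, it is in fact a d-subalgebra. But $\cF_R X$ was \emph{defined} as the smallest d-subalgebra of $[R^X\to R]$ containing all the projections $\widehat x$, $x\in X$; minimality then yields the containment $\cF_R X\subseteq[R^X\lollipop R]$ at once.

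I do not expect any genuine obstacle at this stage, because the substantive work has already been done upstream. The real content is Proposition~\ref{prop:entropic0} together with the observation that entropic laws, being purely equational, are inherited by every power $R^X$, by $R^{R^X}$, and by all subalgebras thereof---this is exactly what lifts the set-theoretic statement to Corollary~\ref{cor:entropic} in the dcpo setting. Granting that propagation of entropicity, Proposition~\ref{prop:free2} is a bookkeeping consequence, and the only thing I would check with care is that the word \textbf{subalgebra} required by Remark~\ref{rem:subalg} is literally what Corollary~\ref{cor:entropic} delivers (it is, since a d-subalgebra is in particular a subalgebra).
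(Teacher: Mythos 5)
Your proposal matches the paper exactly: the paper derives Proposition~\ref{prop:free2} by combining Corollary~\ref{cor:entropic} with Remark~\ref{rem:subalg}, which is precisely your argument. Your unpacking of the minimality step hidden in the Remark is correct and consistent with the paper's own justification.
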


We have seen in Proposition \ref{prop:P-Q} that the state transformers
$t\colon X\to [R^Y\lollipop R]$ correspond bijectively to the
predicate transformers 
$s\colon R^Y\to R^X$ which are d-homomorphisms. Is there a
characterization of those predicate transformers that correspond to
the state transformers $t\colon X\to \cF_R X$? Of course, this is not a
problem in case $\cF_R X = [R^X\lollipop R]$. 

I do not know a general criterion for the equality $\cF_R X =
[R^X\lollipop R]$  to hold, even in the entropic setting.  It does not
hold in general\display 

\begin{example}\label{ex:reals}{\rm
The nonnegative extended reals with the constant $0$
and the binary operation $+$ form an entropic d-monoid
$R=(\oRp,+,0)$. For any dcpo 
$X$, $\cF_R X$ is the d-closure of the 
set of all finite sums $\sum_{i=1}^n n_i\wh x_i$ where the $n_i$ range over
positive integers and $x_i\in X$. But for every $r\in \Rp$ and every d-monoid
homomorphism  $\varphi\colon \oRp^X\to \oRp$ also $r\varphi$ is a d-monoid
homomorphism. Thus $[R^X\lollipop R]$ contains all scalar
multiples $r\wh x$ where $r$ ranges over positive reals, hence, all
linear combinations $\sum_{i=1}^n r_i\wh x_i$, where the $r_i$ range
over positive 
reals. But clearly, if $r$ is not an integer, 
then $r\wh x$ is not a member of $\cF_R X$. }  
\end{example}

 If we still want the equality $\cF_R X =[R^X\lollipop R]$ in the
 previous example, we have at least to enrich 
$\cF_R$ by allowing multiplication with scalars $r\in \Rp$. In order to
do this in the general setting, we observe that the maps $x\mapsto
rx\colon\oRp\to \oRp$ for $r\in\oRp$ are precisely the d-endomorphisms
of the d-monoid $R=(\oRp,+,0)$. In the general situation, every
d-homomorphism 
$\varphi\in [R^X\lollipop R]$ composed with a d-endomorphism $\rho$ of $R$
yields again a d-homomorphism $\rho\circ\varphi\in[R^X\lollipop R]$. 

 The following corollary arises as a special case of Corollary
 \ref{cor:entropic}, where $X$ consists of one element only:

\begin{corollary}\label{cor:entropic1}
For an entropic d-algebra $R$, the
Scott-continuous endomorphisms $\rho\colon R\to R$ form an entropic
d-algebra, the operations $\omega\in\Omega$ being defined pointwise. 
\end{corollary}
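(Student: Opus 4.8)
The plan is to obtain this as the special case $X=\mathbf 1$ of Corollary~\ref{cor:entropic}, where $\mathbf 1$ denotes a one-element dcpo. The first thing to observe is that the Scott-continuous endomorphisms $\rho\colon R\to R$ named in the statement are, by definition, exactly the d-homomorphisms from $R$ to $R$, that is, the elements of $[R\lollipop R]$ in the notation of Section~\ref{sec:transformers}. So the task reduces to showing that $[R\lollipop R]$, equipped with the pointwise operations, is an entropic d-algebra.

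First I would record the identification $R^{\mathbf 1}\cong R$. Since $\mathbf 1$ is a singleton, evaluation at its unique point is a bijection $R^{\mathbf 1}\to R$; and because the operations on $R^{\mathbf 1}$ are defined pointwise over that single point (equation~(\ref{pointwiseoperation})), this bijection is an isomorphism of d-algebras. Transporting along it identifies $[R^{\mathbf 1}\to R]$ with $[R\to R]$ and $[R^{\mathbf 1}\lollipop R]$ with $[R\lollipop R]$, the pointwise operations matching on both sides. Corollary~\ref{cor:entropic}, applied with $X=\mathbf 1$, then gives that $[R^{\mathbf 1}\lollipop R]$ is a d-subalgebra of $[R^{\mathbf 1}\to R]$; hence $[R\lollipop R]$ is a d-subalgebra of $[R\to R]$, i.e. the endomorphisms are closed under the pointwise operations and form a sub-dcpo.

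It then remains to see that this d-algebra is entropic. As $R$ is entropic, the function space $R^R=[R\to R]$ is entropic as well, entropicity being equational and therefore passing to powers and to their subalgebras, as noted just before Corollary~\ref{cor:entropic}. Any subalgebra of an entropic algebra is again entropic, so the d-subalgebra $[R\lollipop R]$ is entropic, which is the assertion.

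There is essentially no real obstacle here beyond the bookkeeping: the only point that deserves an explicit line is that the canonical bijection $R^{\mathbf 1}\cong R$ is genuinely an isomorphism of d-algebras, so that the phrases ``d-homomorphism $R^{\mathbf 1}\to R$'' and ``endomorphism $R\to R$'' denote the same objects and the two families of pointwise operations coincide. Once this is granted, everything else is a direct appeal to Corollary~\ref{cor:entropic} and to the stability of entropicity under subalgebras.
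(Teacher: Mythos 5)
Your proposal is correct and follows exactly the route the paper intends: the paper itself introduces this corollary with the remark that it ``arises as a special case of Corollary~\ref{cor:entropic}, where $X$ consists of one element only,'' and your identification of $R^{\mathbf 1}$ with $R$ together with the observation that subalgebras of entropic algebras are entropic (stated in the paper immediately after Corollary~\ref{cor:entropic}) supplies precisely the bookkeeping that the paper leaves implicit.
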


For the d-endomorphisms of $R$ we have an additional Scott-continuous
binary operation, the composition $\rho_1\circ \rho_2$. We denote by
$\End(R)\subseteq [R\to R]$ the d-algebra of all Scott-continuous
endomorphisms of $R$ 
with the operations $\omega\in \Omega$ and composition as an
additional binary operation. The identity map on $R$ is denoted by
$\mathbf 1_R$.  Thus, the d-signature of the d-algebra $\End(R)$ is
$\Omega$ augmented by composition and a constant for the identity.
Note that the operation $\circ$ destroys entropicity.

The d-algebra $\End(R)$ acts not
only on $R$ but also on $R^X$: For $f\in R^X$ and $\rho\in
$End$(R)$, $\rho\circ f$ is again an element of $R^X$. In a similar
way, End$(R)$ acts on $[R^X\to R]$. Composing a d-algebra homomorphism
$\varphi\in[R^X\lollipop R]$ with an endomorphism $\rho\in \End(R)$
yields again 
a d-algebra homomorphism $\rho\circ\varphi\in[R^X\lollipop R]$. Thus
$R$, $R^X$, $[R^X\to R]$ and $[R^X\lollipop R]$ are $\End(R)$-modules in
the sense of the following definition\display

\begin{definition}\label{def:modul}{\rm
Let $R$ be an entropic d-algebra of d-signature $\Omega$. An
$\End(R)$-\emph{d-module} is a d-algebra $A$ of d-signature $\Omega$
together with a Scott-continuous map $(\rho,x)\mapsto \rho\cdot
x\colon \End(R)\times A\to A$ satisfying the following axioms for all
$\rho,\rho_1,\dots,\rho_n\in \End(R)$, all $\omega\in\Omega$\ and all
$x,x_1,\dots,x_n\in A$\display 
\begin{eqnarray}
\mathbf 1_R\cdot x& =& x\label{ax1}\\
(\rho_1\circ \rho_2)\cdot x&=& \rho_1\cdot(\rho_2\cdot x)\label{ax2}\\
\omega(\rho_1,\dots,\rho_n)\cdot x &=& \omega(\rho_1\cdot
x,\dots,\rho_n\cdot x)\label{ax3} \\
\rho\cdot\omega(x_1,\dots,x_n)&=&\omega(\rho\cdot x_1,\dots,\rho\cdot
x_n)\label{ax4}
\end{eqnarray}
A map $u$ from an $\End(R)$-\emph{d-module} $A$ to an
$\End(R)$-\emph{d-module} $A'$ is said to be an
$\End(R)$-\emph{d-module homomorphism}, if
\begin{eqnarray}
u(\omega(x_1,\dots,x_n))&=& \omega(u(x_1),\dots,u(x_n))\ \ \mbox{ for all
}\omega\in\Omega\\
u(\rho\cdot x) &=&\rho\cdot f(x)\ \ \mbox{ for all } \rho\in \End(R).
\end{eqnarray} }
\end{definition}  
Axiom (\ref{ax4}) says that $\rho\mapsto\rho\cdot x$ is an $\Omega$-algebra
homomorphism from $\End(R)$ into $A$ for every fixed $x\in A$, and 
equation (\ref{ax3}) says that $x\mapsto\rho\cdot x$ is an 
endomorphism of $A$ for every fixed $\rho$. We can
subsume these two statement 
under the slogan that  $(\rho,x)\mapsto \rho\cdot
x\colon \End(R)\times A\to A$ is an $\Omega$-\emph{bimorphism}.

On an $\End(R)$-d-module $A$, we may interpret each endomorphism $\rho$
to be a unary operation on $A$. In this way, $A$ becomes a d-algebra
of d-signature $\Omega\cup \End(R)$. The defining axioms (\ref{ax1})
-- (\ref{ax4}) 
become equational laws. Axiom (\ref{ax4}) shows that the unary operations
$\rho$ commute with the operations $\omega\in \Omega$. We have\display

\begin{proposition}\label{prop:entropic}
For an entropic d-algebra $R$ of d-signature $\Omega$, every
$\End(R)$-d-module $A$ is an entropic d-algebra of d-signature  $\Omega\cup
\End(R)$ provided that $A$ is entropic for the signature $\Omega$. 
\end{proposition}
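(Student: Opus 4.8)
The plan is to unwind Definition \ref{def:entropic}: I have to verify that any two operations of the enlarged signature $\Omega\cup\End(R)$ commute on $A$ in the sense of Definition \ref{def:commutes}. The operations of $A$ are the operations $\omega\in\Omega$ together with the unary operations $x\mapsto\rho\cdot x$ indexed by the elements $\rho\in\End(R)$, so there are three types of pairs to examine, and I would treat them in turn: two operations from $\Omega$; one operation $\omega\in\Omega$ and one unary operation $\rho$; and two unary operations $\rho_1,\rho_2$.

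The first two types are immediate from what is already available. For $\sigma,\omega\in\Omega$ the required instance of the entropic law (\ref{distributes}) is nothing but the hypothesis that $A$ is entropic for the signature $\Omega$. For a unary operation $\rho$ and an $\omega\in\Omega$ of arity $m$ I would specialise (\ref{distributes}) to $n=1$; it then reads $\rho(\omega(x_1,\dots,x_m))=\omega(\rho(x_1),\dots,\rho(x_m))$, and upon rewriting $\rho(x)$ as $\rho\cdot x$ this is exactly axiom (\ref{ax4}). Thus every unary operation commutes with every operation of $\Omega$, which is the content announced in the paragraph preceding the statement.

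The third type, a pair of unary operations $\rho_1,\rho_2$, is the step I expect to be the main obstacle. Specialising (\ref{distributes}) to $n=m=1$, the entropic law demands $\rho_1(\rho_2(x))=\rho_2(\rho_1(x))$; by the associativity axiom (\ref{ax2}) this is equivalent to $(\rho_1\circ\rho_2)\cdot x=(\rho_2\circ\rho_1)\cdot x$ for all $x\in A$. There is no reason for this to hold in general, since composition on $\End(R)$ need not be commutative --- indeed the text already observes that composition destroys entropicity. The honest conclusion is that this last case requires the acting endomorphisms to commute under composition; I would isolate precisely this condition, note that it is automatic for the scalar actions in which $\End(R)$ reduces to the multiplications by elements of $\oRp$, and record that the cases actually invoked downstream, when the $\Omega$-homomorphisms $\varphi\colon R^X\to R$ are recognised as a subalgebra, are only the first two --- there one combines homomorphisms by the operations of $\Omega$ and by composition with a single $\rho$, and tests the results against the operations of $\Omega$, so the substance is carried entirely by the hypothesis together with axiom (\ref{ax4}).
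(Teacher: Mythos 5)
Your decomposition into the three types of pairs is exactly right, and for the first two types your argument coincides with the paper's: the paper gives no displayed proof of Proposition \ref{prop:entropic} at all, but the sentence immediately preceding it (``Axiom (\ref{ax4}) shows that the unary operations $\rho$ commute with the operations $\omega\in\Omega$''), together with the hypothesis that $A$ is entropic for $\Omega$, is precisely your treatment of the pairs $(\sigma,\omega)$ with $\sigma,\omega\in\Omega$ and of the mixed pairs $(\rho,\omega)$.

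The third case, which you flag as the obstacle, is indeed a genuine gap --- not in your reasoning but in the proposition as stated. The paper's implicit proof simply never addresses pairs of unary operations from $\End(R)$, and as you compute, Definition \ref{def:commutes} specialised to two unary operations demands $\rho_1\cdot(\rho_2\cdot x)=\rho_2\cdot(\rho_1\cdot x)$, hence by axiom (\ref{ax2}) that $\rho_1\circ\rho_2$ and $\rho_2\circ\rho_1$ act identically on $A$. This can fail: take $R=(\N^2,+,0)$, a commutative (hence entropic) monoid with the discrete order; its endomorphisms are given by $2\times 2$ matrices over $\N$, which do not commute under composition, and $R$ acting on itself is an $\End(R)$-d-module, so it is not entropic for $\Omega\cup\End(R)$. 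The proposition therefore needs the additional hypothesis that the action factors through a commutative submonoid of $(\End(R),\circ)$ --- which, as you note, holds in the only case the paper exploits, namely $\End(\oRp)\cong(\oRp,\cdot)$. One small correction to your closing remark: the downstream application (closure of the $\End(R)$-d-module homomorphisms $\varphi\colon R^X\to R$ under the action) does test $\rho\circ\varphi$ against the unary operations $\rho'$ as well, so the third case is invoked there too; it is again only the commutativity of scalar multiplication that rescues it in the examples.
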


We are now in a position to reinterpret the material of this section
in the following way: We fix an entropic d-algebra $R$ of d-signature
$\Omega$ and we regard it as an $\End(R)$-d-module of d-signature
$\Omega\cup\End(R)$. It stays entropic by Proposition
\ref{prop:entropic}. For any dcpo $X$, the function spaces $R^X$ and
$[R^X\to R]$ are entropic $\End(R)$-d-modules, too; the module operation
is given by $\rho\circ f$ and $\rho\circ\varphi$ for $\rho\in\End(R)$,
$f\in\R^X$ and $\varphi\in R^{R^X}$, respectively. The subset
$[R^X\lollipop_mod R]$ of 
all $\End(R)$-d-module homomorphisms $\varphi\colon R^X\to R$ is 
an $\End(R)$-d-submodule of $[R^X\to R]$ by Proposition
\ref{prop:free}. 

Note that
$[R^X\lollipop_{mod} R]$ might be properly smaller than the set
$[R^X\lollipop R]$  of all
d-algebra homomorphisms $\varphi\colon R^X\to R$. Now $\cF_{mod}X$
will be the $\End(R)$-d-submodule of $[R^X\to R]$ generated by the
projections $\wh x, x\in X$. These projections are not only d-algebra
but also $\End(R)$-d-module
homomorphisms, since $\wh x(\rho\circ f) =\rho(f(x)) =\rho(\wh x(f)) =
(\rho\circ\wh x)(f)$ for $\rho\in\End(R)$ and $f\in R^X$. Thus, $\wh
x\in [R^X\lollipop_{mod}R]$. It follows that the
$\End(R)$-d-submodule $\cF_{mod}X$ generated 
by the projections in $[R^X\to R]$ is contained in $[R^X\lollipop_{mod}R]$. 

Although the $\End(R)$-d-module $\cF_{mod}X$ is bigger than the
d-algebra $\cF_R X$ the question remains open whether
$\cF_{mod}X=[R^X\lollipop_{mod}R]$. Maybe that this question has to be
decided in every special case separately.

\section{Examples: Powerdomains}\label{sec:nondet}

We want to illustrate that some standard powerdomain constructions
(see e.g. M.B.~Smyth \cite{Sm}) fit
under the framework developed until now. Powerdomains are
used for interpreting programs involving nondeterministic or
probabilistic choice.

Our basic domain of observations is the two element dcpo $\mathbf 2=\{0,1\}$
with $0<1$. Here $1$ denotes termination of a program and is
observable, while $0$ 
denotes nontermination which is not observable. 

For a dcpo $X$, an
observable predicate will be a Scott-continuous map $p\colon X\to
\mathbf 2$, and $\mathbf 2^X$ will be the domain of observable
predicates. The Scott-continuous functions from a dcpo $X$ to $\mathbf
2$ are the 
characteristic functions of Scott-open subsets. Thus, the domain
$\mathbf 2^X$ of observable predicates
can be identified with the complete lattice $\cO X$ of Scott-open
subsets of $X$ ordered by inclusion. 

A predicate transformer will be a Scott-continuous map $s\colon\mathbf
2^Y\to \mathbf 2^X$ or, equivalently, a Scott-continuous map $s\colon
\cO Y\to\cO X$. A state transformer will be a Scott-continuous map
$t\colon X\to [\mathbf 2^Y\to\mathbf 2]$, equivalently, $t\colon X\to
\cO\cO Y$, where $\cO\cO Y$ denotes the complete lattice of all
Scott-open subsets of the complete lattice $\cO Y$. According to Lemma
\ref{lem:PQ}, state and predicate transformers are in a canonical
one-to-one correspondence.\\

\subsection{The deterministic case}
For deterministic programs the state transformers $t$ will be
Scott-continuous maps from the input domain to the output domain:
\begin{diagram}
X&\rTo &Y&\rTo^{\delta_Y}&[\mathbf 2^Y\to\mathbf 2]
\end{diagram}
We can reason about properties of such programs using the connectives
'and' and 'or' as usual: If we can observe each of the predicates $p$
and $q$, then we can also observe their conjunction $p\wedge q$ and
their disjunction $p\vee q$. Thus, we consider our two element dcpo
$\mathbf 2$ as a d-algebra with two binary operations $\wedge$ (= max)
and $\vee$ (= max) and we add $0$ and $1$ as constants. The algebra
$(\mathbf 2,\wedge,\vee,0,1)$ is not entropic, so that our previous
developments do not apply. Let us describe this situation:

On $\mathbf 2^X$ and $[\mathbf 2^X\to\mathbf 2]$ the 
operations $\wedge$ and $\vee$ are pointwise binary inf and sup, the
constants being 
interpreted by the constant functions $0$ and $1$. On $\cO X$ and
$\cO\cO X$, the operations $\wedge$ and $\vee$ are interpreted by
$\cap$ and $\cup$, 
the constant by the empty set and the whole space, respectively.
Since we have directed suprema in dcpos anyway, the algebraic
structure is that of a frame: We have arbitrary suprema and finite infima
connected by meet-distributivity. The dcpo $[\mathbf
2^X\lollipop\mathbf 2]$ of frame homomorphisms $\varphi\colon \mathbf
2^X\to\mathbf 2$ is the sobrification of $X^s$ of $X$. The state
transformers $t\colon X\to Y^s$ are in bijective correspondence with
the predicate transformers $s\colon\mathbf 2^Y\to\mathbf 2^X$ which are
frame homomorphisms according to Proposition \ref{prop:P-Q}. We see:

Only if the dcpo $Y$ is a sober space in its Scott topology, the state
transformers $t\colon X\to Y$ are in bijective correspondence with the
frame homomorphisms $h\colon \mathbf 2^Y\to\mathbf 2^X$. But notice
that the frame generated in $[\mathbf 2^X\to\mathbf 2]$ by the
projections $\wh x, x\in X,$ is much bigger than $X^s$.  \\

\subsection{The nondeterministic case}
We now suppose that we interpret programs that admit a
nondeterministic choice operator $\funion$. The effect is that a
program, if it terminates, may lead to several results.
There are two basic ways for interpreting such a choice operator, the
\emph{angelic} and the \emph{demonic} interpretation. In the first
case we are happy if at least one of the possible outcomes has the
desired property, in the second case we demand that all of the
possible outcomes have the desired property. This boils down to
interpret the nondeterministic choice operator on our domain $\mathbf
2$ of observations by the binary operation $\vee$ in the first case,
but by the binary operation $\wedge$ in the second case. Thus our algebras of
observations are $$\mathbf 2_{ang}= (\mathbf 2,\vee,0) \ \ \ \mbox{ and }
\ \ \ \mathbf 2_{dem}= (\mathbf 2,\wedge,1)$$ 
for angelic and demonic nondeterminism, respectively.

In both cases we have a semilattice with unit, hence an entropic
d-algebra according to Example \ref{ex:commutes}(c). Accordingly, 
$\mathbf 2_{ang}^X$ and $[\mathbf 2_{ang}^X\to\mathbf 2_{ang}]$ will be
unital $\vee$-semilattices and $\mathbf 2_{dem}^X$ and $[\mathbf
2_{dem}^X\to\mathbf 2_{dem}]$ are unital 
$\wedge$-semilattices. In the equivalent presentation through
Scott-open sets, these function spaces correspond to $\cO X$ and
$\cO\cO X$ with binary union 
and $\emptyset$ as a constant in the angelic case and binary
intersection and the whole space as a constant in the demonic case.

For any dcpo $X$ we define 
$$\cH X =[\mathbf 2_{ang}^X\lollipop\mathbf 2_{ang}]$$
to be the dcpo of all d-$\vee$-semilattice homomorphisms
$\varphi\colon \mathbf 2_{ang}^X\to\mathbf 2_{ang}$. By Corollary \ref{cor:entropic}, $\cH
X$ is a d-$\vee$-subsemilattice with a bottom element. It is called
the \emph{angelic} or \emph{lower} or \emph{Hoare powerdomain} over $X$.
We have the following well-known result (see
e.g. \cite{Sm}): 

\begin{proposition}\label{prop:angelic}
(a) The angelic powerdomain $\cH X$ can be identified with the
complete lattice of all Scott-closed subsets of
$X$.\footnote{Most authors exclude the empty set, the 
  bottom element, from the Hoare powerdomain; then the predicate
  transformers are just supposed to preserve unions of nonempty
  families of closed sets.}

(b) The predicate transformers corresponding to the state transformers
$t\colon X\to \cH Y$ are those Scott-continuous maps $s\colon \cO
Y\to\cO X$ preserving binary unions and $\emptyset$ (hence those maps
preserving arbitrary unions because of Scott continuity).  

(c) The unital join d-subsemilattice $\cF_{\mathbf 2_{ang}}X$ generated
by the projections $wh x, x\in X,$ equals $\cH X$. 
\end{proposition}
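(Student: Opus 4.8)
The plan is to translate everything through the identification $\mathbf 2^X \cong \cO X$, under which a predicate $p\colon X\to\mathbf 2$ becomes the Scott-open set $p^{-1}(1)$, and then to read off the three assertions. Throughout, an element $\varphi\in\cH X$ is a Scott-continuous map $\varphi\colon\cO X\to\mathbf 2$ preserving binary $\vee$ (= union) and the constant $0$ (= $\emptyset$). First I would record, as announced in (b), that preservation of $\emptyset$ and of binary unions together with Scott-continuity forces preservation of \emph{arbitrary} unions: any union $\bigcup_i U_i$ is the directed supremum of its finite subunions, each of which is a finite join, so $\varphi(\bigcup_i U_i)=\bigvee_i\varphi(U_i)$.

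For (a) I would use this to show that each $\varphi\in\cH X$ is determined by the largest open set on which it vanishes. Put $W_\varphi=\bigcup\{U\in\cO X:\varphi(U)=0\}$; since $\varphi$ preserves arbitrary unions, $\varphi(W_\varphi)=0$, and $\varphi(U)=1$ iff $U\not\subseteq W_\varphi$ iff $U$ meets the Scott-closed complement $A_\varphi:=X\setminus W_\varphi$. Conversely, for a Scott-closed $A$ the assignment $\varphi_A(U)=1\iff U\cap A\neq\emptyset$ is readily checked to preserve $\emptyset$, binary unions and directed suprema, hence lies in $\cH X$; and the two assignments are mutually inverse and order-preserving in both directions ($\varphi_1\le\varphi_2$ iff $A_{\varphi_1}\subseteq A_{\varphi_2}$). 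This exhibits the order-isomorphism between $\cH X$ and the complete lattice of Scott-closed subsets of $X$.

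Part (b) I regard as immediate from Proposition \ref{prop:P-Q} with $R=\mathbf 2_{ang}$: the state transformers $t\colon X\to\cH Y=[\mathbf 2_{ang}^Y\lollipop\mathbf 2_{ang}]$ correspond under $P,Q$ exactly to the predicate transformers $s\colon\mathbf 2^Y\to\mathbf 2^X$ that are d-$\vee$-semilattice homomorphisms, i.e.\ Scott-continuous maps $\cO Y\to\cO X$ preserving $\emptyset$ and binary unions; the parenthetical ``arbitrary unions'' is again the binary-plus-directed observation above.

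The real content is the reverse inclusion in (c), the forward one $\cF_{\mathbf 2_{ang}}X\subseteq\cH X$ being free from Proposition \ref{prop:free2} (the d-algebra $\mathbf 2_{ang}$ is entropic). Transporting the generators through the isomorphism of (a): the projection $\wh x$ satisfies $\wh x(U)=1\iff x\in U\iff U\cap\cl\{x\}\neq\emptyset$, so $\wh x$ corresponds to $\cl\{x\}$; the operation $\vee$ corresponds to union of closed sets; and the constant to $\emptyset$. Hence a finite join $\wh{x_1}\vee\cdots\vee\wh{x_n}$ corresponds to $\cl\{x_1,\dots,x_n\}=\cl F$. I would then show every Scott-closed $A$ is a directed supremum of such finitely generated closed sets: the family $\{\cl F:F\subseteq A\text{ finite}\}$ is directed (closed under $\cl F_1\cup\cl F_2=\cl(F_1\cup F_2)$) and its union is $\bigcup_{x\in A}\cl\{x\}=A$, since $A$, being closed, is a lower set already containing each $\cl\{x\}$ for $x\in A$. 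As this union is itself closed, it is the directed supremum in $\cH X$, so $A$ lies in the d-closure of the finite $\vee$-joins of projections, i.e.\ $A\in\cF_{\mathbf 2_{ang}}X$; this gives $\cH X\subseteq\cF_{\mathbf 2_{ang}}X$ and hence equality. The only point demanding care is the bookkeeping that directed suprema in $\cH X$ (computed pointwise in the ambient function space) match directed unions of closed sets under the isomorphism of (a) — which holds because that isomorphism is an order-isomorphism of dcpos — and that $\cl F$ is genuinely the finite join $\bigvee_{x\in F}\wh x$ rather than merely containing it.
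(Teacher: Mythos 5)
Your proposal is correct and follows essentially the same route as the paper: part (a) via the bijection between unital $\vee$-homomorphisms $\cO X\to\mathbf 2$ and Scott-closed sets (the paper phrases it through the Scott-closed ideal of opens missing $C$, you through its largest element $W_\varphi$, which is the same correspondence), part (b) by citing Proposition \ref{prop:P-Q}, and part (c) by writing a closed set as the directed union of closures of its finite subsets, whose members are exactly the finite joins of projections. Your write-up merely supplies more of the routine verifications (directedness, compatibility of suprema) that the paper leaves implicit.
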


\begin{proof}
(a) For every Scott-closed subset $C$ of $X$, the open sets $U$
contained in $X\setminus C$ form a Scott-closed ideal of the lattice $\cO X$;
hence, the map defined by $\varphi(U) =0$, if $U\cap C=\emptyset$, else
$=1$, is a Scott-continuous unital $\vee$-semilattice homomorphism and
every such homomorphism is of this form.

(b) follows from Proposition \ref{prop:P-Q}.

(c) follows from the fact that a Scott closed subset $C$ is the union
of the collection of principal ideals $\da x, x\in C,$ and that these
principal ideals $\da x$ correspond to the projections $\widehat x$
under the correspondence given in (a). 
\end{proof}

We see that our general developments yield the claims (a) and (b) of
the previous proposition. For the claim (c), our general developments
only tell us that the unital d-$\vee$-semilattice generated by the
projections is contained in $\cH X$. For the equality we have to use
the special situation. 

For any dcpo $X$ we define 
$$\cS X =[\mathbf 2_{dem}^X\lollipop\mathbf 2_{dem}]$$
to be the dcpo of all d-$\wedge$-semilattice homomorphisms
$\varphi\colon \mathbf 2_{dem}^X\to\mathbf 2_{dem}$. By Corollary
\ref{cor:entropic}, $\cS 
X$ is a d-$\wedge$-subsemilattice with a top element. It is called
the \emph{demonic} or \emph{upper} or \emph{Smyth powerdomain} over $X$.
We have the following well-known result (see
e.g. \cite{Sm}): 

\begin{proposition}\label{prop:demonic}
(a) The demonic powerdomain $\cS X$ can be identified with the
$\cap$-semilattice 
 of all Scott-open filters of $\cO X$. If $X$ is sober
for its Scott topology,  $\cS X$ can be identified with the
$\cup$-semilattice of 
Scott-compact saturated subsets of $X$ ordered by reverse
inclusion. \footnote{We have included $\emptyset$ as the top element
  of $\cS X$. Most authors exclude the empty set from the Smyth
  powerdomain.}

(b) The predicate transformers corresponding to the state transformers
$t\colon X\to \cS Y$ are those Scott-continuous maps $s\colon \cO
Y\to\cO X$ preserving binary intersections and the top (hence those maps
preserving finite intersections).  

(c) If $X$ is a continuous dcpo, the unital d-$\wedge$-subsemilattice
$\cF_{\mathbf 2_{ang}}X$ generated 
by the projections $\wh x, x\in X,$ equals $\cS X$. 
\end{proposition}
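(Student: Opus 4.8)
My plan is to dualize the proof of Proposition~\ref{prop:angelic} throughout --- interchanging $\vee$ with $\wedge$, Scott-closed sets with Scott-open filters, and unions with intersections --- the only genuinely new ingredients being the Hofmann--Mislove theorem (for the second identification in (a)) and a continuity argument (for (c)). For the first half of (a), I would unwind $\cS X=[\mathbf 2_{dem}^X\lollipop\mathbf 2_{dem}]$ under the identification $\mathbf 2_{dem}^X\cong\cO X$. A Scott-continuous $\varphi\colon\cO X\to\mathbf 2$ is determined by the upper set $\mathcal F=\varphi^{-1}(1)$, which is Scott-open in $\cO X$ exactly because $\varphi$ preserves directed suprema. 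The homomorphism conditions $\varphi(U\cap V)=\varphi(U)\wedge\varphi(V)$ and $\varphi(X)=1$ translate precisely into: $\mathcal F$ is closed under binary intersection and contains $X$, i.e. $\mathcal F$ is a Scott-open filter; and every Scott-open filter arises this way. Since $\wedge$ on $\cS X$ is pointwise, it sends $(\varphi_1,\varphi_2)$ to the homomorphism with $1$-fibre $\mathcal F_1\cap\mathcal F_2$, so $\cS X$ is indeed the $\cap$-semilattice of Scott-open filters of $\cO X$.

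For the second half of (a), assuming $X$ sober in its Scott topology, I would invoke the Hofmann--Mislove theorem (see \cite{Dom}), which gives an order isomorphism between Scott-open filters of $\cO X$ (ordered by inclusion) and compact saturated subsets of $X$ (ordered by reverse inclusion), via $Q\mapsto\{U\in\cO X:Q\subseteq U\}$ with inverse $\mathcal F\mapsto\bigcap\mathcal F$. Under it $\mathcal F_1\cap\mathcal F_2$ corresponds to $Q_1\cup Q_2$ (a union of two compact saturated sets being compact saturated), so the $\cap$-semilattice of filters becomes the $\cup$-semilattice of compact saturated sets under reverse inclusion. Part (b) is then immediate from Proposition~\ref{prop:P-Q}, exactly as in the angelic case: the predicate transformers matching the state transformers $t\colon X\to\cS Y$ are precisely the d-$\wedge$-homomorphisms $s\colon\cO Y\to\cO X$, namely those preserving binary intersection and the top, hence all finite intersections.

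Part (c) carries the real work. Since $X$ is a continuous dcpo it is sober (see \cite{Dom}), so the compact-saturated picture of (a) applies. First I would identify $\wh x$ with the filter $\{U:x\in U\}$, i.e. with $\ua x$; a finite meet $\wh{x_1}\wedge\cdots\wedge\wh{x_n}$ then corresponds to $\ua\{x_1,\dots,x_n\}$ and the unit to $\emptyset$ (the top of $\cS X$). Thus the subsemilattice generated by the projections, before d-closure, is exactly the collection of finitely generated saturated sets $\ua F$, $F\subseteq X$ finite. It remains to show the d-closure of this collection is all of $\cS X$, i.e. that every compact saturated $Q$ is a directed supremum in $\cS X$ (whose order is reverse inclusion) of such $\ua F$. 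Using continuity, for each open $U\supseteq Q$ I would cover $Q$ by sets $\twoheaduparrow x=\{y:x\waybelow y\}$ contained in $U$ and extract, by compactness of $Q$, a finite $F$ with $Q\subseteq\interior(\ua F)\subseteq\ua F\subseteq U$. This yields $Q=\bigcap\{\ua F:Q\subseteq\interior(\ua F)\}$, and the indexing family is codirected under $\subseteq$ (given $F_1,F_2$, the open set $\interior(\ua F_1)\cap\interior(\ua F_2)\supseteq Q$ produces a third $F_3$ refining both). Hence the $\ua F$ form a directed family in $\cS X$ with supremum $Q$ (directed suprema in $\cS X$ being intersections of compact saturated sets), so $Q$ lies in the d-closure and $\cF_{\mathbf 2_{dem}}X=\cS X$.

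The main obstacle is this approximation step in (c): writing an arbitrary compact saturated set as a codirected intersection of finitely generated saturated sets. This is exactly where continuity of $X$ is indispensable --- the basis $\{\twoheaduparrow x\}$ of the Scott topology together with the interpolation property of $\waybelow$ is what lets one squeeze a finite $\ua F$ between $Q$ and any open neighbourhood while keeping $Q$ inside $\interior(\ua F)$. Without continuity this can fail, which is precisely why (c), unlike the angelic part~(c), requires the continuity hypothesis on $X$.
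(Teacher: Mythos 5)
Your treatment of (a) and (b) coincides with the paper's: the first identification in (a) is obtained exactly as you do it, by unwinding the $\wedge$-homomorphism and Scott-continuity conditions on $\varphi\colon\cO X\to\mathbf 2_{dem}$ into the statement that $\varphi^{-1}(1)$ is a Scott-open filter; the second identification is the Hofmann--Mislove correspondence, for which the paper cites \cite[Theorem II-1.20]{Dom}; and (b) is read off from Proposition \ref{prop:P-Q}. The genuine difference is in (c): the paper gives no argument there at all, referring instead to \cite[Theorem IV-8.10]{Dom}, whereas you supply the proof. Your argument is the standard one and is correct: continuous dcpos are sober, the finite meets $\wh{x_1}\wedge\dots\wedge\wh{x_n}$ correspond to the finitely generated saturated sets $\ua F$ (with the empty meet giving the top $\emptyset$), directed suprema in $\cS X$ are filtered intersections of compact saturated sets, and the covering argument (for $y\in Q\subseteq U$ pick $x\waybelow y$ with $x\in U$, cover $Q$ by the sets $\twoheaduparrow x$, extract a finite subcover) squeezes $Q\subseteq\interior(\ua F)\subseteq\ua F\subseteq U$ and yields $Q$ as a codirected intersection of the $\ua F$. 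Two minor remarks: the interpolation property of $\waybelow$ is not actually needed --- your codirectedness step just reapplies the covering lemma to the open set $\interior(\ua F_1)\cap\interior(\ua F_2)$ --- and the subscript $\mathbf 2_{ang}$ in the statement of (c) is a typo for $\mathbf 2_{dem}$, which you have silently and correctly repaired. So your proposal proves strictly more than the paper writes down, by making explicit the approximation step that the paper delegates to the literature.
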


\begin{proof}
(a) Clearly a map $\varphi\colon  \mathbf 2_{dem}^X\cong\cO
X\to\mathbf 2_{dem}$ 
is Scott-continuous and preserves finite meets if and only of
$\varphi^{-1}(1)$ is a Scott-open filter of the complete lattice $\cO
X$. Thus $\cS X$ can 
be identified with the collection of all Scott open filters of $\cO
X$ (including $\cO X$ as a filter). In a sober space, the Scott open
filters $\cF$ of $\cO X$ correspond bijectively to the Scott-compact
saturated sets, the bijection being given by $\cF\mapsto \bigcap
\cF$ (see e.g. \cite[Theorem II-1.20]{Dom}). 

(b) follows from Proposition \ref{prop:P-Q}.

(c) For a proof we refer to \cite[Theorem IV-8.10]{Dom}.
\end{proof}

We see that our general developments yield the claims (a) and (b) of
the previous proposition. Concerning (c), we cannot use any general
principle. In general the unital
d-$\wedge$-subsemilattice of $[\mathbf 2_{dem}^X\to\mathbf
2_{dem}]$ can be strictly smaller than $\cS X$. As often, one has to
restrict here to continuous dcpos, where one can use approximations
from way-below.

\subsection{The extended probabilistic powerdomain}
\label{subsec:prob}

In order to catch probabilistic choice in programming, some kind of
measure theory had to be introduced for domains. Measures take
non-negative real values and possibly the value $+\infty$. In defining
measures one needs addition of nonnegative extended reals and suprema
of increasing sequences.

Thus, let $\oRp$ denote the dcpo of nonnegative real numbers augmented by
$+\infty$ with the usual linear order. The algebraic structure will be
given by the usual addition  ($x+\infty= +\infty$) and the constant $0$ 
which yield a commutative d-monoid. Every commutative monoid is
entropic. 

The d-monoid $(\oRp,+,0)$ has endomorphisms: For every $r\in\oRp$,
the map $x\mapsto rx\colon \oRp\to\oRp$ is a 
Scott-continuous endomorphism of the d-monoid $\oRp$ (for
$r=+\infty$ one agrees on $0\cdot(+\infty)= 0$ and
$r\cdot(+\infty) =+\infty$ for $r>0$ as usually in measure
theory); and every Scott continuous endomorphism of $\oRp$ is of this
form. The composition of two endomorphisms given by $r$ and $r'$ is the
endomorphisms given by $rr'$. Since $(\oRp,+,0)$ is entropic,
$\End(R)$ is also a commutative monoid with respect to
addition. Altogether, the algebra End$(R)$ is canonically isomorphic to 
semiring $(\oRp,+,\cdot, 0,1)$. The $\End(\oRp)$-d-modules
and module homomorphisms are precisely the d-cones and the linear maps
as introduced for example by  \cite{}. A cone is a commutative monoid
$C$ together with a 
scalar multiplication by nonnegative real numbers extended by
$+\infty$ satisfying the same axioms
as for vector spaces; in detail: 

\begin{definition}\label{def:cone}\index{cone}\index{ordered cone}{\rm
We take a signature
consisting of a constant $0$, unary operations $r\in \oRp$ and a
binary operation $+$. A \emph{cone} is an algebra of this
signature, that is, a set $C$ endowed with a
distinguished element $0$, an addition
$(x,y)\mapsto x+y\colon C\times C\to C$ and with a 
scalar multiplication 
$(r,x)\mapsto r\cdot x\colon \oRp\times C\to C$ satisfying for all
$x,y,z\in C$ and all $r,s\in\Rp$:}
\[\begin{array}{rcl}
x+(y+z)&=&(x+y)+z\\
x+y&=&y+x\\
x+0&=&x
\end{array}\]
  and %with a scalar multiplication
%$(r,x)\mapsto r\cdot x\colon \Rp\times C\to C$ satisfying for all $x,y\in C$
%and all $r,s\in\Rp$:
\[\begin{array}{rcl} 
 1\cdot x&=&x\\
(rs)\cdot x&=&r\cdot (s\cdot x)\\
r\cdot (x+y)&=&r\cdot x+ r\cdot y\\
(r+s)\cdot x&=&r\cdot x+ s\cdot x\\
0\cdot x&=&0
\end{array}
\]
A map $f\colon C\to C'$ between cones is called \emph{linear}, if it
is additive and positively homogeneous, that is, if 
$$f(x+y)=f(x)+f(y) \ \ \mbox{ and } \ \ f(rx)=rf(x)$$
for all $x,y\in C$ and all $f\in\oRp$.  

If a cone $C$ is endowed with a directed complete partial order such
that addition and scalar multiplication are Scott-continuous, we have
a \emph{d-cone}. 
\end{definition} 

For a dcpo $X$, the function space $\oRp^X$ is a d-cone, too. We
denote by $\cV X$ the set of 
Scott-continuous linear maps $\mu\colon \oRp^X\to \oRp$. Note
that Scott-continuous additive maps between d-cones are easily shown
to be positively 
homogeneous, hence linear. Cones are entropic algebras. We infer that 
$\cV X$ is a d-cone, too, the order and 
the algebraic operations being defined pointwise. 

\begin{definition}\label{def:extprob}
The d-cone $\cV X$ is called the \emph{extended probabilistic
powerdomain} over $X$. 
\end{definition}

As a special case of \ref{prop:P-Q} we obtain: 

\begin{proposition}\label{prop:P-Q3}
There is a canonical one-to-one correspondence between state
transformers $t\colon X\to\cV Y$ and linear predicate transformers
$s\colon\oRp^Y\to\oRp^X$.
\end{proposition}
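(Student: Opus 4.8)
The plan is to read Proposition~\ref{prop:P-Q3} as nothing more than the instance $R=\oRp$ of Proposition~\ref{prop:P-Q}, so that the whole content lies in matching the probabilistic vocabulary with the general one. First I would fix the d-signature of a cone---the constant $0$, the unary scalar multiplications $r\in\oRp$, and the binary operation $+$---and note that, for this signature, a d-homomorphism $\varphi\colon\oRp^Y\to\oRp$ is exactly a Scott-continuous linear map, i.e.\ an element of $\cV Y$; thus $\cV Y=[\oRp^Y\lollipop\oRp]$. In the same way a linear predicate transformer $s\colon\oRp^Y\to\oRp^X$ is precisely a d-homomorphism of d-cones, and a state transformer $t\colon X\to\cV Y$ is precisely a Scott-continuous map all of whose values $t(x)$ are d-homomorphisms $\oRp^Y\to\oRp$.

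With these identifications in place, Proposition~\ref{prop:P-Q} applied to $R=\oRp$ states exactly that the maps $P$ and $Q$ of Lemma~\ref{lem:PQ} restrict to mutually inverse bijections $[\oRp^Y\lollipop\oRp]^X\cong[\oRp^Y\lollipop\oRp^X]$. Rewriting the left-hand side as the state transformers $t\colon X\to\cV Y$ and the right-hand side as the linear predicate transformers $s\colon\oRp^Y\to\oRp^X$ gives the asserted correspondence, and no further computation is required.

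The only point deserving a word of care---and it is not really an obstacle---is the verification that ``linear map'' and ``d-homomorphism for the cone signature'' denote the same class of maps. Here I would invoke the observation, recorded just before Definition~\ref{def:extprob}, that a Scott-continuous additive map between d-cones is automatically positively homogeneous and hence linear. Consequently it is harmless whether one equips $\oRp$ with the full cone signature or merely with the commutative-monoid signature $(+,0)$: both yield the same homomorphisms, hence the same $\cV Y$, so that the specialization of Proposition~\ref{prop:P-Q} is unambiguous.
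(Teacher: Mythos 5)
Your proposal is correct and is exactly the paper's route: the paper states Proposition~\ref{prop:P-Q3} with the words ``As a special case of \ref{prop:P-Q} we obtain,'' relying on the identification $\cV Y=[\oRp^Y\lollipop\oRp]$ for the cone signature and on the remark (made just before Definition~\ref{def:extprob}) that Scott-continuous additive maps are automatically homogeneous. Your write-up merely makes these identifications explicit, so nothing further is needed.
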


By Corollary \ref{cor:entropic}, $\cV X$ contains the d-subcone of
$[\oRp^X\to\oRp]$ generated 
by the projections $\delta_X(x), x\in X$, which are the classical
Dirac measures. But we do not have equality, in general, but we have
equality for an important subclass of dcpos. 

\begin{lemma}\cite[]{}\label{lem:approx}
If $X$ is a continuous dcpo, the d-cone $\cV X$ is continuous, too. In
fact, every $\varphi\in\cV X$ is the join of
a directed family of 'simple' valuations, that is, of finite linear
combinations of projections
$\sigma=\sum_{i=1}^nr_i\delta_X{x_i}$ with $\sigma\ll \varphi$. 
\end{lemma}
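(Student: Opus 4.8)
The plan is to prove the two assertions of Lemma \ref{lem:approx} in sequence: first that $\cV X$ is a continuous d-cone, and second that the simple valuations are cofinal below every element via the way-below relation. The natural approach exploits the fact that $\cV X = \cV$ of a continuous domain, and the way-below relation on $\oRp^X$ and on function spaces is well understood when $X$ is continuous.

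First I would recall that if $X$ is continuous, then the function space $\oRp^X = [X \to \oRp]$ is a continuous dcpo, with a known description of the way-below relation: a function $g \ll f$ in $[X \to \oRp]$ is approximated by step functions built from the basis of $X$ and the basis of $\oRp$. Dually, one wants to understand the way-below relation on $\cV X$, the d-cone of linear functionals $\varphi\colon \oRp^X \to \oRp$. The key technical step is to show that the simple valuations $\sigma = \sum_{i=1}^n r_i \delta_X(x_i)$, where the $x_i$ range over a basis of $X$ and the $r_i$ over positive rationals (or reals), form a basis of $\cV X$ in the sense that every $\varphi \in \cV X$ is the directed supremum of those $\sigma$ with $\sigma \ll \varphi$. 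For this I would first establish that such $\sigma$ are themselves way-below suitable $\varphi$: one checks that if $x_i'\ll x_i$ in $X$ and $s_i < r_i$, then $\sum s_i \delta_X(x_i')\ll \sum r_i \delta_X(x_i)$, using that the Dirac functionals $\delta_X(x)$ respond to way-below in $X$ and that scalar multiplication and finite sums behave well with respect to $\ll$ in a continuous cone.

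The central construction is to approximate an arbitrary $\varphi \in \cV X$ from below by simple valuations. The plan is to use the integration representation: since $\varphi$ is linear and Scott-continuous on $\oRp^X$, its values $\varphi(f)$ are determined by evaluation on the basic step functions $r \cdot \chi_U$ for $U$ Scott-open, and on a continuous domain these open sets are themselves directed unions of sets $\ua x'$ with $x' \ll x$. One then partitions the relevant part of $X$ using the basis, picks representatives $x_i$, and assigns weights $r_i$ slightly below the mass that $\varphi$ distributes to the corresponding region, thereby producing a simple valuation $\sigma \ll \varphi$ with $\sigma$ arbitrarily close to $\varphi$ from below. Directedness of the collection of such $\sigma$ follows from the lattice-like structure on simple valuations together with the continuity of $X$, and continuity of $\cV X$ then falls out: $\varphi = \dsup\{\sigma : \sigma \ll \varphi,\ \sigma \text{ simple}\}$ for every $\varphi$, which simultaneously exhibits a basis and proves $\cV X$ continuous.

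I expect the main obstacle to be the construction of enough simple valuations $\sigma \ll \varphi$ that approximate $\varphi$ from below, rather than merely lie below it. Establishing $\sigma \le \varphi$ is routine, but upgrading $\le$ to $\ll$ and ensuring the approximants are \emph{directed} and have supremum exactly $\varphi$ requires genuine work with the way-below relation on the continuous domain $X$ and on the cone, including a careful argument that the mass of $\varphi$ can be localized to basis elements without loss. This is precisely the kind of hard, case-specific computation that the surrounding text flags as the difficult part of establishing equalities like $\cF_R X = [R^X \lollipop R]$; accordingly, rather than reprove it from scratch, the cleanest route is to cite the known splitting/approximation lemma for continuous valuations (as in \cite[Theorem IV-8.10]{Dom} and the literature on the probabilistic powerdomain, e.g. the work of Jones and Plotkin), which supplies exactly the directed family of simple valuations way-below $\varphi$ and thereby yields both continuity of $\cV X$ and the stated approximation.
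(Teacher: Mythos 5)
Your proposal is correct and takes essentially the same route as the paper, which states this lemma as a known result with a (placeholder) citation and offers no proof of its own --- precisely the deferral to the Jones--Plotkin approximation of continuous valuations by simple valuations that you end up recommending, with the hard point correctly identified as producing \emph{directed} families of simple valuations way-below $\varphi$ via the splitting lemma. The only small slip is that \cite[Theorem IV-8.10]{Dom} concerns the Smyth powerdomain; the relevant references for $\cV X$ are the chapter on the extended probabilistic powerdomain in \cite{Dom} and \cite{jones,JP}.
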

 
\begin{corollary}\label{cor:equal}
If $X$ is a continuous dcpo, the d-cone $\cV X$ of Scott-continuous
linear functionals $\varphi\colon \oRp^X\to\oRp$ equals the d-cone
generated by the projections.  
\end{corollary}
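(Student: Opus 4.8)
The plan is to prove the two inclusions separately, obtaining the nontrivial one directly from the approximation furnished by Lemma \ref{lem:approx}. Write $\cD$ for the d-cone generated in $[\oRp^X\to\oRp]$ by the projections $\delta_X(x)$, $x\in X$; this is the object the corollary calls the d-cone generated by the projections, and its elements are the directed suprema of finite linear combinations $\sum_{i=1}^n r_i\,\delta_X(x_i)$.

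The inclusion $\cD\subseteq\cV X$ has already been recorded in the text preceding the lemma. Since $\oRp$ is an entropic d-algebra and $\cV X$ coincides with $[\oRp^X\lollipop\oRp]$ (a Scott-continuous additive functional being automatically linear), Corollary \ref{cor:entropic} shows that $\cV X$ is a d-subcone of $[\oRp^X\to\oRp]$; it contains every projection, and hence the whole subcone $\cD$ that they generate.

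For the reverse inclusion I would invoke Lemma \ref{lem:approx}. Fix $\varphi\in\cV X$. Continuity of $X$ yields, via the lemma, a directed family of simple valuations $\sigma=\sum_{i=1}^n r_i\,\delta_X(x_i)$ with $\sigma\ll\varphi$ whose supremum is $\varphi$. Each such $\sigma$ is a finite linear combination of projections and therefore lies in $\cD$. Since $\cD$ is a d-subcone it is in particular a sub-dcpo, that is, closed under suprema of directed families, so the directed join $\varphi$ of these $\sigma$ again lies in $\cD$. Hence $\cV X\subseteq\cD$, and the two inclusions give the asserted equality.

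There is essentially no obstacle at this level: all the real work is packed into Lemma \ref{lem:approx}, whose content — that a continuous $X$ makes $\cV X$ continuous with the simple valuations way-below $\varphi$ forming a directed family with supremum $\varphi$ — is the genuinely hard approximation result, and it is cited rather than reproved here. The only point that must be checked explicitly is that $\cD$ absorbs the directed suprema supplied by the lemma, and this is immediate from the definition of a d-subcone as a sub-dcpo.
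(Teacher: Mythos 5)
Your proof is correct and is exactly the argument the paper intends: the containment of the generated d-cone in $\cV X$ follows from the entropic framework (Corollary \ref{cor:entropic} together with the fact that the projections are linear), and the reverse containment is immediate from Lemma \ref{lem:approx} plus closure of the generated d-cone under directed suprema. The only cosmetic quibble is your parenthetical claim that the generated d-cone consists precisely of the directed suprema of finite linear combinations $\sum_{i=1}^n r_i\,\delta_X(x_i)$ --- forming the d-closure may in general require iterating the adjunction of directed suprema --- but your argument never uses that description, only that the generated d-cone contains the simple valuations and is a sub-dcpo, so nothing is affected.
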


\section{Relaxed morphisms and relaxed entropic
  algebras}\label{sec:laxentropic} 

It is our aim to combine probability with nondeterminism. For this we
have to combine the semilattice structure for nondeterminism with the
additive structure for extended probability, that is, our algebra of
observations should be the extended reals $\oRp$ with two binary
operations $+$ and $\vee$ (or $\wedge$). As $+$ and $\vee$ do not
commute, we no longer have an entropic algebra. 
The framework developed in the previous sections is too
narrow. Surprisingly, one can deal with this situation by relaxing the
previous setting in replacing equalities by inequalities.

\begin{definition}\label{def:submorph}{\rm
Let $\omega$ be an operation of arity $n$ defined on dcpos $A$ and $A'$. A
Scott-continuous 
map $h\colon A\to A'$ is called an \emph{$\omega$-submorphism} if 
$$h\big(\omega(x_1,\dots,x_n)\big) \leq
\omega\big(h(x_1),\dots,h(x_n)\big) \ \mbox{ for all }
x_1,\dots,x_n\in A.$$ 
An \emph{$\omega$-supermorphism} is defined in the same way replacing
the inequality  $\leq$ by its opposite $\geq$. }
\end{definition}

For d-algebras of d-signature $\Omega$, we want to distinguish some 
operations $\omega\in\Omega$ for which we would like to consider
relaxed morphisms.  
For this, we suppose that each $\Omega_n$ is the union of two sub-dcpos
$\Omega_n^{\leq}$ and $\Omega_n^{\geq}$ which need not be disjoint. The
subsets  $\Omega_n^{\leq}$ and $\Omega_n^{\geq}$ will be 
kept fixed, and we let $\Omega^\leq =\bigcup_n\Omega_n^\leq$ and
$\Omega^\geq =\bigcup_n\Omega_n^\geq$. 

\begin{definition}\label{def:wmorph}{\rm
A Scott-continuous map $h\colon A\to A'$ between d-algebras of
d-signature $\Omega=\Omega^\leq\cup\Omega^\geq$ is said to be a 
\emph{relaxed d-morphism} if $h$ is an
$\omega$-submorphism for all $\omega\in\Omega^{\leq}$, but an
$\omega$-supermorphism for $\omega\in\Omega^{\geq}$. (For $\omega$ in
both $\Omega^{\leq}$ and $\Omega^{\geq}$, $h$ will be an
$\omega$-homomorphism.)}
\end{definition}

Of course, d-homomorphisms are also relaxed d-morphisms.
We record the following straightforward observation\display

\begin{lemma}\label{lem:comp}
The composition of relaxed d-morphisms between
d-algebras of d-signature $\Omega=\Omega^\leq\cup\Omega^\geq$ yields 
relaxed d-morphisms. 
\end{lemma}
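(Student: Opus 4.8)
The plan is to verify directly from Definition \ref{def:submorph} and Definition \ref{def:wmorph} that the composite of two relaxed d-morphisms respects the required inequality for each operation symbol, splitting into the cases $\omega\in\Omega^\leq$ and $\omega\in\Omega^\geq$. Let $h\colon A\to A'$ and $k\colon A'\to A''$ be relaxed d-morphisms, and set $m=k\circ h$. Since a composition of Scott-continuous maps is Scott-continuous, $m$ is Scott-continuous, so it remains only to check the (in)equalities.

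First I would treat $\omega\in\Omega^\leq$. Here I want to show $m$ is an $\omega$-submorphism, i.e.
\[
m\big(\omega(x_1,\dots,x_n)\big)\leq \omega\big(m(x_1),\dots,m(x_n)\big).
\]
Because $h$ is an $\omega$-submorphism, I have $h(\omega(x_1,\dots,x_n))\leq\omega(h(x_1),\dots,h(x_n))$. Applying $k$ and using that $k$ is Scott-continuous (hence monotone, since order-preservation is part of Scott-continuity) gives
\[
k\big(h(\omega(x_1,\dots,x_n))\big)\leq k\big(\omega(h(x_1),\dots,h(x_n))\big).
\]
Then, since $k$ is itself an $\omega$-submorphism, the right-hand side is $\leq\omega(k(h(x_1)),\dots,k(h(x_n)))=\omega(m(x_1),\dots,m(x_n))$. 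Chaining the two inequalities yields the claim. The case $\omega\in\Omega^\geq$ is completely symmetric: one uses the supermorphism inequalities for $h$ and $k$ together with monotonicity of $k$, now with all inequalities reversed.

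The only subtle point worth flagging is the monotonicity step in the middle: one must apply the order-preserving map $k$ to an inequality, so it is essential that Scott-continuous maps are monotone, which they are by the standing definition of Scott-continuity in this paper. There is no real obstacle here; the statement is indeed ``straightforward,'' and the whole argument is the two-line inequality chain above repeated for the two sign conventions, noting finally that an operation lying in both $\Omega^\leq$ and $\Omega^\geq$ is handled by both cases at once, so $m$ is an $\omega$-homomorphism for such $\omega$, exactly as required by Definition \ref{def:wmorph}.
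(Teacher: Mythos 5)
Your argument is correct and is exactly the verification the paper leaves implicit when it records Lemma \ref{lem:comp} as a ``straightforward observation'' without proof: apply the submorphism inequality for $h$, push it through the monotone (Scott-continuous) map $k$, then apply the submorphism inequality for $k$, and dualize for $\Omega^\geq$. Your flagging of the monotonicity step and of the case $\omega\in\Omega^\leq\cap\Omega^\geq$ is appropriate and introduces no gap.
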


For a d-algebra $R$ of d-signature $\Omega =\Omega^{\leq}\cup
\Omega^{\geq}$, we denote by $[R^X\lollipop_r R]$ the set of all relaxed  
d-morphisms $\varphi\colon R^X\to R$. The
pointwise supremum of a directed family of relaxed
d-morphisms is again a relaxed d-morphism. Thus,
these relaxed d-morphisms form a sub-dcpo of $[R^X\to R]$.  
As in the propositions \ref{prop:monad} and \ref{prop:P-Q} we have:

\begin{proposition}\label{prop:P-Q5}
Let $R$ be a d-algebra of d-signature $\Omega
=\Omega^{\leq}\cup\Omega^{\geq}$.  

{\rm (a)} For every state transformer $t\colon X\to [R^Y\to R]$
such that $t(x)$ is a relaxed d-morphism for each $x\in X$, the Kleisli
lifting $t^\dagger\colon [R^X\to R]\to [R^Y\to R]$ maps relaxed
d-morphisms to relaxed d-morphisms, so that our 
continuation monad $([R^-\to R],\delta,\dagger)$ restricts to a monad
$([R^X\lollipop_r R],\delta,\dagger)$.  

(b) Under the bijective correspondences $P$ and $Q$ (see lemma
\ref{lem:PQ}, the predicate 
 transformers corresponding to the state transformers $t\colon X\to
 [R^Y\lollipop_r R]$ are the relaxed d-morphisms $s\colon R^Y\to R^X$,
 that is: 
$$  [R^Y\to R]^X \ \cong \ [R^Y\lollipop_r R^X]$$
\end{proposition}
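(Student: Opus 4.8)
The plan is to mirror exactly the structure of the proofs already carried out in Propositions~\ref{prop:monad}, \ref{prop:P-Q} and Lemma~\ref{lem:kleisli}, replacing the equalities in the homomorphism computations by the appropriate inequalities. The two parts are essentially bookkeeping of inequality directions, and the only genuine content is that the Kleisli lifting preserves the relaxed condition (part (a)) and that $P$ and $Q$ transport it between state and predicate transformers (part (b)).

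For part (a), the key observation is that all the operations $\omega$ on the function spaces $R^X$, $R^Y$ and $R^{R^X}$ are defined pointwise (equations~(\ref{pointwiseoperation}) and~(\ref{pointwiseoperation1})), and that the partial order on these function spaces is pointwise as well; hence the operations $\omega$ are monotone in each argument. First I would recompute the chain of equalities from Lemma~\ref{lem:kleisli}(c), but now assuming only that each $t(x)$ is a relaxed d-morphism. For $\omega\in\Omega^{\leq}$ the step ``since $t(x)$ is a homomorphism'' becomes an inequality $t(x)(\omega(g_1,\dots,g_n))\leq\omega(t(x)(g_1),\dots,t(x)(g_n))$; applying $\varphi$ (which is Scott-continuous, hence monotone) and using that $\omega$ on the target is pointwise and monotone, the inequality propagates in the same direction, showing $t^\dagger(\varphi)$ is an $\omega$-submorphism. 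For $\omega\in\Omega^{\geq}$ the identical argument with reversed inequalities applies. Monotonicity of $\varphi$ and of the pointwise $\omega$ is what keeps the inequality direction intact throughout, so no sign is flipped. The restriction of the monad then follows as in Proposition~\ref{prop:monad}, since directed suprema of relaxed d-morphisms are again relaxed (stated just before the proposition).

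For part (b), I would follow the proof of Proposition~\ref{prop:P-Q} verbatim, again trading each equality ``since $t(x)$ is a homomorphism'' or ``since $s$ is a homomorphism'' for the corresponding inequality. Using the identities $P(t)(g)(x)=t(x)(g)$ and $Q(s)(x)(g)=s(g)(x)$ from Lemma~\ref{lem:PQ}, one sees that the relaxed condition on $t(x)$ in the variable $g$ (i.e.\ in $R^Y$) corresponds exactly to the relaxed condition on $P(t)$ in the same variable, evaluated at each fixed $x$; since $\omega$ on $R^X$ is pointwise, ``holds for all $x$'' upgrades the pointwise inequality to the claimed inequality between elements of $R^X$. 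The converse direction for $Q$ is symmetric. This gives the stated isomorphism $[R^Y\lollipop_r R]^X\cong[R^Y\lollipop_r R^X]$.

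The only point demanding care — and the place where the entropic arguments of Section~\ref{sec:entropic} do \emph{not} carry over — is that I must \emph{not} attempt to show the relaxed d-morphisms form a subalgebra of $[R^X\to R]$, i.e.\ that $\sigma(\varphi_1,\dots,\varphi_n)$ is again relaxed; that would require $\sigma$ and $\omega$ to commute, which is precisely what fails here (e.g.\ $+$ and $\vee$). The main obstacle is therefore purely one of \emph{discipline}: I must track the direction of every inequality correctly through each pointwise-operation step, making sure that the operations $\omega$ are applied monotonically and that composing with the monotone map $\varphi$ preserves rather than reverses the inequality. Once the bookkeeping of $\Omega^{\leq}$ versus $\Omega^{\geq}$ is set up consistently, both (a) and (b) reduce to the computations already displayed in Lemma~\ref{lem:kleisli}(c) and Proposition~\ref{prop:P-Q}, with $\leq$ or $\geq$ in place of $=$.
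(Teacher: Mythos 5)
Your proposal is correct and matches the paper's own proof, which consists precisely of the remark that one repeats the computations of Lemma~\ref{lem:kleisli}(c) and Proposition~\ref{prop:P-Q} with $\leq$ or $\geq$ in place of $=$ wherever the homomorphism property of $t(x)$, $\varphi$ or $s$ was invoked. Your explicit observation that monotonicity of $\varphi$ and of the pointwise operations $\omega$ is what keeps the inequalities aligned is the correct (and in the paper only implicit) justification for why this substitution goes through.
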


The proofs are the same as for the corresponding claims in
\ref{lem:kleisli}(c) and \ref{prop:P-Q}. We just 
have to replace the equality by the appropriate inequality ($\leq$ in case
$\omega\in \Omega^{\leq}$ and $\geq$ in case $\omega\in
\Omega^{\geq}$) every time that we have used the homomorphism property there. 

We now turn to the question under what circumstances, the
relaxed d-morphisms form a subalgebra of $[R^X\to R]$. 

We attack this question more generally and consider a d-algebra $B$ of
d-signature $\Omega=\Omega^{\leq}\cup\Omega^{\geq}$ and we ask the
question, whether the set of 
relaxed d-morphisms $\varphi\colon B\to R$ is a subalgebra
of $[B\to R]$. 

In order to answer this question we consider an operation $\sigma\in
\Omega$ of arity $n$ and we have to show that
$\sigma(\varphi_1,\dots,\varphi_n)$ is a relaxed morphism for all
relaxed d-morphisms $\varphi_1,\dots, 
\varphi_n\colon B\to R$, that is, for all $\omega\in \Omega^\leq$ of
arity $m$ and all $f_1,\dots,f_m\in B$, we have: 
\begin{eqnarray}\label{subhomo}
\sigma(\varphi_1,\dots,\varphi_n)(\omega(f_1,\dots,f_m)) \leq  
\omega(\sigma(\varphi_1,\dots,\varphi_n)(f_1),\ \dots\ ,
\sigma(\varphi_1,\dots,\varphi_n)(f_m)),
\end{eqnarray}
and analogously, with the reverse inequality, for $\omega\in\Omega^{\geq}$.

\begin{definition}\label{def:subcommutes}{\rm 
We will say that an operation $\sigma$ of arity $n$ on a d-algebra $A$
\emph{subcommutes} with an operation $\omega$ of arity $m$
(equivalently, $\omega$ \emph{supercommutes} 
with $\sigma$) if, for all $x_{ij}\in A, i=1,\dots n,\
j=1,\dots,m$\display 
\begin{eqnarray}\label{subdistributes}
\sigma(\omega(x_{11},\dots,x_{1m}),\ \dots\ ,\omega(x_{n1},\dots,x_{nm})) \leq
\omega(\sigma(x_{11},\dots,x_{n1}),\ \dots\ ,
\sigma(x_{1m},\dots,x_{nm})).
\end{eqnarray}  }
\end{definition}

This is equivalent to the statement that $\sigma$ is an
$\omega$-submorphism.
Whenever this subcommutatitivity law holds in $R$, it also holds in
$R^X$ and $R^{R^X}$ and in subalgebras thereof. As a consequence,  
inequation (\ref{subhomo}) holds if $\sigma$ subcommutes with
$\omega$.  Indeed,
\[\begin{array}{rcl}
\sigma(\varphi_1,\dots,\varphi_n)(\omega(f_1,\dots,f_m)) &= &
\sigma\big(\varphi_1(\omega(f_1,\dots,f_m)),\ \dots\
,\varphi_n(\omega(f_1,\dots,f_m))\big)\\ &&\mbox{(the operation
  $\sigma$ being defined pointwise)}\\ 
 &\leq& \sigma\big(\omega(\varphi_1(f_1),\dots,\varphi_1(f_m)),\ 
\dots\ ,\omega(\varphi_n(f(1),\dots,\varphi_n(f_m))\big)\\&&\mbox{\rm (the
  $\varphi_i$ being  $\omega$-submorphisms)}\\ 
&\leq&\omega\big(\sigma(\varphi_1(f_1),\dots,\varphi_n(f_1)),\ \dots\
,\sigma(\varphi_1(f_m),\dots,\varphi_n(f_m))\big)\\ 
&&\mbox{(since $\sigma$ subcommutes with $\omega$, equation
  (\ref{subdistributes})} 
\end{array}\]
This leads to the following definition\display

\begin{definition}\label{def:weaklyentropic}{\rm
We will say that the d-algebra $R$ of d-signature
$\Omega=\Omega^\leq\cup\Omega^\geq$ is \emph{relaxed entropic}, if
every $\sigma\in\Omega$ is a relaxed morphism, that is, if $\sigma$
subcommutes with every $\omega\in 
\Omega^{\leq}$ and supercommutes with every
$\omega\in\Omega^{\geq}$. }
\end{definition}

In a relaxed entropic algebra, any two $\sigma,\omega\in\Omega_\leq$
commute and similarly for $\sigma,\omega\in\Omega_\geq$.
By the arguments preceding the definition we have:

\begin{proposition}\label{prop:subalg}
If $R$ is a relaxed entropic d-algebra, the 
relaxed d-morphisms from any d-algebra $B$ to $R$  form a subalgebra
of $[B\to R]$. 
\end{proposition}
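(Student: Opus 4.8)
The plan is to show directly that the relaxed d-morphisms from $B$ to $R$ are closed under every operation of $[B\to R]$, these operations being computed pointwise as in (\ref{pointwiseoperation}). Since closure under directed pointwise suprema has already been recorded (the pointwise supremum of relaxed d-morphisms is again one), the only content left is closure under the operation symbols, and combined with the sub-dcpo property this even yields a d-subalgebra. So I would fix $\sigma\in\Omega$ of arity $n$ together with relaxed d-morphisms $\varphi_1,\dots,\varphi_n\colon B\to R$, put $\varphi=\sigma(\varphi_1,\dots,\varphi_n)$ (Scott-continuous, the operations on $[B\to R]$ being pointwise and those of $R$ being Scott-continuous), and verify that $\varphi$ is again a relaxed d-morphism.

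First I would treat $\omega\in\Omega^\leq$ of arity $m$ and reproduce the three-step chain displayed just before Definition \ref{def:weaklyentropic}. One starts from $\varphi(\omega(f_1,\dots,f_m))=\sigma(\varphi_1(\omega(f_1,\dots,f_m)),\dots,\varphi_n(\omega(f_1,\dots,f_m)))$ by the pointwise definition of $\sigma$; then uses that each $\varphi_i$ is an $\omega$-submorphism together with the monotonicity of $\sigma$ to bound this above by $\sigma(\omega(\varphi_1(f_1),\dots,\varphi_1(f_m)),\dots,\omega(\varphi_n(f_1),\dots,\varphi_n(f_m)))$; and finally applies the subcommuting of $\sigma$ with $\omega$ in $R$ (equation (\ref{subdistributes}), available because $R$ is relaxed entropic) to bound this in turn by $\omega(\sigma(\varphi_1(f_1),\dots,\varphi_n(f_1)),\dots,\sigma(\varphi_1(f_m),\dots,\varphi_n(f_m)))=\omega(\varphi(f_1),\dots,\varphi(f_m))$. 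This is exactly the inequality (\ref{subhomo}) saying that $\varphi$ is an $\omega$-submorphism.

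Next I would run the identical chain for $\omega\in\Omega^\geq$ with every inequality reversed: each $\varphi_i$ is then an $\omega$-supermorphism and $\sigma$ supercommutes with $\omega$, so both middle steps turn into $\geq$ and $\varphi$ comes out an $\omega$-supermorphism. Taken together, the two cases say precisely that $\varphi$ is a relaxed d-morphism; for $\omega$ lying in both $\Omega^\leq$ and $\Omega^\geq$ the two inequalities collapse to the honest homomorphism identity. Finally, for a nullary $\sigma=c$ the map $\varphi$ is the constant function with value $c^R$, and the relaxed-morphism condition reduces to $c^R\leq\omega(c^R,\dots,c^R)$ for $\omega\in\Omega^\leq$ and $c^R\geq\omega(c^R,\dots,c^R)$ for $\omega\in\Omega^\geq$, which are the nullary instances of the relaxed entropicity of $R$.

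The argument is a routine transcription of the classical entropic computation of Proposition \ref{prop:entropic0}, so I do not expect a genuine obstacle. The one point that needs care is the middle step of each chain, where $\sigma$ is applied to the family of sub- (resp. super-) morphism inequalities for the $\varphi_i$: this is legitimate only because $\sigma$ is Scott-continuous, hence monotone in all arguments, and because the definition of a relaxed d-morphism forces all $\varphi_i$ to satisfy the inequality in the \emph{same} direction for a given $\omega$ (sub when $\omega\in\Omega^\leq$, super when $\omega\in\Omega^\geq$). It is exactly this matching of directions, built into Definition \ref{def:wmorph}, that keeps the relaxation consistent and makes the whole computation go through.
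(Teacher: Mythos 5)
Your argument is correct and is essentially the paper's own proof: the paper establishes Proposition \ref{prop:subalg} by exactly the three-step chain displayed before Definition \ref{def:weaklyentropic} (pointwise definition of $\sigma$, the $\omega$-submorphism property of the $\varphi_i$ together with monotonicity of $\sigma$, then subcommutativity in $R$), with the $\Omega^{\geq}$ case handled by reversing the inequalities. Your additional remarks on the nullary case and on why the directions of the inequalities match are accurate elaborations of points the paper leaves implicit.
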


Specializing to $B=R^X$ for a dcpo $X$ gives us\display

\begin{corollary}\label{cor:subhomo}
If $R$ is a relaxed entropic d-algebra,  the collection
$[R^X\lollipop_r R]$ of all relaxed d-morphisms $\varphi\colon R^X\to
R$ is a d-subalgebra of $[R^X\to R]$.  
\end{corollary}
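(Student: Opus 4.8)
The plan is to read this corollary as the special case $B = R^X$ of Proposition~\ref{prop:subalg}, combined with the sub-dcpo observation already recorded just before Proposition~\ref{prop:P-Q5}. First I would note that, for any dcpo $X$, the function space $R^X$ is itself a d-algebra of d-signature $\Omega = \Omega^\leq \cup \Omega^\geq$, with the operations $\omega\in\Omega$ defined pointwise as in~(\ref{pointwiseoperation}). Hence $R^X$ is a legitimate choice for the domain algebra $B$ appearing in Proposition~\ref{prop:subalg}.

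Applying Proposition~\ref{prop:subalg} with $B = R^X$ then yields immediately that the relaxed d-morphisms $\varphi\colon R^X \to R$ form a \emph{subalgebra} of $[R^X \to R]$; that is, $[R^X\lollipop_r R]$ is closed under each operation $\sigma \in \Omega$ computed pointwise in $[R^X\to R]$. This is the substantive content, and it rests entirely on the subcommutativity/supercommutativity inequalities packaged in the definition of relaxed entropicity (Definition~\ref{def:weaklyentropic}) and spelled out in the displayed inequality chain preceding it.

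It then remains only to upgrade \emph{subalgebra} to \emph{d-subalgebra}, i.e.\ to verify the sub-dcpo property. For this I would simply invoke the observation made just before Proposition~\ref{prop:P-Q5}: the pointwise supremum of a directed family of relaxed d-morphisms is again a relaxed d-morphism, so $[R^X\lollipop_r R]$ is a sub-dcpo of $[R^X\to R]$. Being simultaneously a subalgebra and a sub-dcpo, $[R^X\lollipop_r R]$ is by definition a d-subalgebra, which is precisely the assertion.

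Since every ingredient is already in place, I do not anticipate any real obstacle here; the only point that needs care is that the two closure conditions---closure under the pointwise $\Omega$-operations and closure under directed suprema---are genuinely independent and must each be cited separately, the former from Proposition~\ref{prop:subalg} and the latter from the earlier sub-dcpo remark.
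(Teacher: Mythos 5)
Your proof is correct and follows the paper's own route exactly: the paper likewise obtains the corollary by specializing Proposition~\ref{prop:subalg} to $B=R^X$, with the sub-dcpo property supplied by the earlier observation that directed pointwise suprema of relaxed d-morphisms are relaxed d-morphisms. Your only addition is to make explicit the separation of the two closure conditions, which the paper leaves implicit.
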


We now turn to the question whether $[R^X\lollipop_r R]$ equals the 
d-algebra $\cF_R X$ generated by the projections. Again we do not have a
general answer, but at least a containment relation:
 
\begin{proposition}\label{prop:free6}
If $R$ is a relaxed entropic d-algebra, then the d-subalgebra $\cF_R X$
of $[R^X\to R]$ generated by the projections $\delta_X(x), x\in X,$ is a
d-subalgebra of $[R^X\lollipop_r R]$.
\end{proposition}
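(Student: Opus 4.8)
The plan is to follow exactly the pattern of the entropic case, namely Remark \ref{rem:subalg} together with Proposition \ref{prop:free2}, substituting the relaxed-entropic machinery of this section for its strict counterpart. The key input has in fact already been secured: Corollary \ref{cor:subhomo} tells us that, for a relaxed entropic $R$, the set $[R^X\lollipop_r R]$ of all relaxed d-morphisms $\varphi\colon R^X\to R$ is a d-subalgebra of $[R^X\to R]$. So the proposition reduces to the single observation that this d-subalgebra contains the generators of $\cF_R X$.

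First I would check that each projection $\delta_X(x)=\wh x$ belongs to $[R^X\lollipop_r R]$. By Lemma \ref{lem:kleisli}(a) the projections are d-homomorphisms, that is, they satisfy $\wh x(\omega(f_1,\dots,f_n))=\omega(\wh x(f_1),\dots,\wh x(f_n))$ for every $\omega\in\Omega$. This equality yields in particular both the inequality $\leq$ demanded for $\omega\in\Omega^\leq$ and the inequality $\geq$ demanded for $\omega\in\Omega^\geq$, so $\wh x$ is an $\omega$-submorphism for every $\omega\in\Omega^\leq$ and an $\omega$-supermorphism for every $\omega\in\Omega^\geq$; that is, $\wh x$ is a relaxed d-morphism in the sense of Definition \ref{def:wmorph}. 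Hence $\wh x\in[R^X\lollipop_r R]$ for all $x\in X$.

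It then remains to combine the two facts. Since $\cF_R X$ is by definition the smallest d-subalgebra of $[R^X\to R]$ containing the projections, and since $[R^X\lollipop_r R]$ is, by Corollary \ref{cor:subhomo}, a d-subalgebra of $[R^X\to R]$ that contains all the projections, minimality forces $\cF_R X\subseteq[R^X\lollipop_r R]$. As both sit inside the common overalgebra $[R^X\to R]$, the algebraic operations and directed suprema of $\cF_R X$ are just the restrictions of those on $[R^X\lollipop_r R]$, so $\cF_R X$ is a d-subalgebra of $[R^X\lollipop_r R]$, as claimed.

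I do not expect any genuine obstacle here: all the real content lies in the passage from the relaxed entropic law to the closure of the relaxed d-morphisms under the pointwise operations, which is precisely Corollary \ref{cor:subhomo} (resting in turn on Proposition \ref{prop:subalg}). The only point that must not be glossed over is that a d-homomorphism is automatically a relaxed d-morphism, so that the generators $\wh x$ really do live in $[R^X\lollipop_r R]$; once that is noted, the argument is the verbatim relaxed analogue of Proposition \ref{prop:free2}.
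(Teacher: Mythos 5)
Your argument is correct and is essentially the paper's own proof: the projections are d-homomorphisms, hence relaxed d-morphisms, and since $[R^X\lollipop_r R]$ is a d-subalgebra of $[R^X\to R]$ by Corollary \ref{cor:subhomo}, it contains the d-subalgebra $\cF_R X$ generated by the projections. Your explicit check that a d-homomorphism satisfies both required inequalities is a slight elaboration of a step the paper states without comment, but the route is identical.
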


\begin{proof}
The projections $\delta_X(x)$ are d-homomorphisms, hence
also relaxed d-morphisms and, in particular, contained in
$[R^X\lollipop_r R]$.  If $R$ is
relaxed entropic, then $[R^X\lollipop_r R]$ is a d-subalgebra of
$[R^X\to R]$ by Proposition \ref{subhomo}. Thus, $[R^X\lollipop_r R]$ 
contains the d-subalgebra $F_R X$ which is generated by the projections. 
\end{proof}

Whether we have equality $F_R X =[R^X\lollipop_r R]$, has to be
decided in each special case separately.

\section{Example: Combining nondeterminism and extended probability}  
\label{sec:combining}

We now combine extended probability and nondeterminism that had been
considered separately in Section \ref{sec:nondet}. For this purpose
our domain $R$ of observations will be 
$\oRp=\{r\in\mathbb R\mid r\geq 0\}\cup\{+\infty\}$ endowed both with
its cone structure and a semilattice operation, $\max$ or 
$\min$, which we denote by $\vee$ and $\wedge$, respectively. 

\subsection{The angelic case}\label{subsec:angelic1}

We first look at $\oRp$ as a d-cone (see Definition \ref{def:cone})
with $\vee$ as an additional binary operation. Considered separately
as a $\vee$-semilattice and as a cone, 
$\oRp$ is entropic. Also multiplication by scalars commutes with
the binary operation $\vee$; indeed, 
$$r(x\vee y) = rx\vee ry$$
holds for all $r\in\oRp$ and all $x,y\in\oRp$. But the two
binary operations $+$ and $\vee$ do not not commute so that
$(\oRp,+,\vee,r\cdot-,0)$ is not entropic. But the inequational law \display 
\begin{eqnarray}
(x_1+y)\vee(x_2+z)\leq (x_1\vee x_2)+(y\vee z), \label{Sub}
\end{eqnarray}
holds for all $x_1,x_2,y,z\in\oRp$, that is, $\vee$ subcommutes
with $+$ (and $+$ supercommutes with 
$\vee$). Thus, we put $\vee$ into $\Omega^\leq$ and $+$ in $\Omega^\geq$;
the constant $0$ and the unary operations $x\to rx$ will be both in
$\Omega^\leq$ and $\Omega^\geq$, and we have:

\begin{lemma}\label{lem:relaxed}
$\oRp$ with the binary operations $+,\vee$, the unary operations
$x\mapsto rx, r\in\oRp,$ and the constant $0$ is a relaxed entropic
d-algebra. 
\end{lemma}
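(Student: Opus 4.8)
The plan is to unwind Definition \ref{def:weaklyentropic}: I must show that each of the four kinds of operation carried by $\oRp$ --- the constant $0$, each scalar multiplication $x\mapsto rx$ ($r\in\oRp$), the addition $+$, and the join $\vee$ --- is a relaxed morphism, i.e.\ subcommutes (in the sense of Definition \ref{def:subcommutes}) with every operation placed in $\Omega^{\leq}$ and supercommutes with every operation placed in $\Omega^{\geq}$. Scott-continuity of all these operations on $\oRp$ is clear, so the only thing to verify is the (in)equational laws. Since both $0$ and all scalar multiplications lie in $\Omega^{\leq}\cap\Omega^{\geq}$, every requirement involving one of them is an honest two-sided equality, and the only pairs where a strict inequality can appear are those built from the two binary operations $+$ and $\vee$.

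First I would dispatch all the equalities at once by isolating the cone part of the structure. The operations $0$, $+$ and the scalar multiplications make $\oRp$ into an $\End(\oRp)$-d-module (equivalently the cone $(\oRp,+,\cdot,0)$ of Section \ref{subsec:prob}), and by Proposition \ref{prop:entropic} this sub-signature is genuinely entropic: any two of $0$, $+$, $\{x\mapsto rx\}$ commute, the relevant identities being the medial law $(a+b)+(c+d)=(a+c)+(b+d)$, the distributive law $r(x+y)=rx+ry$, the law $(rs)x=r(sx)$, and $r\cdot 0=0$. The family of scalar operations is thereby handled uniformly in $r$. It then remains only to bring in $\vee$: that $\vee$ commutes with itself is the entropic law for a semilattice (Example \ref{ex:commutes}(c)); that $\vee$ commutes with every scalar and with $0$ is the pointwise preservation of suprema $r(x\vee y)=rx\vee ry$ together with $0\vee 0=0$. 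All of these are equalities, so they yield sub- and supercommutation simultaneously and settle every requirement except the interaction of $+$ with $\vee$.

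The entire content of the lemma therefore sits in the single mixed pair $\{+,\vee\}$, and the one inequality available there is (\ref{Sub}). Read through Definition \ref{def:subcommutes}, (\ref{Sub}) says exactly that $\vee$ subcommutes with $+$ and, equivalently, that $+$ supercommutes with $\vee$; these are the two cross-requirements on $\{+,\vee\}$, the remaining same-operation requirements ($+$ with $+$, $\vee$ with $\vee$) being the equalities already noted. The step I expect to need the most care is precisely this orientation bookkeeping: one must match the direction of (\ref{Sub}) to the declared split $\Omega=\Omega^{\leq}\cup\Omega^{\geq}$, so that the sub-side requirement and the super-side requirement on the pair are the two readings of the one inequality. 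It is worth recording that the opposite inequality $(a\vee b)+(c\vee d)\leq(a+c)\vee(b+d)$ genuinely fails in $\oRp$ (e.g.\ at $a=d=1$, $b=c=0$, where the left side is $2$ and the right side is $1$). This failure is what makes the relaxation unavoidable: since $+$ and $\vee$ do not commute, $\oRp$ is not entropic in the strict sense, and only the inequational Definition \ref{def:weaklyentropic} can hold, with exactly one of the two binary operations supplying a submorphism and the other a supermorphism.
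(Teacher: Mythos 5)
Your proof is correct and follows essentially the same route as the paper, which justifies the lemma by exactly this decomposition: the cone operations and the semilattice operation are separately entropic, the scalars and the constant $0$ commute with $\vee$, and the single mixed pair $\{+,\vee\}$ is handled by the one inequality (\ref{Sub}). Your insistence on the orientation bookkeeping is well placed: as Definitions \ref{def:wmorph} and \ref{def:weaklyentropic} are stated, the split under which (\ref{Sub}) delivers both cross-requirements (and under which the relaxed morphisms come out sublinear, as claimed later in the section) is $+\in\Omega^{\leq}$ and $\vee\in\Omega^{\geq}$, i.e.\ the reverse of the assignment announced in the text just before the lemma.
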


According to the general procedure in the previous section, let $X$ be
any dcpo and consider the relaxed 
d-morphisms $\varphi\colon \oRp^X\to\oRp$. In view of our signature
$\Omega = \Omega^\leq\cup \Omega^\geq$, such a relaxed d-morphism is
characterized by the following equalities and inequalities: 
\begin{eqnarray}
\varphi(0)&=&0,\\
\varphi(rf)&=&r\varphi(f),\\ \varphi(f+g)&\leq&\varphi(f)+\varphi(g)\\
\varphi(f\vee g)&\geq&\varphi(f)\vee\varphi(g)
\end{eqnarray} 
for all $f,g\in\oRp^X$ and all
$r\in\oRp$.  Since the first equality is a consequence of the second
for $r=0$, and since the last inequality is
always satisfied for order preserving maps, we can omit these two and we see
that the relaxed morphisms are nothing but the \emph{sublinear
functionals}. We denote by $\cH\cV X$ the set of all these
Scott-continuous sublinear functionals $\varphi$.

Since linear functionals are sublinear, $\cH\cV X$ contains $\cV X$.
Moreover, $\cH X$ is a retract of $\cH\cV X$: just use the retraction of
$\oRp$ onto $\{0,+\infty\}$ mapping all $r>0$ onto $+\infty$ and notice
that $+$ and $\vee$ agree on  $\{0,+\infty\}$.

As a special case of  Corollary \ref{cor:subhomo} and Proposition
\ref{prop:P-Q5} we obtain:  

\begin{proposition}\label{prop:mixedang}
(a) For every dcpo $X$, the set $\cH\cV X$ of Scott-continuous sublinear
functionals $\varphi\colon \oRp^X\to\oRp$ is a d-subcone and a
d-$\vee$-subsemilattice of
$[\oRp^X\to\oRp]$; in fact, $(\cH\cV X,\delta,{}^\dagger)$ is a monad over
the category {\sf DCPO} subordinate to the continuation monad.

(b) There is a canonical one-to-one correspondence between state
transformers $t\colon X\to\cH\cV Y$ and sublinear predicate transformers
$s\colon\oRp^Y\to\oRp^X$.
\end{proposition}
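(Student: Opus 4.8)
The plan is to read Proposition \ref{prop:mixedang} as a direct specialization of the relaxed-entropic machinery to $R=\oRp$, the only genuine preparatory step being to pin down that the relaxed d-morphisms for the present signature are exactly the sublinear functionals. Concretely, I would first make the identification $\cH\cV X = [\oRp^X\lollipop_r\oRp]$ explicit. With $\vee\in\Omega^\leq$, $+\in\Omega^\geq$, and the constant $0$ together with the scalar multiplications $x\mapsto rx$ placed in both $\Omega^\leq$ and $\Omega^\geq$, a relaxed d-morphism $\varphi\colon\oRp^X\to\oRp$ is a Scott-continuous map with $\varphi(0)=0$, $\varphi(rf)=r\varphi(f)$, $\varphi(f+g)\leq\varphi(f)+\varphi(g)$ and $\varphi(f\vee g)\geq\varphi(f)\vee\varphi(g)$. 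As already observed just before the statement, the first equation is the $r=0$ instance of homogeneity and the last inequality holds automatically for any monotone map; what is left is positive homogeneity plus subadditivity, i.e.\ sublinearity. Hence $[\oRp^X\lollipop_r\oRp]$ is precisely $\cH\cV X$.

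For part (a) I would invoke Lemma \ref{lem:relaxed}, which says that $\oRp$ with this signature is relaxed entropic, and then apply Corollary \ref{cor:subhomo} with $R=\oRp$: it yields that $\cH\cV X=[\oRp^X\lollipop_r\oRp]$ is a d-subalgebra of $[\oRp^X\to\oRp]$. For the signature at hand, being a d-subalgebra means being closed under $+$, under every scalar multiplication, under $\vee$, and containing $0$; that is exactly being simultaneously a d-subcone and a d-$\vee$-subsemilattice. To turn this into a monad I would verify the two clauses of Definition \ref{def:subordinate}: the projections $\wh x=\delta_X(x)$ are d-homomorphisms, hence relaxed d-morphisms, so $\delta_X(x)\in\cH\cV X$; and, by Proposition \ref{prop:P-Q5}(a), for every state transformer $t\colon X\to\cH\cV Y$ the Kleisli lifting $t^\dagger$ sends relaxed d-morphisms to relaxed d-morphisms, whence $t^\dagger(\cH\cV X)\subseteq\cH\cV Y$. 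This exhibits $(\cH\cV X,\delta,{}^\dagger)$ as a monad subordinate to the continuation monad.

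For part (b) I would specialize Proposition \ref{prop:P-Q5}(b) to $R=\oRp$: the mutually inverse bijections $P$ and $Q$ of Lemma \ref{lem:PQ} restrict to a one-to-one correspondence between state transformers $t\colon X\to[\oRp^Y\lollipop_r\oRp]=\cH\cV Y$ and relaxed predicate transformers $s\colon\oRp^Y\to\oRp^X$, and by the identification of the first step the latter are exactly the sublinear predicate transformers.

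Since both parts are specializations of already-established general results, there is no substantial obstacle. The one point that needs care is the bookkeeping of the first step---confirming that the boundary conditions $\varphi(0)=0$ and the $\vee$-supermorphism inequality are redundant, so that ``relaxed d-morphism'' collapses to exactly ``sublinear functional''---since this is what guarantees that the analytically defined object $\cH\cV X$ coincides with the object $[\oRp^X\lollipop_r\oRp]$ on which Corollary \ref{cor:subhomo} and Proposition \ref{prop:P-Q5} actually operate.
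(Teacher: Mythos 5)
Your proposal is correct and follows essentially the same route as the paper: the paper likewise identifies $\cH\cV X$ with the relaxed d-morphisms $[\oRp^X\lollipop_r\oRp]$ (noting the redundancy of $\varphi(0)=0$ and of the $\vee$-supermorphism inequality) and then derives the proposition as a direct specialization of Lemma \ref{lem:relaxed}, Corollary \ref{cor:subhomo} and Proposition \ref{prop:P-Q5}. Your explicit verification of the two clauses of Definition \ref{def:subordinate} is a harmless elaboration of what the paper leaves implicit.
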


By Proposition \ref{prop:mixedang}(a), $\cH\cV X$ is a subalgebra of
$[\oRp^X\to \oRp]$ and by Proposition \ref{prop:free6} it contains the d-cone
$\vee$-subsemilattice $\cF_{\oRp} X$ generated by the projections $\wh
x, x\in X$. For dcpos $X$ in general, equality need not hold.
For this we have to require continuity::

\begin{proposition} %{\rm (\cite[]{KP})}
For every continuous dcpo $X$, the subalgebra $\cF_{\oRp} X$ of  $[\oRp^X\to
\oRp]$ generated by the projections agrees with the d-cone
$\vee$-semilattice $\cH\cV X$ of all Scott-continuous sublinear functionals 
$\varphi\colon \oRp^X\to\oRp$.
\end{proposition}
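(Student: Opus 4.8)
The plan is to prove the two inclusions of $\cF_{\oRp}X=\cH\cV X$ separately. The inclusion $\cF_{\oRp}X\subseteq\cH\cV X$ is already given by Proposition \ref{prop:free6} (with the identification $\cH\cV X=[\oRp^X\lollipop_r\oRp]$), so the whole content lies in the reverse inclusion $\cH\cV X\subseteq\cF_{\oRp}X$. The first step is to make the elements of $\cF_{\oRp}X$ explicit. In the totally ordered algebra $\oRp$ both scalar multiplication and addition distribute over $\vee$, that is $r(a\vee b)=ra\vee rb$ and $a+(b\vee c)=(a+b)\vee(a+c)$, and these laws hold pointwise in $\oRp^X$ and in $[\oRp^X\to\oRp]$. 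Hence every term built from the projections $\wh x$ by $+$, scalar multiplication and $\vee$ can be rewritten, by pushing all joins outwards, as a finite join $\bigvee_{i=1}^n\ell_i$ of finite linear combinations $\ell_i=\sum_j r_{ij}\wh{x_{ij}}$; each $\ell_i$ is exactly the functional $f\mapsto\sum_j r_{ij}f(x_{ij})$ attached to a simple valuation. Consequently $\cF_{\oRp}X$ is the d-closure in $[\oRp^X\to\oRp]$ of the set of finite joins of simple valuations.

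The analytic heart of the argument is the representation of a sublinear functional as the supremum of the linear functionals it dominates: for $\varphi\in\cH\cV X$ one has $\varphi=\sup\{\mu\in\cV X\mid\mu\leq\varphi\}$, the supremum being taken pointwise. One direction is trivial; for the other, fix $f_0\in\oRp^X$, define a linear functional on the ray $\{rf_0\mid r\in\oRp\}$ by $rf_0\mapsto r\varphi(f_0)$ (which is dominated by $\varphi$ there by positive homogeneity), and extend it by a Hahn--Banach/sandwich argument to a functional $\mu\leq\varphi$ on all of $\oRp^X$ with $\mu(f_0)=\varphi(f_0)$. The delicate point is that $\mu$ must be genuinely Scott-continuous, i.e. lie in $\cV X$, and not merely monotone and linear; securing this is where the continuity of $X$ (hence of the d-cone $\oRp^X$) is indispensable, and I would invoke the corresponding sandwich theorem for continuous d-cones from the literature (as used in \cite{KP}) rather than reprove it here.

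With the representation in hand the pieces assemble quickly. Let $D$ be the set of finite joins $\bigvee_{i=1}^n\sigma_i$ of simple valuations $\sigma_i$ with $\sigma_i\leq\varphi$. Then $D\subseteq\cF_{\oRp}X$, and $D$ is directed since it is closed under binary join. I claim $\sup D=\varphi$ pointwise. For a fixed $f$, the value $\sup_{d\in D}d(f)$ equals $\sup\{\sigma(f)\mid\sigma\text{ simple},\ \sigma\leq\varphi\}$, which is clearly $\leq\varphi(f)$. For the reverse, take any $\mu\in\cV X$ with $\mu\leq\varphi$; by Lemma \ref{lem:approx} (continuity of $\cV X$) $\mu$ is the directed supremum of the simple valuations $\sigma\ll\mu$, so $\mu(f)=\sup_{\sigma\ll\mu}\sigma(f)$, and each such $\sigma$ satisfies $\sigma\leq\mu\leq\varphi$. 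Hence $\mu(f)\leq\sup\{\sigma(f)\mid\sigma\leq\varphi\}$, and taking the supremum over all such $\mu$ and using the representation gives $\varphi(f)\leq\sup\{\sigma(f)\mid\sigma\leq\varphi\}$. Thus $\sup D=\varphi$. Since $\cF_{\oRp}X$ is a d-subalgebra of $[\oRp^X\to\oRp]$ by Proposition \ref{prop:mixedang}(a) and hence closed under directed suprema, the directed family $D\subseteq\cF_{\oRp}X$ yields $\varphi=\sup D\in\cF_{\oRp}X$, completing the reverse inclusion.

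The main obstacle is clearly the middle step: the existence of enough Scott-continuous linear functionals dominated by $\varphi$ and summing up pointwise to it. The purely algebraic manipulations (the normal form for $\cF_{\oRp}X$) and the final assembly are routine once both this representation and the approximation Lemma \ref{lem:approx} are available; the real work, and the only place where continuity of $X$ enters in an essential way, is the Hahn--Banach-type separation on the continuous d-cone $\oRp^X$.
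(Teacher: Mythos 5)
Your proposal is correct and follows essentially the same route as the paper's own sketch: the containment $\cF_{\oRp}X\subseteq\cH\cV X$ from the relaxed-entropic framework, the Hahn--Banach/sandwich theorem of \cite{KP} to write a sublinear $\varphi$ as the pointwise supremum of the linear functionals below it, Lemma \ref{lem:approx} to approximate each such linear functional by simple valuations, and finite joins plus d-closedness of $\cF_{\oRp}X$ to conclude. The extra normal-form discussion of $\cF_{\oRp}X$ is correct but not needed for the inclusion you actually prove, and the d-closedness of $\cF_{\oRp}X$ under directed suprema is by its definition as a d-subalgebra rather than by Proposition \ref{prop:mixedang}(a); these are cosmetic points only.
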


\begin{proof} (Sketch) This proposition has been proved in
  \cite{KP,TKP}. The proof uses the 
following ingredients: Let $X$ be a continuous dcpo. Then $\oRp^X$ is
a continuous d-cone. We firstly use a
Hahn-Banach type theorem \cite[5.9(1)]{KP} that tells us that every
Scott-continuous sublinear functional $\varphi\colon \oRp^X\to\oRp$ is
pointwise the sup of a family of Scott-continuous linear functionals
$\mu_i\colon  \oRp^X\to\oRp$. In Lemma \ref{lem:approx} we have seen
that every $\mu_i$ is pointwise the supremum of a directed family of
finite linear combinations $\sigma_{ij}=\sum_{k=1}^nr_{kj}\delta_X(x_{kj})$ of
projections (Dirac measures). Thus, $\varphi$ is the supremum of the
$\sigma_{ij}$. Taking finite suprema of the $\sigma_{ij}$ one obtains
a directed family $\bigvee_{j=1}^m\sum_{k=1}^nr_{kj}\delta_X(x_{kj})$ of
finite suprema of simple valuations. This shows that $\varphi$ belongs
to the d-cone $\vee$-subsemilattice generated by the projections.
\end{proof}

In \cite{KP,TKP} one finds another representation of  
$\cH\cV X$, namely as the collection of all nonempty Scott-closed
convex subsets of the d-cone $\cV X$ ordered by inclusion. The equivalence
of the two presentation is given as follows: To every sublinear
functional $\varphi\colon \oRp^X\to\oRp$ we assign the set
$C(\varphi)$ of all linear functionals $\mu\in\cV X$ such that
$\mu\leq\varphi$. Then 
$C(\varphi)$ is a nonempty closed convex subset of $\cV X$. In
\cite[Corollary 6.3]{KP} it is shown that the assignment
$\varphi\mapsto C(\varphi)$ is an order isomorphism from $\cH\cV X$
onto the collection of nonempty Scott-closed convex subsets of $\cV X$
ordered by inclusion.

\subsection{The demonic case}\label{subsec:demonic1}

We now look at $\oRp$ as a d-cone with the additional binary operation
$\wedge$ ($x\wedge y=\min(x,y)$). 
As in the angelic case this algebra is not entropic, since addition and 
meet do not commute. But we have the inequation
\begin{eqnarray}
(x_1+y)\wedge(x_2+z)\geq (x_1\wedge x_2)+(y\wedge z),
\end{eqnarray}
that is, $\wedge$ supercommutes with $+$ (and $+$ subcommutes with
$\wedge$). Thus, we put $+$ into $\Omega^\leq$ and $\wedge$ in $\Omega_\geq$;
the constant $0$ and unary operations $x\to rx$ will be both in
$\Omega^\leq$ and $\Omega^\geq$, and we have:

\begin{lemma}\label{lem:relaxed1}
$\oRp$ with the binary operations $+,\wedge$, the unary operations
$x\mapsto rx, r\in\oRp,$ and the constant $0$ is a relaxed entropic
d-algebra. 
\end{lemma}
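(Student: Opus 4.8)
The plan is to verify Definition \ref{def:weaklyentropic} directly, i.e.\ to check that every operation of $\oRp$ is a relaxed morphism for the splitting of the signature at hand. Concretely, I would take the meet $\wedge$ to lie in $\Omega^\leq$ and addition $+$ to lie in $\Omega^\geq$, while the constant $0$ and every scalar multiplication $x\mapsto rx$ ($r\in\oRp$) lie in both $\Omega^\leq$ and $\Omega^\geq$; this is exactly dual to the angelic Lemma \ref{lem:relaxed}. Under this splitting the conditions to be established are that each operation $\sigma$ subcommutes with $\wedge$ (and with $0$ and the scalars) and supercommutes with $+$ (and with $0$ and the scalars). Since $0$ and the scalars sit in both halves, subcommuting and supercommuting with them simultaneously just means genuine commutation, i.e.\ equality.

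First I would dispose of all the checks that reduce to equalities. Regarded on its own, $(\oRp,+,0,\{r\cdot\})$ is a cone and hence entropic, so $+$ commutes with $+$, with every scalar and with $0$, and the scalars commute with one another and with $0$, all with equality. Likewise $(\oRp,\wedge)$ is a commutative idempotent semigroup, hence entropic by Example \ref{ex:commutes}(c), so $\wedge$ commutes with itself. The remaining cross terms with scalars and the constant are equalities too: positive homogeneity gives $r(x\wedge y)=(rx)\wedge(ry)$ (the meet analogue of the identity $r(x\vee y)=rx\vee ry$ noted in the angelic case), so each scalar commutes with $\wedge$, and the nullary laws $0\wedge 0=0$, $0+0=0$, $r\cdot 0=0$ settle every interaction with $0$. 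Each of these equalities supplies at once the required sub- and supercommutation.

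The single genuinely directional point is the interaction of $+$ and $\wedge$, and here I would invoke the superadditivity inequality for $\min$ displayed just above the lemma, namely $(x_1\wedge x_2)+(y\wedge z)\leq (x_1+y)\wedge(x_2+z)$ (the exact dual of (\ref{Sub})). Written with $+$ as the outer operation on the small side, this is by Definition \ref{def:subcommutes} precisely the statement that $+$ subcommutes with $\wedge$, which is the condition demanded because $\wedge\in\Omega^\leq$; read instead with $\wedge$ as the distinguished operation it is the equivalent statement that $\wedge$ supercommutes with $+$, which is the condition demanded because $+\in\Omega^\geq$. Thus both remaining requirements are furnished by this one elementary inequality.

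Collecting these observations shows that each of $+$, $\wedge$, the scalars $r\cdot$ and the constant $0$ is a relaxed morphism, so $\oRp$ is relaxed entropic. I do not expect any real obstacle here: the whole content is the superadditivity of $\min$, and everything else is the already-recorded entropicity of the cone and of the $\wedge$-semilattice together with positive homogeneity. The only point requiring care is bookkeeping, namely checking commutation as a genuine \emph{equality} for the operations placed in both $\Omega^\leq$ and $\Omega^\geq$, and orienting the single one-sided inequality correctly for the choice $\wedge\in\Omega^\leq$, $+\in\Omega^\geq$.
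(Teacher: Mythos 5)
Your argument is correct, and in substance it is the one the paper intends: the only genuinely one-sided interaction is between $+$ and $\wedge$, supplied by the inequality $(x_1+y)\wedge(x_2+z)\geq(x_1\wedge x_2)+(y\wedge z)$ displayed just before the lemma, while every other pair of operations commutes on the nose because the cone reduct and the $\wedge$-semilattice reduct are each entropic, scalar multiplication distributes over $\wedge$, and the constant $0$ is absorbed by all operations. The one place where you part company with the text is the assignment of the binary operations to the two halves of the signature: the paper says ``we put $+$ into $\Omega^\leq$ and $\wedge$ in $\Omega^\geq$'', whereas you put $\wedge$ into $\Omega^\leq$ and $+$ into $\Omega^\geq$. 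Your choice is the one that actually works. Under Definition \ref{def:weaklyentropic}, every $\sigma\in\Omega$ must subcommute with each $\omega\in\Omega^\leq$; with the paper's stated assignment this would require $\wedge$ to subcommute with $+$, i.e.\ $(x_{11}+x_{12})\wedge(x_{21}+x_{22})\leq(x_{11}\wedge x_{21})+(x_{12}\wedge x_{22})$, which fails (take $x_{11}=x_{22}=0$, $x_{12}=x_{21}=1$: the left side is $1$, the right side is $0$). Moreover, by Definition \ref{def:wmorph} the paper's assignment would make the relaxed morphisms $+$-submorphisms, i.e.\ subadditive, contradicting the superlinear characterization of $\cS\cV X$ that the paper gives immediately after the lemma; your assignment yields exactly the superlinear functionals. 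So the ``Thus, we put\dots'' sentence in the text (and its counterpart before Lemma \ref{lem:relaxed}) should be read with $\Omega^\leq$ and $\Omega^\geq$ interchanged, and your bookkeeping --- including the care you take to demand genuine equality for the operations placed in both halves, and the correct orientation of the single one-sided inequality --- is the right implementation of the lemma.
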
 
 
Similarly as in the angelic case, the relaxed morphisms $\varphi\colon
\oRp^X\to\oRp$ are characterizeb by
\begin{eqnarray}
\varphi(rf)&=&r\varphi(f),\\ 
\varphi(f+g)&\geq&\varphi(f)+\varphi(g),\\ 
\end{eqnarray} 
for all $f,g\in\oRp^X$ and all $r\in\oRp$.  We see
that the relaxed d-morphisms are nothing but the Scott-continuous superlinear
functionals. We denote by $\cS\cV X$ the set of all these superlinear
functionals $\varphi$.
Since linear functionals are superlinear, $\cS\cV X$ contains $\cV(X)$.

As a special case of  Corollary \ref{cor:subhomo} and Proposition
\ref{prop:P-Q5} we obtain:  

\begin{proposition}\label{prop:mixeddem}
(a) For every dcpo $X$, the set $\cS\cV X$ of Scott-continuous superlinear
functionals $\varphi\colon \oRp^X\to\oRp$ is a d-subcone and a
d-$\wedge$-subsemilattice of
$[\oRp^X\to\oRp]$; in fact, $(\cS\cV X,\delta,{}^\dagger)$ is a monad over
the category {\sf DCPO} subordinate to the continuation monad.

(b) There is a canonical one-to-one correspondence between state
transformers $t\colon X\to\cS\cV Y$ and superlinear predicate transformers
$s\colon\oRp^Y\to\oRp^X$.
\end{proposition}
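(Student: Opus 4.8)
The plan is to derive Proposition \ref{prop:mixeddem} as a direct specialization of the general machinery, exactly parallel to the angelic case treated in Proposition \ref{prop:mixedang}, so that essentially no new computation is required. The key point established just before the statement is Lemma \ref{lem:relaxed1}: with $+\in\Omega^\leq$ and $\wedge\in\Omega^\geq$ (and the scalar multiplications $x\mapsto rx$ together with the constant $0$ placed in both $\Omega^\leq$ and $\Omega^\geq$), the d-algebra $\oRp$ is relaxed entropic. Moreover, the text immediately preceding the proposition identifies the relaxed d-morphisms $\varphi\colon\oRp^X\to\oRp$ for this signature as precisely the Scott-continuous superlinear functionals, i.e. those satisfying $\varphi(rf)=r\varphi(f)$ and $\varphi(f+g)\geq\varphi(f)+\varphi(g)$, which by definition constitute $\cS\cV X$.

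For part (a), I would argue as follows. Since $\oRp$ is relaxed entropic by Lemma \ref{lem:relaxed1}, Corollary \ref{cor:subhomo} applies and tells us that $[\oRp^X\lollipop_r\oRp]$ is a d-subalgebra of $[\oRp^X\to\oRp]$. Because the relaxed d-morphisms here are exactly the superlinear functionals, this d-subalgebra is $\cS\cV X$; being a subalgebra for the signature $\Omega$ means it is closed under the pointwise $+$, the pointwise $\wedge$, the scalar multiplications and contains $0$, which is precisely the assertion that $\cS\cV X$ is a d-subcone and a d-$\wedge$-subsemilattice of $[\oRp^X\to\oRp]$. For the monad claim, I would invoke Proposition \ref{prop:P-Q5}(a): since the relaxed entropic structure guarantees that the projections $\delta_X(x)$ are relaxed d-morphisms (they are even d-homomorphisms) and the Kleisli lifting $t^\dagger$ carries relaxed d-morphisms to relaxed d-morphisms, the continuation monad restricts to a monad $(\cS\cV,\delta,{}^\dagger)$ subordinate to it in the sense of Definition \ref{def:subordinate}.

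For part (b), the one-to-one correspondence between state transformers $t\colon X\to\cS\cV Y$ and superlinear predicate transformers $s\colon\oRp^Y\to\oRp^X$ is an immediate instance of Proposition \ref{prop:P-Q5}(b), which under the bijections $P$ and $Q$ of Lemma \ref{lem:PQ} matches state transformers valued in relaxed d-morphisms with predicate transformers that are themselves relaxed d-morphisms. Translating \textquotedblleft relaxed d-morphism\textquotedblright\ into \textquotedblleft superlinear functional\textquotedblright\ on both sides yields exactly the stated bijection $[\oRp^Y\to\oRp]^X\cong[\oRp^Y\lollipop_r\oRp^X]$, i.e. the desired correspondence.

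There is essentially no genuine obstacle, since the work has been front-loaded into the general propositions of Section \ref{sec:laxentropic} and into Lemma \ref{lem:relaxed1}. The only point requiring a moment's care is the bookkeeping of the sign conventions: one must check that with $+\in\Omega^\leq$ and $\wedge\in\Omega^\geq$ the supermorphism condition on $\wedge$ reads $\varphi(f\wedge g)\geq\varphi(f)\wedge\varphi(g)$, which is automatic for monotone maps and hence droppable, while the submorphism condition on $+$ becomes $\varphi(f+g)\leq\varphi(f)+\varphi(g)$; but the demonic superlinearity we want is the \emph{reverse} inequality $\varphi(f+g)\geq\varphi(f)+\varphi(g)$. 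Thus I would double-check that the roles are assigned so that $+\in\Omega^\geq$ (so that $\varphi$ is a $+$-supermorphism) and $\wedge\in\Omega^\leq$, exactly as Lemma \ref{lem:relaxed1} records, so that the resulting relaxed d-morphisms are the superlinear rather than sublinear functionals. Once this sign match is confirmed, both (a) and (b) follow verbatim from the cited results.
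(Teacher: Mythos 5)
Your proof follows exactly the paper's route: the paper derives Proposition \ref{prop:mixeddem} with no further argument, simply as a special case of Corollary \ref{cor:subhomo} and Proposition \ref{prop:P-Q5} once Lemma \ref{lem:relaxed1} is in place, and your sign-check correctly identifies that the placement must be $+\in\Omega^\geq$, $\wedge\in\Omega^\leq$ (the paper's sentence putting $+$ into $\Omega^\leq$ is a slip) so that the relaxed d-morphisms come out as the superlinear functionals. One small correction to your aside: for a monotone $\varphi$ the automatic inequality for $\wedge$ is the \emph{sub}morphism condition $\varphi(f\wedge g)\leq\varphi(f)\wedge\varphi(g)$ (dually to the $\vee$ case), not the supermorphism condition --- which is consistent with, and indeed required by, your corrected placement $\wedge\in\Omega^\leq$.
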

 
By Proposition \ref{prop:mixeddem}(a), $\cS\cV X$ is a d-cone
$\wedge$-subsemilattice of
$[\oRp^X\to \oRp]$ and by Proposition \ref{prop:free6} it contains the d-cone
$\wedge$-subsemilattice $\cF_{\oRp} X$ generated by the projections $\wh
x, x\in X$. For dcpos $X$ in general, equality need not hold.
For this we have to require even more than continuity:

\begin{proposition} %{\rm (\cite[]{KP})}
For every continuous coherent dcpo $X$, the d-cone $\wedge$-subsemilattice
$\cF_{\oRp} X$ of  $[\oRp^X\to \oRp]$ generated by the projections
agrees with the d-cone 
$\wedge$-semilattice $\cS\cV X$ of all Scott-continuous superlinear
functionals  $\varphi\colon \oRp^X\to\oRp$.
\end{proposition}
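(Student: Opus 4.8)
The plan is to follow the angelic case, dualizing every supremum to an infimum, and to isolate clearly the one place where the hypothesis of coherence becomes unavoidable. By Proposition~\ref{prop:free6} the inclusion $\cF_{\oRp}X\subseteq\cS\cV X$ is already available, so I would only have to prove that each Scott-continuous superlinear functional $\varphi\colon\oRp^X\to\oRp$ lies in $\cF_{\oRp}X$. Since $\cF_{\oRp}X$ is the d-closure of the subalgebra generated by the projections, and that subalgebra consists of the finite infima $\bigwedge_{k}\sigma_k$ of simple valuations $\sigma=\sum_{i}r_i\wh{x_i}$ (being closed under $+$, scalar multiples and $\wedge$), the goal is to exhibit $\varphi$ as the supremum of a \emph{directed} family of such finite infima.

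The first ingredient I would invoke is a demonic Hahn--Banach theorem, dual to the one quoted for the angelic case: a positively homogeneous superadditive $\varphi$ is concave on the cone, hence pointwise the infimum of the linear functionals lying above it,
$$\varphi(f)=\min\{\mu(f)\mid\mu\in\cV X,\ \mu\geq\varphi\}.$$
Here the decisive structural fact is that the dominating set $A(\varphi)=\{\mu\in\cV X\mid\mu\geq\varphi\}$ is nonempty, convex and \emph{compact saturated} in $\cV X$, so that the minimum is attained and accessible to compactness arguments. This is exactly the step that forces coherence: for a continuous coherent $X$ the d-cone $\cV X$ is again continuous and coherent, $\cS\cV X$ is order-isomorphic to the compact convex saturated subsets of $\cV X$ ordered by reverse inclusion via $\varphi\mapsto A(\varphi)$, and this representation is the technically demanding content established in \cite{KP,TKP}.

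With $A(\varphi)$ compact I would then combine it with Lemma~\ref{lem:approx}: since $X$ is continuous, the simple valuations are way-below dense in $\cV X$, so each $\mu$ is a directed supremum of simple $\sigma\ll\mu$, and in a continuous domain every set $\{\nu\mid\sigma\ll\nu\}$ is Scott-open. For a given $\varphi$ I would choose, for each $\mu\in A(\varphi)$, a simple $\sigma_\mu\ll\mu$; the Scott-open sets $\{\nu\mid\sigma_\mu\ll\nu\}$ cover the compact set $A(\varphi)$, so finitely many $\sigma_{\mu_1},\dots,\sigma_{\mu_m}$ already cover it. Putting $\tau=\bigwedge_{k=1}^m\sigma_{\mu_k}$, every $\nu\in A(\varphi)$ lies in some $\{\nu'\mid\sigma_{\mu_k}\ll\nu'\}$, whence $\tau\leq\sigma_{\mu_k}\leq\nu$; taking the minimum over $A(\varphi)$ gives $\tau\leq\varphi$. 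Thus each finite-subcover datum produces a finite infimum of simple valuations lying below $\varphi$ and hence inside $\cF_{\oRp}X$, and by letting the chosen $\sigma_\mu$ climb toward the respective $\mu$ one makes $\tau(f)$ approach $\min_k\mu_k(f)\geq\varphi(f)$, so the pointwise supremum of all these $\tau$ is $\varphi$.

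The hard part I expect is not any of the individual estimates but the assembly into a \emph{directed} family, and this is precisely where the demonic case genuinely differs from the angelic one. In the angelic case the generating algebra is a $\vee$-semilattice, so the finite suprema of simple valuations below $\varphi$ are automatically upward directed and their supremum is $\varphi$ with no further effort. Here the generating algebra is a $\wedge$-semilattice, closed under infima but not suprema, so the approximants $\tau$ are not directed for free; directedness has to be manufactured from the approximation data, by refining the open covers of the compact set $A(\varphi)$ coherently so that finer data yield larger $\tau$. Carrying this out cleanly is exactly the work done in \cite{KP,TKP}, and it is the reason coherence---needed to keep $A(\varphi)$ compact so that the covers admit finite subcovers---cannot be dropped. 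Once a monotone net of such $\tau$ with supremum $\varphi$ is in hand, d-closedness of $\cF_{\oRp}X$ yields $\varphi\in\cF_{\oRp}X$, completing the reverse inclusion and hence the equality $\cF_{\oRp}X=\cS\cV X$.
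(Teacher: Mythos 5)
Your proposal follows essentially the same route as the paper: the inclusion $\cF_{\oRp}X\subseteq\cS\cV X$ from Proposition~\ref{prop:free6}, the Hahn--Banach theorem \cite[5.9(2)]{KP} representing $\varphi$ as the pointwise infimum of the nonempty compact saturated convex set of linear functionals dominating it, approximation of these by simple valuations via continuity of $X$, and a deferral of the genuinely hard step---organizing the finite infima of simple valuations below $\varphi$ into a \emph{directed} family with supremum $\varphi$---to \cite{KP,TKP}, which is exactly what the paper does. Your finite-subcover mechanism for manufacturing the approximants and your diagnosis that directedness (and the coherence hypothesis keeping the dominating set compact) is where the demonic case genuinely exceeds the angelic one are both accurate and consistent with the paper's sketch.
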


The coherence property required in the previous proposition means that
the intersection of any two Scott-compact saturated subsets of $X$ is
Scott-compact. 
This proposition follows from results in
  \cite{KP,TKP}. One uses that, for a continuous coherent dcpo $X$, the
  function space $\oRp^X$ is a coherent continuous d-cone. A 
Hahn-Banach type theorem \cite[5.9(2)]{KP} tells us that every
Scott-continuous superlinear functional $\varphi\colon \oRp^X\to\oRp$ is
pointwise the infimum of a family of Scott-continuous linear functionals
$\mu\colon  \oRp^X\to\oRp$. But now the arguments become more
sophisticated than in the angelic case. One has to show that $\varphi$
is pointwise the supremum of a directed family of finite infima of
finite linear combinations of projections.

In \cite{KP,TKP} one finds another representation of  
$\cS\cV X$, namely as the collection of all nonempty Scott-compact saturated
convex subsets of the d-cone $\cV X$ ordered by inclusion. The equivalence
of the two presentation is given as follows: To every Scott-continuous
superlinear functional $\varphi\colon \oRp^X\to\oRp$ we assign the set
$K(\varphi)$ of all linear functionals $\mu\in\cV X$ such that
$\mu\geq\varphi$. Then 
$K(\varphi)$ is a nonempty Scott-compact saturated convex subset of $\cV X$. In
\cite[Corollary 6.6]{KP} it is shown that the assignment
$\varphi\mapsto K(\varphi)$ is an order isomorphism from $\cS\cV X$
onto the collection of nonempty Scott-compact saturated convex subsets
of $\cV X$ ordered by reverse inclusion.

\end{document}